\providecommand{\mydocumentoptionsdefault}{bibliography=totocnumbered,twocolumn=false,twoside=false,titlepage=false,parskip=half-,abstract=false,fleqn,draft=false,paper=a4,fontsize=11pt}
\newcommand\mynoinkscapesvgtopdf{}
\newcommand{\mybiblatexgiveninits}{firstinits}
\newcommand{\mydocumentoptionsdefault}{enabledeprecatedfontcommands,bibtotocnumbered,onecolumn,oneside,notitlepage,halfparskip-,fleqn,final,a4paper,11pt}
\newcommand{\mypaper}{A4}\newcommand{\mydiv}{12}
\newcommand{\MyBibLaTeXStyle}{numeric}
\newcommand{\mydisablewikimode}{}
\newcommand{\mycustommaketitle}{1}
\providecommand{\mydocumentclass}{scrartcl}
\providecommand{\mydocumentoptions}{}
\providecommand{\mydocumentoptionsdefault}{twocolumn=false,twoside=false,titlepage=false,parskip=half-,abstract=false,fleqn,draft=false,paper=a4,fontsize=10pt}
\providecommand{\mypaper}{414.936062pt:586.838153pt}
\providecommand{\mydiv}{30}
\newcolumntype{.}{D{.}{.}{7}}
\newcolumntype{h}{D{.}{.}{3}}
\newcolumntype{d}[1]{D{.}{.}{#1}}
\providecommand{\mycustommaketitle}{0}
\renewcommand\maketitle{\par
  \begingroup%
  \newpage%
  \global\@topnum\z@   
  \null%
  \renewcommand{\thefootnote}{\fnsymbol{footnote}}%
  \begin{center}%
    {\LARGE\color{Mahogany} \@title \par}%
    \vskip 1.5em%
    {\large%
      \lineskip .5em%
      \begin{tabular}[t]{c}%
        \@author%
      \end{tabular}\par}%
    \vskip 1em%
    {\large \@date}%
  \end{center}%
  \par%
  \vskip 1.5em%
  \thispagestyle{plain}\@thanks%
  \setcounter{footnote}{0}
  \endgroup%
}
\providecommand{\MyBibLaTeXStyle}{authoryear-comp}
\providecommand{\mybiblatexgiveninits}{giveninits}
\def\thefieldfirstword#1{%
  \expandafter\expandafter
  \expandafter\firstword
  \expandafter\expandafter
  \expandafter{\csname abx@field@#1\endcsname}}
\def\firstword#1{\firstword@i#1 \@nil}
\def\firstword@i#1 #2\@nil{#1}
\newcommand{\parena}[1]{{\left(#1\right)}}
\newcommand{\parend}[1]{{\Bigl(#1\Bigr)}}
\newcommand{\sqbc}[1]{{\bigl[#1\bigr]}}
\newcommand{\sqbd}[1]{{\Bigl[#1\Bigr]}}
\newcommand{\cura}[1]{{\left\{#1\right\}}}
\newcommand{\curc}[1]{{\bigl\{#1\bigr\}}}
\newcommand{\bra}[1]{{\langle#1\rvert}}
\newcommand{\brac}[1]{{\bigl\langle#1\bigr\rvert}}
\newcommand{\ket}[1]{{\lvert#1\rangle}}
\newcommand{\ketc}[1]{{\bigl\lvert#1\bigr\rangle}}
\newcommand{\braket}[2]{{\langle#1\vert#2\rangle}}
\newcommand{\braketc}[2]{{\bigl\langle#1\big\vert#2\bigr\rangle}}
\newcommand{\skp}[2]{{\langle#1,#2\rangle}}
\newcommand{\skpc}[2]{{\bigl\langle#1,#2\bigr\rangle}}
\newcommand{\ketbra}[2]{{\vert#1\rangle\!\langle#2\vert}}
\newcommand{\abs}[1]{{\lvert#1\rvert}}
\newcommand{\norm}[1]{{\lVert#1\rVert}}
\newcommand{\norma}[1]{{\left\lVert#1\right\rVert}}
\newcommand{\ceil}[1]{{\lceil#1\rceil}}
\newcommand{\mymathbb}[1]{\mathbb{#1}}
\newcommand{\ee}{{\mathrm e}}
\newcommand{\ii}{{\mathrm i}}
\renewcommand\Re{\operatorname{Re}}
\renewcommand\Im{\operatorname{Im}}
\newcommand{\id}{\operatorname{id}}
\newcommand{\idm}{\mathbb{1}}
\newcommand{\tr}{\operatorname{Tr}}
\newcommand{\rk}{\operatorname{rk}}
\DeclareMathOperator{\osr}{OSR}
\newcommand{\rg}{\operatorname{im}}
\newcommand{\R}{{\mymathbb R}}
\newcommand{\C}{{\mymathbb C}}
\newcommand{\tra}{\intercal}
\newenvironment{spmatrix}{%
  \left(%
  \begin{smallmatrix}%
}{%
  \end{smallmatrix}%
  \right)%
}
\newcommand{\executeiffilenewer}[3]{%
\ifnum\pdfstrcmp{\pdffilemoddate{#1}}%
{\pdffilemoddate{#2}}>0%
{\immediate\write18{#3}}\fi%
}
\newcommand{%
\input{-isvgt.pdf_tex}%
}[1]{%
\executeiffilenewer{#1.svg}{#1-isvgt.pdf}%
{/home/user/.config/inkscape/extensions/svgtextext_to_pdflatex.py #1.svg #1-isvgt.pdf%
}%
\input{#1-isvgt.pdf_tex}%
}
\newcommand{\isvgc}[2][]{%
\executeiffilenewer{#2.svg}{#2-isvgt_full.pdf}%
{/home/user/.config/inkscape/extensions/svgtextext_to_pdflatex.py -full-pdf #2.svg #2-isvgt.pdf%
}%
\includegraphics[#1]{#2-isvgt_full.pdf}%
}
\newcommand{%
\input{-isvgt.pdf_tex}%
}[1]{%
\input{#1-isvgt.pdf_tex}%
}
\newcommand{\isvgc}[2][]{%
\includegraphics[#1]{#2-isvgt_full.pdf}%
}
\providecommand{\theoremcounter}{theorem}
\theoremstyle{plain} 
\newtheorem{thm}[\theoremcounter]{Theorem}
\newtheorem{cor}[\theoremcounter]{Corollary}
\newtheorem{lem}[\theoremcounter]{Lemma}
\newtheorem{prop}[\theoremcounter]{Proposition}
\theoremstyle{nonumberplain}
  \newif\ifPreviewTest
  \newif\ifPreviewX
\newcommand{\tocIfNoPreview}{
  \ifPreviewX
  \section*{Contents not available (preview active)}
  \else
  \tableofcontents
  \fi
}
\Crefname{prop}{Proposition}{Propositions}
\Crefname{lem}{Lemma}{Lemmata}
\Crefname{thm}{Theorem}{Theorems}
\Crefname{cor}{Corollary}{Corollaries}
\Crefname{rem}{Remark}{Remarks}
\renewcommand\Affilfont{\itshape\footnotesize}
\renewenvironment{abstract}{}{}
\author[1]{Milan Holz\"apfel\thanks{\texttt{mail@mholzaepfel.de}}$^,$}
\author[1,2]{Marcus Cramer}
\author[3]{Nilanjana Datta\thanks{\texttt{n.datta@damtp.cam.ac.uk}}$^,$}
\author[1]{Martin B. Plenio}
\affil[1]{Institut f\"ur Theoretische Physik and IQST, Albert-Einstein-Allee 11, Universit\"at Ulm, 89069 Ulm, Germany}
\affil[2]{Institut f\"ur Theoretische Physik, Leibniz Universit\"at Hannover, 30167 Hannover, Germany}
\affil[3]{Department of Applied Maths and Theoretical Physics, Centre for Mathematical Sciences, University of Cambridge, \authorcr{}\Affilfont{}Wilberforce Road, Cambridge CB3 0WA, United Kingdom}
\title{Petz recovery versus matrix reconstruction}
\date{September 14, 2017}
\newcommand{\lunf}{\mathrm L}
\newcommand{\runf}{\mathrm R}
\newcommand{\linmap}{\mathcal B}
\newcommand{\linop}{\mathcal B}
\newcommand{\sop}[2]{\linmap(\linop(#1); \linop(#2))}
\newcommand{\issop}[3]{#1\colon \linop(#2) \to \linop(#3)}
\newcommand{\qs}{\mathcal D}
\newcommand{\hilb}{\mathcal H}
\newcommand{\sysand}{,}
\newcommand{\hilbsysand}[1]{#1\sysand}
\newcommand{\andhilbsysand}[1]{\sysand{}#1\sysand}
\newcommand{\relent}{D}
\newcommand{\ldim}{d}
\newcommand{\ns}{n}
\newcommand{\supp}{\operatorname{supp}}
\newcommand{\poly}{\operatorname{poly}}
\newcommand{\mck}{\mathcal M}
\newcommand{\mcm}{\mathcal M}
\newcommand{\mcn}{\mathcal N}
\newcommand{\mct}{\mathcal T}
\newcommand{\mcu}{\mathcal U}
\newcommand{\petzrec}{\mathcal R^{\mathrm P}}
\newcommand{\cptprec}{\mathcal R}
\newcommand{\matrec}{\mathcal R^{\mathrm M}}
\newcommand{\linrec}{\mathcal R}
\newcommand{\recmap}{\mathcal R}
\newcommand{\cR}{\mathcal R}
\newcommand{\cN}{\mathcal N}
\newcommand{\cH}{\mathcal H}
\newcommand{\cD}{\mathcal D}
\newcommand{\cB}{\mathcal B}
\newcommand{\bigo}[1]{= \mathcal O(#1)}
\newcommand{\leftsetp}[1]{X'_{#1}}
\newcommand{\leftsetvalnb}[1]{1 \dots #1} 
\newcommand{\leftsetval}[1]{\{\leftsetvalnb{#1}\}}
\newcommand{\rightsetp}[1]{Y'_{#1}}
\newcommand{\rightsetvalnb}[1]{#1+1 \dots \ns} 
\newcommand{\rightsetval}[1]{\{\rightsetvalnb{#1}\}}
\newcommand{\leftsetname}[1]{X_{#1}}
\newcommand{\leftsetdef}[1]{\leftsetname{#1} = \leftsetval{#1}}
\newcommand{\leftset}[1]{\leftsetname{#1}}
\newcommand{\leftsetnb}[1]{\leftsetname{#1}}
\newcommand{\rightsetdef}[1]{\rightsetval{#1}}
\newcommand{\rightsetnb}[1]{\rightsetvalnb{#1}}
\newcommand{\pinv}{+}
\newcommand{\adjm}{*}
\newcommand{\adjs}{*}
\begin{document}
\setcounter{section}{0}
\maketitle

\begin{abstract}
  The reconstruction of the state of a multipartite quantum mechanical system represents a fundamental task in quantum information science.
  At its most basic, it concerns a state of a bipartite quantum system whose subsystems are subjected to local operations. 
  We compare two different methods for obtaining the original state from the state resulting from the action of these operations.
  The first method involves quantum operations called Petz recovery maps, acting locally on the two subsystems.
  The second method is called matrix (or state) reconstruction and involves local, linear maps which are not necessarily completely positive.
  Moreover, we compare the quantities on which the maps employed in the two methods depend.
  We show that any state which admits Petz recovery also admits state reconstruction.
  However, the latter is successful for a strictly larger set of states.
  We also compare these methods in the context of a finite spin chain.
  Here, the state of a finite spin chain is reconstructed from the reduced states of a few neighbouring spins.
  In this setting, state reconstruction is the same as the MPO (i.e.~matrix product operator) reconstruction proposed by \citeauthor{Baumgratz2013}~\parencite{Baumgratz2013}.
  Finally, we generalize both these methods so that they employ long-range measurements instead of relying solely on short-range correlations embodied in such local reduced states.
  Long-range measurements enable the reconstruction of states which cannot be reconstructed from measurements of local few-body observables alone and hereby we improve existing methods for quantum state tomography of quantum many-body systems. 
\end{abstract}

\clearpage

\tocIfNoPreview

\section{Introduction}
\label{sec:rec-intro}

Consider a bipartite quantum state $\rho_{XY}$ which is transformed to a state $\tau_{X'Y'}$ under the action of local
quantum operations $\cN_X: X \to X'$ and $\cN_Y: Y \to Y'$. These local operations could either correspond to (i) undesirable noise (resulting from unavoidable interactions of the quantum
system $XY$ with its environment) or they could correspond to (ii) local measurements made by an experimenter doing quantum state tomography. We are interested in
determining the conditions under which the state $\tau_{X'Y'}$ can be transformed back to the original state $\rho_{XY}$ with maps which act locally on $X'$ and $Y'$.
In the case (i), these would be the conditions under which the effect of the noise can be reversed, whereas in the case (ii) these would be the conditions under which reconstruction of the original state from the outcome of the experimenter's chosen measurements is possible.

The question whether $\tau_{X'Y'}$ can be transformed back to $\rho_{XY}$ can be answered with different methods.
If the transformation is to be achieved with quantum operations, an answer is provided by the Petz recovery map \parencite{Petz2003,Hayden2004} under a condition on the mutual information of the two states. 
If general linear (not necessary completely positive) maps are allowed in the transformation, one can use a matrix reconstruction method.
This matrix reconstruction method is related to MPO (i.e.~matrix product operator) reconstruction and so-called pseudoskeleton (or CUR) matrix decompositions \parencite{Baumgratz2013,Goreinov1997,Caiafa2010}. 
In either case, the construction of the maps which transform $\tau_{X'Y'}$ into $\rho_{XY}$ does not require complete information of $\rho_{XY}$ if suitable maps $\mcn_{X}$ and $\mcn_{Y}$ are used.
In this case, the transformation can be used for efficient quantum state tomography of $\rho_{XY}$ with less measurements than necessary for standard quantum state tomography.

A fundamental quantity in quantum information theory is the quantum relative entropy $D(\rho\|\sigma)$ between a state $\rho$ and a positive semi-definite operator $\sigma$ (see \cref{sec:rec-intro-petz-recovery} for its definition).
It acts as a parent quantity for other entropic quantities arising in quantum information theory, e.g.~von Neumann entropy, conditional entropy and mutual information.
When $\rho$ and $\sigma$ are both states, $D(\rho\|\sigma)$ also has an operational interpretation as a measure of distinguishability between the two states \parencite{Hiai1991,Ogawa2000}.
One of its most important properties is its monotonicity under the joint action of a quantum operation (say, $\cN$).
This is also called the \emph{data processing inequality (DPI)} and is given by,
\[D(\cN(\rho)\|\cN(\sigma)) \leq D(\rho\|\sigma).\]
The condition under which the above inequality is saturated was obtained by Petz \parencite{Petz1986,Petz1988} and has found important applications in quantum information theory.
Petz proved that equality in the DPI holds if and only if there exists a \emph{recovery map}, given by a quantum operation $\cR$ which reverses the action of $\cN$ on both $\rho$ and $\sigma$, i.e.~ $\cR(\cN(\rho)) = \rho$ and $\cR(\cN(\sigma)) = \sigma$.
Petz also obtained an explicit form of such a recovery map, which is often called the \emph{Petz recovery map}.
Petz's condition on the equality in the DPI immediately yields a necessary and sufficient under which the conditional mutual information  $I(A:C|B)$ of a tripartite state $\rho_{ABC}$ is zero \parencite{Hayden2004}, which in turn is the condition under which strong subadditivity of the von Neumann entropy (arguably the most powerful entropic inequality in quantum information theory) is saturated.
Petz's result, when applied to the problem studied in this paper, implies that the original state $\rho_{XY}$ can be recovered from the transformed state $\tau_{X'Y'}$ if and only if the mutual information $I(X:Y)_{\rho}$ of $\rho_{XY}$ is equal to the mutual information $I(X':Y')_{\tau}$ of $\tau_{X'Y'}$ \parencite{Hayden2004,Piani2008}.
Moreover, a valid recovery map is a tensor product of maps acting locally on $X'$ and $Y'$, each having the structure of a Petz recovery map. A detailed discussion of Petz's result and of the quantities on which the Petz recovery maps depend is given in \cref{sec:rec-intro-petz-recovery}. 

The data processing inequality of the relative entropy implies a DPI for the mutual information, 
\begin{align*}I(X':Y')_\tau \leq I(X:Y)_\rho.\end{align*}
The mutual information quantifies the amount of correlations that exist between the two subsystems of a bipartite a quantum state.
Another measure of such correlations is the operator Schmidt rank \parencite{Nielsen2003,Datta2007} of the state, which we denote as $\osr(X:Y)_{\rho}$ for a bipartite state $\rho_{XY}$ (see \cref{eq:rec-intro-osr} for its definition).

In the following, we discuss the main results of this paper. 
We show that the operator Schmidt rank also satisfies a DPI:
\begin{align*}\osr(X':Y')_\tau \leq \osr(X:Y)_\rho,\end{align*}
where $\tau_{X'Y'}$ is the state obtained from $\rho_{XY}$ via the local quantum operations $\cN_X$ and $\cN_Y$, as discussed above.
The DPI for the operator Schmidt rank is directly implied by the fact that the matrix rank satisfies $\rk(MN) \le \rk(M) \rk(N)$ for any two matrices $M$ and $N$ (see \cref{cor:rec-osr-data-processing} for details). 
We show that $\tau_{X'Y'}$ can be transformed into $\rho_{XY}$ with local maps if and only if the DPI of the operator Schmidt rank is saturated.
Our proof does not guarantee that the maps which transform $\tau_{X'Y'}$ into $\rho_{XY}$ are completely positive but it also does not require that $\rho_{XY}$ and $\tau_{X'Y'}$ are positive semidefinite or that $\mcn_{X}$ and $\mcn_{Y}$ are completely positive. 
The proof proceeds by transforming the reconstruction problem into a reconstruction problem for a general, rectangular matrix.
Here, we provide an extension of the known pseudoskeleton decomposition \parencite{Goreinov1997,Caiafa2010}, which is also known as CUR decomposition and which can reconstruct a low-rank matrix from few of its rows and columns.
Our method reconstructs a matrix $M$ from the matrix products $LM$ and $MR$ if the rank of $M$ equals the rank of $LMR$; $L$, $M$ and $R$ are general rectangular matrices.

We explore the relation between Petz recovery and state/MPO reconstruction for the case of $2$, $4$ and $n$ parties.
State/MPO reconstruction, when compared to Petz recovery, is shown to be possible for a strictly larger set of states but requires more information. 

The state of an $\ns$-partite quantum system, such as $\ns$ spins in a linear chain, can be represented as a matrix product operator (MPO) with MPO bond dimensions given by the operator Schmidt ranks $\osr(1\dots k:k+1\dots\ns)$ (between the sites $1,\ldots, k$ and $k+1, \ldots, n$; \parencite{DelasCuevas2013,Schollwoeck2011}).
If the operator Schmidt ranks are all bounded by a constant $D$, the MPO representation is given in terms of $\sim n D^{2}$ complex numbers, which is much less than the number of entries of the density matrix of the $n$-partite quantum system. 
\textcite{Baumgratz2013} presented a condition under which an MPO representation of the state of an $\ns$-partite quantum system can be reconstructed from the reduced states of few neighbouring systems (\emph{MPO reconstruction}).
We will demonstrate that their work implies, for the case where the local operations $\mcn_{X}$ and $\mcn_{Y}$ are partial traces, that $\tau_{X'Y'}$ can be transformed into $\rho_{XY}$ if the two states have equal operator Schmidt rank.

The ability to reconstruct the state of an $\ns$-partite quantum system from reduced states of $l < \ns$ systems, as provided e.g.\ by MPO reconstruction, is advantageous for quantum state tomography of many-body systems.
Standard quantum state tomography requires the expectation values of a number of observables which grows exponentially with $\ns$.
If the full state can be reconstructed from $l$-body reduced states, then the number of observables grows exponentially with $l$ but only linearly with the number of reduced states.
MPO reconstruction uses the reduced states of blocks of $l$ neighbouring sites on a linear chain.
As the number of such blocks increases linearly with $\ns$, MPO reconstruction enables quantum state tomography with a number of observables which increases only linearly with $\ns$.

We call a method for quantum state tomography \emph{efficient} if it requires only polynomially many (in $\ns$) sufficiently simple observables (more details on permitted observables are given in \cref{sec:rec-spinchain-from-marginals}, \cref{rem:rec-efficient}).
Here we assume that exact expectation values are available. 
For a given method to be useful in practice, it is however necessary that the quantum state can be estimated up to a fixed estimation error using approximate expectation values from measurements on at most polynomially many (in $\ns$) copies of the state. 
In this paper, we discuss only the number of necessary observables but not the number of necessary copies of the state.
Numerical simulations indicate that e.g.\ MPO reconstruction and similar methods are efficient also in the number of necessary copies \parencite{Baumgratz2013,Cramer2010,Baumgratz2013a,Lanyon2016}. 

There are multipartite quantum states (e.g.~states of a spin chain) which admit an efficient matrix product state (MPS) or MPO representation but which cannot be reconstructed from reduced states of 
a few of its parties (e.g.~a few neighbouring sites of the spin chain).
The $\ns$-qubit GHZ state is an example of such a state (\cref{sec:rec-comparison-fourpartite}). However, it has been shown that the GHZ state can be reconstructed from a number of observables linear in 
$\ns$, provided global observables (i.e.\ those which act on the whole system) are allowed \parencite{Cramer2010,Baumgratz2013a}.
The necessary observables are given by simple tensor products \parencite{Baumgratz2013a} or simple tensor products and unitary control of few neighbouring sites \parencite{Cramer2010}.
We generalize MPO reconstruction and a similar technique based on the Petz recovery map \parencite{Poulin2011} to use a certain class of long-range measurements which includes those just mentioned as special cases (\cref{sec:rec-spinchain}).
We represent a long-range measurement as a sequence of local quantum operations followed by the measurement of a local observable.
However, a tensor product of single-party observables, whose expectation value can be obtained by a simple, sequential measurement of the single-party observables, already constitutes an allowed long-range measurement.

The example of the GHZ state shows that long-range measurements enable the recovery or reconstruction of a larger set of states than those obtained by local few-body observables. 
Our reconstruction and recovery methods provide a representation of the reconstructed state in terms of a sequence of local linear maps which is equivalent to an MPO representation.
For methods based on the Petz recovery map, the local linear maps are quantum operations and, because of this, a PMPS (locally purified MPS) representation can be obtained (\parencite{Kliesch2014a}, \cref{sec:rec-appendix-pmps-and-seq-prep}).
A PMPS representation is advantageous because it can be computationally demanding to determine whether a given MPO representation represents a positive semidefinite operator \parencite{Kliesch2014a} whereas a PMPS representation always represents a positive semidefinite operator.
Our work on the reconstruction of spin chain states is partially based on similar ideas developed in the context of tensor train representations \parencite{Oseledets2010a} and there is also related work on Tucker and hierarchical Tucker representations \parencite{Caiafa2010,Ballani2013,Caiafa2015}.

The remainder of the paper is structured as follows: 
\Cref{sec:rec-preliminaries} introduces notation, definitions, MPS/MPO representations and known results on the Petz recovery map.
\Cref{sec:rec-reconstr-bipartite-states} shows how a low-rank matrix reconstruction technique enables bipartite state reconstruction, i.e.\ a transformation of $\tau_{X'Y'}$ into $\rho_{XY}$.
We also prove that approximate matrix reconstruction is possible if a low-rank matrix is perturbed by a small high-rank component (\cref{sec:rec-reconstr-matrix-stability}).
We apply the Petz recovery map to the bipartite setting in \cref{sec:rec-recov-bipartite-states} and investigate the relation between Petz recovery and state reconstruction in \cref{sec:rec-comparison}.
Any state which admits Petz recovery is found to also admit state reconstruction.
In \cref{sec:rec-spinchain}, we discuss reconstruction of spin chain states with Petz recovery maps and state reconstruction.
If reconstruction is performed with local reduced states (\cref{sec:rec-spinchain-from-marginals}), a known application of the Petz recovery map \parencite{Poulin2011} and the known MPO reconstruction technique \parencite{Baumgratz2013} are obtained.
In \cref{sec:rec-spinchain-from-long-range-meas}, we reconstruct spin chain states from recursively defined long-range measurements.
We show that successful recovery of a given spin chain state implies successful reconstruction both for local reduced states and for long-range measurements.
The set of states which can be reconstructed with long-range measurements is seen to be strictly larger than the set of states which can be reconstructed with measurements on local reduced states.
Long-range measurements were used in earlier work on the reconstruction of pure states \parencite{Cramer2010} and we show that our methods can recover or reconstruct these states if the same long-range measurements are used.

\section{Preliminaries
  \label{sec:rec-preliminaries}}

\subsection{Notation and basic definitions
  \label{sec:rec-notation}}

In this paper, all Hilbert spaces $\hilb$ are finite-dimensional.
We use capital letters $A$, $B$, $C$, \dots to denote quantum systems with Hilbert spaces $\hilb_{A}$, $\hilb_{B}$, $\hilb_{C}$..., and set $\ldim_{A} = \dim(\cH_A)$. 
For notational simplicity, we often use $A$ to denote both the system and its associated Hilbert space, when there is no cause for confusion.
If $\ns$ systems are involved, we denote their Hilbert spaces by $\hilb_{1}$, \dots, $\hilb_{\ns}$ and tensor products of the latter by $\hilb_{1\dots \ns} = \hilb_{1} \otimes \dots \otimes \hilb_{\ns}$.

We denote the set of linear maps from $A$ to $B$ by $\linmap(A; B) = \linmap(\hilb_{A}; \hilb_{B})$, and the set of linear operators on $A$ by $\linop(A) \equiv \linmap(A; A)$.
If tensor products are involved, we use the notation $\linmap(AB; CD) \equiv \linmap(A,B; C,D) \equiv  \linmap(A \otimes B; C \otimes D)$.
The trace of a linear operator $F \in \linop(A)$ (or a square matrix $F \in \C^{m \times m}$) is denoted by $\tr (F) $.
A quantum state (or density matrix) of a system $A$ is a positive semi-definite operator $\rho \in \linop(A)$ with unit trace.
Let $\qs(A)$ denote the set of quantum states in $\linop(A)$.
In case of a pure quantum state $\rho = \ketbra\psi\psi$, $\ket\psi \in \hilb_{A}$, we refer to both $\rho = \ketbra\psi\psi$ and $\ket\psi$ as the pure state. For any state $\rho \in \cD(A)$, its von~Neumann entropy is defined as $S(A)_{\rho} = - \tr(\rho_{A} \log(\rho_{A}))$. In this paper, all logarithms are taken to base $2$.

For any $F \in \linop(A)$, let $F^{\adjm}$ denote its Hermitian adjoint, $\supp (F)$ its support, $\rk(F)$ its rank, $\norm{F}$ its operator norm (largest singular value) and $\sigma_\text{min}(F)$ its smallest non-zero singular value.
A linear operator $F \in \linop(A)$ is an observable if it is Hermitian. 
The Hilbert--Schmidt inner product on $\linop(A)$ is denoted by
\begin{align}
  \label{eq:rec-intro-hs-inner-product}
  \skp FG &= \tr(F^{\adjm}G) \quad  \forall \, F, G \in \linmap(A).
\end{align}
The vector space $\linop(A)$ becomes a Hilbert space when equipped with this inner product.

The notation $\sop{A}{B}$ denotes the set of linear maps from $\linop(A)$ to $\linop(B)$.
This includes the set of quantum operations (or superoperators) from $A$ to $B$ which are given by linear, completely positive, trace-preserving (CPTP) linear maps $\mcn \in \sop{A}{B}$.
We use the shorthand notation $\mcn \colon A \to B$ to indicate such a quantum operation. 
Given any linear map $\mcn \in \sop{A}{B}$, its Hermitian adjoint (with respect to the Hilbert--Schmidt inner product) is denoted by $\mcn^{\adjs} \in \sop{B}{A}$, i.e., $\skp{F}{\mcn(G)} = \skp{\mcn^{\adjs}(F)}{G},$ for all $F \in \linmap(B)$, $G \in \linmap(A)$. 

Since we are dealing with finite-dimensional Hilbert spaces, all linear operators and maps are represented by matrices.
Given a matrix $M \in \C^{m \times n}$ (or a linear map $M$), let $M^{\adjm}$ denote its conjugate transpose matrix, $\overline{M}$ denotes its (element-wise) complex conjugate matrix, and $M^{\pinv}$ denotes its Moore--Penrose pseudoinverse. The following four properties of the pseudoinverse also define it uniquely
\parencite{Penrose1955,Rado1956}:
\begin{align}
  \label{eq:rec-intro-pseudoinv-properties}
  M M^{\pinv} M &= M,
  & M^{\pinv} M M^{\pinv} &= M^{\pinv},
  & (M M^{\pinv})^{\adjm} &= M M^{\pinv},
  & (M^{\pinv} M)^{\adjm} &= M^{\pinv} M.
\end{align}
Given a real number $t \ge 0$, we define $M_{t}^{\pinv} = (M_{t})^{\pinv}$ where $M_{t}$ is obtained from $M$ by replacing its singular values which are smaller than or equal to $t$ by zero.

For a system $A$, we choose an operator basis $\{ F^{(A)}_{i} \}_{i=1}^{\ldim_{A}^{2}}$ which is orthonormal in the Hilbert--Schmidt inner product:
\begin{align}
  \label{eq:rec-intro-operatorbasis}
  F^{(A)}_{i} &\in \linop(A),
  & \skp{F^{(A)}_{i}}{F^{(A)}_{j}} &= \delta_{ij},
  & i, j \in \{1, \dots, \ldim_{A}^{2}\}.
\end{align}
Given a basis element $F^{(A)}_{i}$, we denote its dual element (in the Hilbert--Schmidt inner product) by $\tilde F^{(A)}_{i}$:
\begin{align}
  \label{eq:rec-intro-hs-dual}
  \tilde F^{(A)}_{i}
  &\in \linmap(\linop(A); \C)
    \colon \quad
    G \mapsto \tilde F^{(A)}_{i}(G) = \skp{F^{(A)}_{i}}{G}.
\end{align}
The identity map, $\id$, on $\linop(A)$ can then be expressed as
\begin{align}
  \label{eq:rec-intro-op-identity}
  \id
  &= \sum_{i=1}^{\ldim_{A}^{2}} F^{(A)}_{i} \tilde F^{(A)}_{i}.
\end{align}
This is nothing but the resolution of the identity operator for the Hilbert space $\linop(A)$. 

Consider a linear map $\mcm \in \sop{Y}{X}$. 
Since the vector spaces $\linop(XY)$ and $\sop{Y}{X}$ have the same finite dimension, $(\ldim_{X}\ldim_{Y})^{2} = \ldim_{X}^{2} \ldim_{Y}^{2}$, it is possible to define a bijective linear map between the two spaces. 
To do so, we define the components of $\rho \in \linop(XY)$ and $\mcm$ in terms of the operator bases from above:
\begin{align}
  \label{eq:rec-intro-components}
  [\rho]_{ij} &= \skp{F^{(X)}_{i} \otimes F^{(Y)}_{j}}{\rho},
  & [\mcm]_{ij} &= \skp{F^{(X)}_{i}}{\mcm(F^{(Y)}_{j})}.
\end{align}
Given a linear operator $\rho \in \linop(XY)$, we define a linear map $\mcm_{\rho}$ by
\begin{align}
  \label{eq:rec-intro-state-to-map}
  [\mcm_{\rho}]_{ij} &= [\rho]_{ij}, \quad \mcm_{\rho} \in \sop{Y}{X}.
\end{align}
We denote the matrix representation of $\mcm_{\rho}$ in the operator basis chosen above by $M_{\rho}$.
The maps $\rho \mapsto \mcm_{\rho}$ and $\rho \mapsto M_{\rho}$ defined by \cref{eq:rec-intro-state-to-map} are of course bijective.
Note that $\rho$ can be represented by a matrix of size $\ldim_{X}\ldim_{Y} \times \ldim_{X}\ldim_{Y}$ while $\mcm_{\rho}$ can be represented by the matrix $M_{\rho} \in \C^{\ldim_{X}^{2} \times \ldim_{Y}^{2}}$.
The transpose map $\mcm^{\tra}$ is defined in the same operator basis, i.e.~$[\mcm^{\tra}]_{ij} = [\mcm]_{ji}$.

Given a linear operator $\rho \in \linop(XY)$, its operator Schmidt rank is given by
\begin{align}
  \label{eq:rec-intro-osr}
  \osr(X:Y)_{\rho}
  &=
    \min\cura{r \colon \quad \rho = \sum_{k=1}^{r} G'_{k} \otimes G''_{k}, \quad G'_{k} \in \linop(X), G''_{k} \in \linop(Y) }.
\end{align}
The operator Schmidt rank is equal to
\begin{align}
  \label{eq:rec-intro-osr-rank}
  \osr(X:Y)_{\rho} &= \rk(\mcm_{\rho}),
\end{align}
which can be shown as follows:
The matrix representation $M_{\rho}$ of $\mcm_{\rho}$ can be written as
\begin{align}
  M_{\rho} &= G H
  & {\hbox{where}}&& G &\in \C^{d_{X}^{2} \times s},
  & H &\in \C^{s \times d_{Y}^{2}}
  & {\hbox{and}} && s &= \rk(M_{\rho}) = \rk(\mcm_{\rho}).
\end{align}
Since the components of $M_{\rho}$ and $\rho$ are related by $[M_{\rho}]_{ij} = [\rho]_{ij}$, we have
\begin{align}
  \label{eq:rec-intro-osr-vs-rank-rho-from-matrix}
  \rho &= \sum_{ij} [\rho]_{ij} F^{(X)}_{i} \otimes F^{(Y)}_{j}
         = \sum_{k=1}^{s} \parena{ \sum_{i} G_{ik} F^{(X)}_{i} } \otimes \parena{ \sum_{j} H_{kj} F^{(Y)}_{j} },
\end{align}
where $G_{ik}$ and $H_{kj}$ are the components of the matrices $G$ and $H$.
This shows that the operator Schmidt rank cannot exceed $s = \rk(M_{\rho})$.
Now suppose that the operator Schmidt rank was less than that, i.e.\ $r = \osr(X:Y)_{\rho} < s = \rk(M_{\rho})$.
Then, a decomposition of $\rho$ as in \cref{eq:rec-intro-osr} implies that
\begin{align}
  \label{eq:rec-intro-osr-vs-rank-matrix-from-rho}
  [M]_{ij} = [\rho]_{ij} = \sum_{k=1}^{r} G_{ik} H_{kj},
  \quad
  G_{ik} = \skp{F^{(X)}_{i}}{G'_{k}},
  \quad
  H_{kj} = \skp{F^{(Y)}_{j}}{G''_{k}},
\end{align}
i.e.\ that $\rk(M_{\rho}) \le r < \rk(M_{\rho})$.
This contradiction shows that the operator Schmidt rank must equal $\rk(M_{\rho}) = \rk(\mcm_{\rho})$. 
The operator Schmidt rank is also equal to the (smallest possible) bond dimension of a matrix product operator (MPO) representation of the linear operator \parencite{DelasCuevas2013}. This is discussed in \cref{sec:rec-intro-mpo}.

\subsection{MPS, MPO and PMPS representations
  \label{sec:rec-intro-mpo}}

In this section, we introduce frequently-used efficient representations of pure and mixed quantum states on $\ns$ systems.
We call a representation \emph{efficient} if it describes a state with a number of parameters (i.e.~complex numbers) which increases at most polynomially with $\ns$. 
The number of parameters of a particular representation of a state is accordingly given by the total number of  entries of all involved vectors, matrices and tensors.
For example, a pure state $\ket\psi \in \C^{\ldim^{\ns}}$ of $\ns$ quantum systems of dimension $\ldim$ has $\ldim^{\ns}$ parameters and is not an efficient representation.
To discuss whether a given representation is efficient or not, we use the following notation: 
For a function $f(\ns)$, we write $f \bigo{\poly(\ns)}$ or $f(\ns) \bigo{\poly(\ns)}$ if there is a polynomial $g(\ns)$ such that $f(\ns) \le g(\ns)$.
We write $f \bigo{\exp(\ns)}$ if there are constants $c_{1}$, $c_{2}$ such that $f(\ns) \le c_{1} \exp(c_{2} \ns)$. 

First, we introduce the matrix product state (MPS) representation (see e.g.\ \parencite{Schollwoeck2011}), which is also known as tensor train (TT) representation \parencite{Oseledets2011}. 
Consider $\ns$ quantum systems of dimensions $\ldim_{1}$, \dots, $\ldim_{\ns}$ respectively, and let $\curc{\ketc{\phi^{(k)}_{i_{k}}}}_{i_{k}=1}^{\ldim_{k}}$ be an orthonormal basis of the $k$-th system.
An MPS representation of a pure state on $\ns$ systems is given by
\begin{align}
  \label{eq:rec-intro-mps-repr}
  \braketc{\phi^{(1)}_{i_{1}} \dots \phi^{(\ns)}_{i_{\ns}}}{\psi}
  =\;
  &
    G_{1}(i_{1}) G_{2}(i_{2}) \dots G_{\ns}(i_{\ns})
\end{align}
where $D_{0} = D_{\ns} = 1$, $G_{k}(i_{k}) \in \C^{D_{k-1} \times D_{k}}$ and $i_{k} \in \{1, \dots, \ldim_{k}\}$.
The condition $D_{0} = D_{\ns} = 1$ ensures that $G_{1}(i_{1})$ and $G_{\ns}(i_{\ns})$ are row and column vectors, while the $G_{k}(i_{k})$ for $k$ between $1$ and $\ns$ can be matrices. 
The matrix sizes $D_{k}$ are called the \emph{bond dimensions} of the representation.
The maximal local dimension and the maximal bond dimension are indicated by $\ldim = \max_{k} \ldim_{k}$ and $D = \max_{k} D_{k}$.
For $\ldim \bigo{\poly(\ns)}$ and $D \bigo{\poly(\ns)}$, the total number of parameters of the MPS representation is $\ns \ldim D^{2} \bigo{\poly(\ns)}$ and the representation is efficient.
The bond dimension $D_{k}$ of any MPS representation of $\ket\psi$ is larger than or equal to the Schmidt rank of $\ket\psi$ for the bipartition $1 \dots k \vert k+1 \dots \ns$ and a representation with all bond dimensions equal to the corresponding Schmidt ranks can always be determined (see, for example, \parencite{Schollwoeck2011}). 
We discuss the analogous property of the matrix product operator (MPO) representation in more detail.

A matrix product operator (MPO) representation \parencite{Zwolak2004,Verstraete2004} of a mixed state on $\ns$ systems is given by
\begin{align}
  \label{eq:rec-intro-mpo-repr}
  \brac{\phi^{(1)}_{i_{1}} \dots \phi^{(\ns)}_{i_{\ns}}} \,\rho\, \ketc{\phi^{(1)}_{j_{1}} \dots \phi^{(\ns)}_{j_{\ns}}}
  =\;
  &
    G_{1}(i_{1}, j_{1}) G_{2}(i_{2}, j_{2}) \dots G_{\ns}(i_{\ns}, j_{\ns})
\end{align}
where $D_{0} = D_{\ns} = 1$, $G_{k}(i_{k}, j_{k}) \in \C^{D_{k-1} \times D_{k}}$ and $i_{k}, j_{k} \in \{1, \dots, \ldim_{k}\}$.
Alternatively, an MPO representation may be given in terms of operator bases $F^{(k)}_{i_{k}}$: 
\begin{align}
  \label{eq:rec-intro-mpo-repr-opbasis}
  \skpc{F^{(1)}_{i_{1}} \otimes \dots \otimes F^{(\ns)}_{i_{\ns}}}{\rho}
  =\;
  &
    G_{1}(i_{1}) G_{2}(i_{2}) \dots G_{\ns}(i_{\ns})
\end{align}
where $D_{0} = D_{\ns} = 1$, $G_{k}(i_{k}) \in \C^{D_{k-1} \times D_{k}}$ and $i_{k} \in \{1, \dots, \ldim_{k}^{2}\}$.
If the operator basis $F^{(k)}_{(i_{k},j_{k})} = \ketbra{\phi^{(k)}_{j_{k}}}{\phi^{(k)}_{i_{k}}}$ is used, \cref{eq:rec-intro-mpo-repr-opbasis} turns into \cref{eq:rec-intro-mpo-repr}.
As before, we denote the maximal local and bond dimensions by $d = \max_{k} d_{k}$ and $D = \max_{k} D_{k}$.
The number of parameters of an MPO representation is at most $\ns \ldim^{2} D^{2}$ and it is an efficient representation if $\ldim \bigo{\poly(\ns)}$ and $D \bigo{\poly(\ns)}$. 
The operator Schmidt ranks of $\rho$ provide lower bounds to the bond dimensions of any MPO representation of $\rho$ \parencite{DelasCuevas2013}:
\begin{align}
  \label{eq:rec-intro-osr-vs-mpo-bonddim}
  \osr(1\dots k : k + 1 \dots \ns)_{\rho} \le D_{k}
\end{align}
This becomes clear if we rewrite \cref{eq:rec-intro-mpo-repr} as follows:
\begin{align}
  \label{eq:rec-intro-mpo-show-osr}
  \rho
  &=
    \sum_{b_{0}\dots b_{\ns}} H_{1}(b_{0}, b_{1}) \otimes H_{2}(b_{1},b_{2}) \otimes \dots \otimes H_{\ns}(b_{\ns-1}, b_{\ns})
\end{align}
where $H_{k}(b_{k-1},b_{k}) \in \C^{\ldim_{k} \times \ldim_{k}}$ and $[H_{k}(b_{k-1},b_{k})]_{i_{k},j_{k}} = [G_{k}(i_{k},j_{k})]_{b_{k-1},b_{k}}$. 
The sum runs over $b_{k} \in \{1, \dots, D_{k}\}$. 
It can also be shown that a representation with equality in \cref{eq:rec-intro-osr-vs-mpo-bonddim} always exists (see, for example, \parencite{Schollwoeck2011}).
If the linear operator $\rho$ represented by an MPO is a quantum state, it is desirable to ensure that $\rho$ is positive semi-definite.
However, deciding whether a given MPO represents a positive semi-definite operator is an NP-hard problem in the number of parameters of the representation \parencite{Kliesch2014a}, i.e.\ a numerical solution in polynomial (in $n$) time may not be obtained. 
As an alternative, one can use a PMPS (locally purified MPS) representation of the mixed state.
A PMPS representation represents a positive semidefinite linear operator by definition.
PMPS representations are also called \emph{evidently positive} representations and they are introduced in \cref{sec:rec-appendix-pmps-and-seq-prep}. 

\newcommand{\preparemap}{\mathcal W}

Suppose that a quantum state $\rho \in \qs(\hilb_{1\dots \ns})$ was prepared via quantum operations
\\
$\issop{\preparemap_{k}}{\hilb_{k-1}}{\hilb_{k-1,k}}$,
i.e.
\begin{align}
  \label{eq:rec-seq-prep-maintext}
  \rho = \preparemap_{\ns} \preparemap_{\ns-1} \dots \preparemap_{3} \preparemap_{2}(\sigma)
\end{align}
where $\sigma \in \qs(\hilb_{1})$.
Clearly, this is an efficient representation of the quantum state $\rho$ as it is described by at most $\ns \ldim^{6}$ parameters.
It is known that such a representation can be efficiently, i.e.~with at most $\poly(\ns)$ computational time, converted into an MPO representation or a PMPS representation \parencite{Poulin2011,Kliesch2014a}.
\cref{sec:rec-appendix-pmps-and-seq-prep} provides the details of the conversion and of the PMPS (locally purified MPS) representation.
The state recovery and reconstruction techniques presented in \cref{sec:rec-spinchain} provide a representation of the reconstructed state which is similar to \cref{eq:rec-seq-prep-maintext}.
\Cref{lem:rec-spinchain-seq-prep} in the appendix provides PMPS and MPO representations of the recovered state for techniques based on the Petz recovery map and an MPO representation of the reconstructed state for state reconstruction results.

\subsection{The Petz recovery map
  \label{sec:rec-intro-petz-recovery}}

The (quantum) relative entropy, for two quantum states $\rho, \sigma \in \qs(A)$, was defined by \textcite{Umegaki1962} as
\begin{align}
  \label{eq:rec-intro-relent}
  \relent(\rho \| \sigma) = \tr(\rho \log(\rho) - \rho \log(\sigma)),
\end{align}
if $\supp \rho \subseteq \supp \sigma$, and is set equal to $+\infty$ otherwise.
For a bipartite quantum state $\rho \equiv \rho_{AB} \in \qs(AB)$, the mutual information between the subsystems $A$ and $B$ is defined in terms of the von Neumann entropies of $\rho_{AB}$ and its reduced states $\rho_A = \tr_B \rho_{AB}$ and $\rho_A = \tr_B \rho_{AB}$:
\begin{align}
  \label{eq:rec-intro-mi-vn}
  I(A:B)_{\rho} &= S(A)_{\rho} + S(B)_{\rho} - S(AB)_{\rho}.
\end{align}
It can also be expressed in terms of the relative entropy as follows:
\begin{align}
  \label{eq:rec-intro-mi}
  I(A:B)_{\rho} &= D(\rho \| \rho_{A} \otimes \rho_{B}).
\end{align}

The quantum conditional mutual information (QCMI) of a tripartite quantum state $\rho \in \qs(ABC)$ is given by
\begin{align}
  \label{eq:rec-intro-cmi}
  I(A:C|B) := I(A:BC) - I(A:B) = I(AB:C) - I(B:C),
\end{align}
and is expressed in terms of the von~Neumann entropy as follows:
\begin{align}
  \label{eq:rec-intro-cmi-vn}
  I(A:C|B) &= S(AB) + S(BC) - S(ABC) - S(B).
\end{align}

As mentioned in the Introduction, a fundamental property of the quantum relative entropy is its monotonicity under quantum operations. This is given by the data processing inequality (DPI): for quantum states $\rho, \sigma \in \cD(A)$ and a 
quantum operation $\cN$ acting on $\cD(A)$, 
\begin{align}\label{eq:DPI}
  D(\rho \| \sigma) \ge D(\mcn (\rho) \| \mcn (\sigma)).
\end{align}
For the choice $\rho= \rho_{ABC}$, $\sigma=\rho_{AB} \otimes \rho_C$ and $\cN = \tr_A \otimes \id_{BC}$, the DPI~\eqref{eq:DPI} implies that the QCMI of a tripartite state $\rho_{ABC}$ is always non-negative. Using the definition~\eqref{eq:rec-intro-cmi-vn} of the QCMI, we further infer that
\begin{align}\label{SSA}
S(ABC) + S(B) &\leq S(AB) + S(BC),
\end{align}
which is the well-known \emph{strong subadditivity} (SSA) property of the von Neumann entropy.

A necessary and sufficient condition for equality in the DPI~\eqref{eq:DPI} was derived in~\textcite{Hayden2004,Petz2003} and is stated in the following theorem.
\begin{thm}[Petz recovery map]
  \label{thm:rec-petz}
  Let $\rho, \sigma \in \cD(A)$ be quantum states and $\mcn \colon A \to A'$ be a quantum operation. The equality
  \begin{align}
    \label{eq:rec-petz-relent}
    \relent(\mcn (\rho) \| \mcn (\sigma)) = \relent(\rho \| \sigma)
  \end{align}
  holds if and only if there is a linear CPTP map $\cptprec \colon A' \to A$ which satisfies
  \begin{align}
    \label{eq:rec-petz-recov}
    \rho &= \cptprec \mcn (\rho), \quad {\hbox{and}} \quad
      \sigma = \cptprec \mcn (\sigma).
  \end{align}
  If the above condition is satisfied, the so-called Petz recovery map $\petzrec_{\sigma,\mcn}$ satisfies the two equations.
  On the support of $\mcn(\rho)$ and for $\omega \in \linop(A')$, this map is given by
  \begin{align}
    \label{eq:rec-petz-map}
    \petzrec_{\sigma,\mcn}(\omega) &= \sigma^{1/2} \mcn^{\adjs} \left( \mcn(\sigma)^{-1/2} \omega \mcn(\sigma)^{-1/2} \right) \sigma^{1/2}
  \end{align}
  where $\mcn^{\adjs}$ is the Hermitian adjoint of $\mcn$ in the Hilbert--Schmidt inner product.
\end{thm}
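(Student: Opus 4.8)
The plan is to treat the two implications separately, with essentially all of the work in the ``only if'' direction, which is Petz's theorem. The ``if'' direction is immediate: given a quantum operation $\cptprec\colon A'\to A$ with $\rho=\cptprec\mcn(\rho)$ and $\sigma=\cptprec\mcn(\sigma)$, the DPI~\eqref{eq:DPI} applied to $\cptprec$ gives
\begin{align*}
  \relent(\rho\|\sigma)=\relent(\cptprec\mcn(\rho)\|\cptprec\mcn(\sigma))\le\relent(\mcn(\rho)\|\mcn(\sigma))\le\relent(\rho\|\sigma),
\end{align*}
so~\eqref{eq:rec-petz-relent} holds and both inequalities are equalities. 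The same display shows that to obtain the ``only if'' direction \emph{and} the final claim it suffices to prove: if~\eqref{eq:rec-petz-relent} holds, then the explicit map $\petzrec_{\sigma,\mcn}$ of~\eqref{eq:rec-petz-map} is a quantum operation satisfying~\eqref{eq:rec-petz-recov}.

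First I would make two reductions that do not use~\eqref{eq:rec-petz-relent}. Since $\relent(\rho\|\sigma)<\infty$ forces $\supp\rho\subseteq\supp\sigma$, and $\mcn$ sends states supported on $\supp\sigma$ to states supported on $\supp\mcn(\sigma)$, I may replace $A$ by $\supp\sigma$ and assume $\sigma>0$, so that $\mcn(\sigma)^{-1/2}$ is unambiguous on $\supp\mcn(\sigma)$. Complete positivity of $\petzrec_{\sigma,\mcn}$ is then clear from its form --- a congruence by $\mcn(\sigma)^{-1/2}$, followed by the completely positive $\mcn^{\adjs}$, followed by a congruence by $\sigma^{1/2}$ --- and trace preservation on $\supp\mcn(\sigma)$ follows from $\mcn$ being trace preserving, hence $\mcn^{\adjs}$ unital: $\tr[\petzrec_{\sigma,\mcn}(\omega)]=\tr[\mcn(\sigma)\,\mcn(\sigma)^{-1/2}\omega\,\mcn(\sigma)^{-1/2}]=\tr\omega$. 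Running the same computation with $\omega=\mcn(\sigma)$ gives $\petzrec_{\sigma,\mcn}\mcn(\sigma)=\sigma$ \emph{unconditionally}. Hence the whole content reduces to the single identity $\petzrec_{\sigma,\mcn}\mcn(\rho)=\rho$ under the hypothesis~\eqref{eq:rec-petz-relent}.

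To prove that identity I would reduce to a partial trace via Stinespring dilation: write $\mcn(\cdot)=\tr_E(V(\cdot)V^{\adjm})$ with $V\colon A\to A'E$ an isometry, and set $\tilde\rho=V\rho V^{\adjm}$, $\tilde\sigma=V\sigma V^{\adjm}$. As relative entropy is invariant under the isometry $V$ and $\tilde\rho_{A'}=\mcn(\rho)$, $\tilde\sigma_{A'}=\mcn(\sigma)$, the hypothesis~\eqref{eq:rec-petz-relent} is exactly saturation of the DPI for the partial trace $\tr_E$, i.e.\ $\relent(\tilde\rho_{A'}\|\tilde\sigma_{A'})=\relent(\tilde\rho_{A'E}\|\tilde\sigma_{A'E})$. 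A short computation shows the Petz map factorizes, $\petzrec_{\sigma,\mcn}=\big(V^{\adjm}(\cdot)V\big)\circ\petzrec_{\tilde\sigma,\tr_E}$, so it is enough to show that this saturation forces $\petzrec_{\tilde\sigma,\tr_E}(\tilde\rho_{A'})=\tilde\rho_{A'E}$; applying $V^{\adjm}(\cdot)V$ then yields $\petzrec_{\sigma,\mcn}\mcn(\rho)=\rho$. For the partial-trace case I would recover the DPI from joint convexity of the relative entropy by averaging over a unitary $1$-design $\{U_e\}_e$ on $E$: since $|E|^{-2}\sum_e(\id_{A'}\otimes U_e)\tilde\rho_{A'E}(\id_{A'}\otimes U_e)^{\adjm}=\tilde\rho_{A'}\otimes(\id_E/\dim E)$ and likewise for $\tilde\sigma$, unitary invariance, joint convexity and additivity give $\relent(\tilde\rho_{A'}\|\tilde\sigma_{A'})\le\relent(\tilde\rho_{A'E}\|\tilde\sigma_{A'E})$, and saturation means equality in this application of joint convexity.

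The hard part will be the last step: extracting from the equality case of the joint convexity of the relative entropy the operator equation $\tilde\rho_{A'E}=\tilde\sigma_{A'E}^{1/2}\big(\tilde\sigma_{A'}^{-1/2}\tilde\rho_{A'}\tilde\sigma_{A'}^{-1/2}\otimes\id_E\big)\tilde\sigma_{A'E}^{1/2}$, which is precisely $\tilde\rho_{A'E}=\petzrec_{\tilde\sigma,\tr_E}(\tilde\rho_{A'})$. This is where the genuinely nontrivial input enters --- the equality characterization of Lieb's concavity theorem (equivalently, of the operator convexity used to prove joint convexity) --- and it is essentially Petz's original contribution; an alternative is to run his relative-modular-operator argument, showing that saturation forces the cocycle $[D\tilde\rho:D\tilde\sigma]_t$ into the multiplicative domain, which again unwinds to the same identity. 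Everything before this step is bookkeeping; this step is the crux.
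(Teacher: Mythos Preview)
The paper does not prove \cref{thm:rec-petz}; it is stated as a known result with attribution to \textcite{Hayden2004,Petz2003} (and, implicitly, Petz's earlier work \parencite{Petz1986,Petz1988}). So there is no ``paper's own proof'' to compare against.

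That said, your outline is structurally sound and matches one standard route to the result. The ``if'' direction, the reductions (restricting to $\supp\sigma$, checking that $\petzrec_{\sigma,\mcn}$ is CPTP and recovers $\sigma$ unconditionally), and the Stinespring reduction to the partial-trace case are all correct; in particular your factorization $\petzrec_{\sigma,\mcn}=(V^{\adjm}(\cdot)V)\circ\petzrec_{\tilde\sigma,\tr_E}$ is right, via $V^{\adjm}(X\otimes\id_E)V=\mcn^{\adjs}(X)$ and $\tilde\sigma^{1/2}=V\sigma^{1/2}V^{\adjm}$.

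Where you stop is exactly where the content lies, and you are candid about this. Your proposed route through the equality case of joint convexity is workable but is not the cleanest: one then needs the equality characterization of operator convexity of $t\mapsto t\log t$ (or of Lieb's concavity), which is itself a nontrivial theorem. Petz's original argument via relative modular operators is more direct: one shows that a certain contraction $V_{\rho,\sigma}$ between GNS spaces is an isometry on the relevant vector precisely when the DPI is saturated, and this isometry condition unwinds to the operator identity $\mcn(\rho)^{1/2}\mcn(\sigma)^{-1/2}=\mcn(\rho^{1/2}\sigma^{-1/2})$ (in the partial-trace picture, $\tilde\rho_{A'E}^{1/2}\tilde\sigma_{A'E}^{-1/2}=\tilde\rho_{A'}^{1/2}\tilde\sigma_{A'}^{-1/2}\otimes\id_E$), from which $\petzrec_{\tilde\sigma,\tr_E}(\tilde\rho_{A'})=\tilde\rho_{A'E}$ follows by a one-line computation. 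If you want a self-contained proof, I would recommend filling in that version rather than the joint-convexity equality case.
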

Further, \citeauthor{Hayden2004} derived the following necessary and sufficient condition on the structure of tripartite states satisfying equality in the SSA~\eqref{SSA} (see Theorem~6 in \parencite{Hayden2004}).

\begin{thm}
  \label{thm:petz-structure}  
  Let $\rho \equiv \rho_{ABC}\in \cD(ABC)$ be a tripartite quantum state.
  The equality $I(A:BC)_{\rho} = I(A:B)_{\rho}$ (which is equivalent to equality in the SSA~(\ref{SSA})) holds if and only if there is a decomposition of $\cH_B$ into $\hilb_{B_{L_{j}}}$ and $\hilb_{B_{R_{j}}}$ as
\begin{align}
\cH_B &= \bigoplus_{j} \cH_{B_{L_{j}}} \otimes \cH_{B_{R_{j}}},
\end{align}
such that
$\rho$ can be
  written as
  \begin{align*}
    \rho_{ABC} = \bigoplus_{j} p_{j} \rho_{AB_{L_{j}}} \otimes \rho_{B_{R_{j}}C}, \quad
  \end{align*}
  for a probability distribution $\{p_{j}\}$, and sets of quantum states
$\{\rho_{AB_{L_{j}}}\}$ and $\{\rho_{B_{R_{j}}C}\}$.
\end{thm}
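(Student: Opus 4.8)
\emph{Strategy.} By the definition of the conditional mutual information, the hypothesis $I(A:BC)_{\rho}=I(A:B)_{\rho}$ is the same as $I(A:C|B)_{\rho}=0$ (equivalently, equality in SSA~\eqref{SSA}), and by~\eqref{eq:rec-intro-mi} it is precisely the equality case of the data processing inequality~\eqref{eq:DPI} for the partial trace $\mcn=\id_{AB}\otimes\tr_{C}$ applied to the pair $(\rho_{ABC},\,\rho_{A}\otimes\rho_{BC})$: indeed $\mcn$ sends $\rho_{ABC}\mapsto\rho_{AB}$ and $\rho_{A}\otimes\rho_{BC}\mapsto\rho_{A}\otimes\rho_{B}$, so $D(\rho_{ABC}\|\rho_{A}\otimes\rho_{BC})=I(A:BC)_{\rho}$ and $D(\mcn(\rho_{ABC})\|\mcn(\rho_{A}\otimes\rho_{BC}))=I(A:B)_{\rho}$. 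The ``if'' direction needs no recovery map: for $\rho_{ABC}=\bigoplus_{j}p_{j}\rho_{AB_{L_{j}}}\otimes\rho_{B_{R_{j}}C}$ one reads off $\rho_{AB}=\bigoplus_{j}p_{j}\rho_{AB_{L_{j}}}\otimes\rho_{B_{R_{j}}}$, $\rho_{BC}=\bigoplus_{j}p_{j}\rho_{B_{L_{j}}}\otimes\rho_{B_{R_{j}}C}$ and $\rho_{B}=\bigoplus_{j}p_{j}\rho_{B_{L_{j}}}\otimes\rho_{B_{R_{j}}}$, and substituting $S\bigl(\bigoplus_{j}p_{j}\omega_{j}\bigr)=-\sum_{j}p_{j}\log p_{j}+\sum_{j}p_{j}S(\omega_{j})$ into $I(A:C|B)=S(AB)+S(BC)-S(ABC)-S(B)$ makes all terms cancel.

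For the ``only if'' direction, \cref{thm:rec-petz} applied with this $\mcn$ and reference $\sigma=\rho_{A}\otimes\rho_{BC}$ guarantees that the Petz map $\petzrec_{\sigma,\mcn}$ recovers both $\rho_{ABC}$ and $\sigma$. Substituting $\mcn^{\adjs}(X_{AB})=X_{AB}\otimes\id_{C}$ and $\mcn(\sigma)=\rho_{A}\otimes\rho_{B}$ into~\eqref{eq:rec-petz-map} and cancelling the factors of $\rho_{A}$ (which act as the identity on $\supp\rho_{AB}$) leaves the operator identity
\begin{equation}
  \label{eq:plan-recovery}
  \rho_{ABC}=\rho_{BC}^{1/2}\,\rho_{B}^{-1/2}\,\rho_{AB}\,\rho_{B}^{-1/2}\,\rho_{BC}^{1/2},
\end{equation}
where each factor carries the identity on the omitted subsystem(s) and $\rho_{B}^{-1/2}$ is taken on $\supp\rho_{B}$; equivalently $\rho_{ABC}=(\id_{A}\otimes\mct_{B\to BC})(\rho_{AB})$, where $\mct_{B\to BC}\colon\linop(\hilb_{B})\to\linop(\hilb_{BC})$ is the Petz channel of $\tr_{C}$ with respect to $\rho_{BC}$, which in particular satisfies $\mct_{B\to BC}(\rho_{B})=\rho_{BC}$.

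It remains to convert~\eqref{eq:plan-recovery} into the claimed direct-sum/tensor-product form, which I expect to be the main obstacle. Working on $\supp\rho_{B}$, consider the CPTP map $\mathcal{F}=\tr_{C}\circ\mct_{B\to BC}$ on $\linop(\hilb_{B})$: it fixes $\rho_{B}$ faithfully, and applying $\id_{AB}\otimes\tr_{C}$ to~\eqref{eq:plan-recovery} shows $\rho_{AB}$ is fixed by $\id_{A}\otimes\mathcal{F}$. Since the fixed-point set of a CPTP map with a faithful fixed point carries the structure of a finite-dimensional $*$-algebra, one obtains a decomposition $\hilb_{B}=\bigoplus_{j}\hilb_{B_{L_{j}}}\otimes\hilb_{B_{R_{j}}}$ with block forms $\rho_{B}=\bigoplus_{j}p_{j}\rho_{B_{L_{j}}}\otimes\rho_{B_{R_{j}}}$ and $\rho_{AB}=\bigoplus_{j}p_{j}\rho_{AB_{L_{j}}}\otimes\rho_{B_{R_{j}}}$ (the $A$ side being untouched by $\mathcal{F}$), and, from the explicit form of the Petz channel, $\rho_{BC}=\bigoplus_{j}p_{j}\rho_{B_{L_{j}}}\otimes\rho_{B_{R_{j}}C}$ together with the description of $\mct_{B\to BC}$ as ``keep the $B_{L_{j}}$ factor and replace the $B_{R_{j}}$ factor by $\rho_{B_{R_{j}}C}$''. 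Feeding the block form of $\rho_{AB}$ back into $\rho_{ABC}=(\id_{A}\otimes\mct_{B\to BC})(\rho_{AB})$ then yields $\rho_{ABC}=\bigoplus_{j}p_{j}\rho_{AB_{L_{j}}}\otimes\rho_{B_{R_{j}}C}$. The delicate points --- identifying the $*$-algebra and its block decomposition precisely, pinning down the action of $\mct_{B\to BC}$ on the blocks, and checking that $\rho_{AB}$ (not merely $\rho_{B}$) respects the decomposition --- make up the structural analysis carried out in Theorem~6 of \textcite{Hayden2004}. An alternative first step is to upgrade $I(A:C|B)=0$ to the operator identity $\log\rho_{ABC}+\log\rho_{B}=\log\rho_{AB}+\log\rho_{BC}$ and read the blocks off the joint spectral structure, but this relocates rather than removes the difficulty.
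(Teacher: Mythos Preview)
The paper does not give its own proof of this theorem: it is stated as a quotation of Theorem~6 in \textcite{Hayden2004} and used as a black box. Your sketch is therefore not to be compared against anything in the paper, but it does faithfully outline the argument of the cited reference: the ``if'' direction by direct entropy computation, and the ``only if'' direction by reducing $I(A:C|B)=0$ to the Petz recovery identity~\eqref{eq:plan-recovery} and then extracting the block decomposition of $\hilb_{B}$ from an algebraic/fixed-point analysis. Your honest flagging of the ``delicate points'' and the explicit deferral to \textcite{Hayden2004} for the structural step is appropriate --- that step (the Koashi--Imoto-type decomposition of the relevant operator algebra) is genuinely the core of the proof and is not something one can wave through; your fixed-point-algebra phrasing is a legitimate variant of how it is usually packaged.
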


\section{Reconstruction of bipartite states subjected to local operations
  \label{sec:rec-reconstr-bipartite-states}}

Let $\rho_{XY} \in \qs(XY)$ be a bipartite state and let $\tau_{X'Y'} \in \qs(X'Y')$ be a state obtained from $\rho_{XY}$ by the action of local operations:
\begin{align}
\tau_{X'Y'} &= (\mcn_X \otimes \mcn_Y) \rho_{XY},
\end{align}
where $\mcn_X: X \to X'$ and $\mcn_Y: Y \to Y'$ denote quantum operations (or more generally, linear maps).
We are interested in the conditions under which the original state $\rho_{XY}$ can be reconstructed from $\tau_{X'Y'}$ with local maps, i.e.~$\rho_{XY} = (\recmap_{X'} \otimes \recmap_{Y'})(\tau_{X'Y'})$ with reconstruction maps $\recmap_{X'} \colon X' \to X$ and $\recmap_{Y'} \colon Y' \to Y$. 

Our reconstruction scheme is particularly useful for states $\rho$ with low operator Schmidt rank because then a reconstruction of $\rho$ can be achieved with fewer measurements than required for standard quantum state tomography, as discussed in \cref{rem:rec-efficient,rem:rec-mat-longrange-eff} (see also \cref{sec:rec-comparison-fourpartite}).
The operator Schmidt rank of $\rho$ is equal to the rank of the matrix $M_{\rho}$ (\cref{eq:rec-intro-state-to-map,eq:rec-intro-osr-rank}; $M_{\rho}$ has size $d_{X}^{2} \times d_{Y}^{2}$).
Hence, in \cref{sec:rec-reconstr-matrices} we first consider the more general problem of reconstruction of low-rank matrices (which are not necessarily states).
\Cref{sec:rec-reconstr-matrix-stability} discusses the stability of our matrix reconstruction technique and \cref{sec:rec-reconstr-bipartite-states-subsec} shows how it can be used to reconstruct a quantum state.

\subsection{Reconstruction of low-rank matrices
  \label{sec:rec-reconstr-matrices}}

Suppose that we want to obtain a matrix $M \in \C^{m \times n}$ but we only know the entries of the matrix products $LM$ and $MR$ where $L$ and $R$ are $r\times m$ and $n \times s$ complex matrices.
We refer to $LM$, $MR$ and $LMR$ as the $\emph{marginals}$ of the matrix $M$.
\Cref{prop:rec-matrixdec} states that $M$ can indeed be obtained from $LM$ and $MR$ if the condition $\rk(LMR) = \rk(M)$ holds.
This rank condition implies $r, s \ge \rk(M)$.
If the rank of $M$ is much smaller than its maximal value, $\min\{m, n\}$, this provides a way to obtain $M$ from $LM$ and $MR$ which, taken together, have much fewer entries than $M$.
If the matrices $L$ and $R$ are restricted to submatrices of permutation matrices, the matrix products $LM$ and $MR$ comprise selected rows and columns of $M$.
In this case, \cref{prop:rec-matrixdec} provides a reconstruction of a low-rank matrix $M$ from few rows and columns (cf.\ \parencite{Goreinov1997}).

\begin{prop}
  \label{prop:rec-matrixdec}
  Let $L \in \C^{r \times m}$, $M \in \C^{m \times n}$ and $R \in \C^{n \times s}$ be matrices.
  Then
  \begin{align}
    \rk(LMR) = \rk(M) \quad\Leftrightarrow\quad \exists\, X \in \C^{s \times r}\colon M = MR \, X \, LM.
    \label{eq:prop-rec-matrixdec}
  \end{align}
  If the condition is satisfied, $M = MR \, X \, LM$ holds for any matrix $X$ with $CXC = C$, $C = LMR$.
  The Moore--Penrose pseudoinverse $X = (LMR)^{\pinv}$ has the required property $CXC = C$.

  Furthermore, $\rk(LM) = \rk(M)$ implies $\rk(LMR) = \rk(MR)$.
\end{prop}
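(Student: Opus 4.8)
The plan is to deduce all three assertions by tracking column and row spaces, with the Moore--Penrose pseudoinverse entering only as one concrete choice of the matrix $X$. First I would record what the rank hypothesis provides. Since $LMR = (LM)R$ and right multiplication does not increase rank, $\rk(LMR) \le \rk(LM) \le \rk(M)$, and symmetrically $\rk(LMR) \le \rk(MR) \le \rk(M)$; hence $\rk(LMR) = \rk(M)$ forces $\rk(LM) = \rk(MR) = \rk(M)$. From $\operatorname{col}(MR) \subseteq \operatorname{col}(M)$ together with equality of dimensions I obtain $\operatorname{col}(MR) = \operatorname{col}(M)$, and dually $\operatorname{row}(LM) = \operatorname{row}(M)$. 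Moreover $\rk(LM) = \rk(M)$ says precisely that $L$ is injective on $\operatorname{col}(M)$, while $\rk(MR) = \rk(M)$, applied to the transposed product $R^{\tra}M^{\tra}$, says that $R^{\tra}$ is injective on $\operatorname{col}(M^{\tra})$.

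For the forward implication, fix any $X$ with $CXC = C$ where $C = LMR$; the pseudoinverse $X = C^{\pinv}$ qualifies by the first identity in \cref{eq:rec-intro-pseudoinv-properties}. Set $N = MR\,X\,LM$. Multiplying on the left by $L$ and on the right by $R$ gives $LNR = (LMR)\,X\,(LMR) = CXC = C = LMR$, so $L(N-M)R = 0$. The columns of $N$ lie in $\operatorname{col}(MR) = \operatorname{col}(M)$ and its rows lie in $\operatorname{row}(LM) = \operatorname{row}(M)$, hence the same holds for $N - M$. Applying injectivity of $L$ on $\operatorname{col}(M)$ to the columns of $(N-M)R$ yields $(N-M)R = 0$, and then applying injectivity of $R^{\tra}$ on $\operatorname{col}(M^{\tra})$ to the rows of $N - M$ yields $N = M$. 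This proves $M = MR\,X\,LM$; since nothing beyond $CXC = C$ was used, it simultaneously establishes the claim for an arbitrary such $X$ and, in particular, for $X = (LMR)^{\pinv}$.

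For the converse, suppose $M = MR\,X\,LM$ for some $X$. Taking column spaces gives $\operatorname{col}(M) \subseteq \operatorname{col}(MR) \subseteq \operatorname{col}(M)$, so they are equal; taking row spaces gives $\operatorname{row}(LM) = \operatorname{row}(M)$, i.e.~$\rk(LM) = \rk(M)$. Here I would prove the final ``furthermore'' assertion separately and reuse it: if $\rk(LM) = \rk(M)$ then $L$ is injective on $\operatorname{col}(M)$, hence on the subspace $\operatorname{col}(MR)$, so $\rk(LMR) = \dim L(\operatorname{col}(MR)) = \dim \operatorname{col}(MR) = \rk(MR)$. Combining this with $\rk(MR) = \rk(M)$ from the previous sentence gives $\rk(LMR) = \rk(M)$, which closes the equivalence. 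I do not expect a genuine obstacle; the only points demanding care are the two ``dual'' injectivity statements, obtained by transposing the relevant rank equalities, and making sure the forward argument uses no property of $X$ beyond $CXC = C$.
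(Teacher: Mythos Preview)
Your proof is correct. The converse direction and the ``furthermore'' clause are handled essentially as in the paper, but your forward implication follows a genuinely different line: the paper builds an explicit basis $w_1,\dots,w_q$ of $\rg(M)$ out of a basis of $\rg(LMR)$ and verifies that $MR\,X\,L$ fixes each $w_i$, hence acts as the identity on $\rg(M)$; you instead set $N = MR\,X\,LM$, compute $L(N-M)R = CXC - C = 0$ directly, and then peel off $L$ and $R$ using the two injectivity statements. Your argument is a bit more symmetric in $L$ and $R$ and avoids any explicit basis bookkeeping. For the ``furthermore'' clause the paper actually reuses the already-proven $\Rightarrow$ direction with $R=\idm$ to write $M = M(LM)^{\pinv}LM$ and then right-multiplies by $R$; your one-line injectivity argument ($L$ restricted to $\operatorname{col}(MR)$ is injective) is more self-contained and does not depend on the forward direction.
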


\begin{proof}
  ``$\Rightarrow$'' of \cref{eq:prop-rec-matrixdec}: Assume that $\rk(LMR) = \rk(M)$ holds.
  The property $C X C = C$, $C = LMR$ implies that $LMRX u = u$ holds for all $u \in \rg(LMR)$.
  Let $q = \rk(M) = \rk(LMR)$. 
  Let $u_{i}, \ldots, u_{q}$ be a basis of $\rg(LMR)$ and set $v_{i} = X u_{i}$, $w_{i} = MR v_{i}$.
  The $v_{i}$ are linearly independent because $LMR v_{i} = LMR X u_{i} = u_{i}$. The $w_{i}$ are linearly independent because $L w_{i} = LMR v_{i} = u_{i}$. The $w_{i}$ are a linearly independent sequence of length $q = \rk(M)$ and they satisfy $w_{i} \in \rg(M)$, i.e.~they are a basis of $\rg(M)$. Now observe
  \begin{align}
    MR\, X\, L w_{i} = MR\, X u_{i}
    = MR\, v_{i} = w_{i}.
  \end{align}
  As a consequence, $MR\, X\, L$ maps any vector from $\rg(M)$ to itself. Accordingly, $(MR \, X \, L)M = M$ holds.

  $\rk(LM) = \rk(M)$ implies $\rk(LMR) = \rk(MR)$:
  The equality $\rk(LM) = \rk(M)$ implies $M = M (LM)^{\pinv} LM$ (use the ``$\Rightarrow$'' direction of \cref{eq:prop-rec-matrixdec} for $R = \idm$).
  As a consequence, $MR = M (LM)^{\pinv} LMR$ and $\rk(MR) \le \rk(LMR)$ hold.
  The converse inequality $\rk(LMR) \le \rk(MR)$ always holds and we arrive at $\rk(LMR) = \rk(MR)$.
  
  ``$\Leftarrow$'' of \cref{eq:prop-rec-matrixdec}: Assume that $M = MR\, X\, LM$ holds for some matrix $X$.
  The equality $M = MR\, X \, LM$ implies $\rk(M) \le \rk(MR)$ and $\rk(M) \le \rk(LM)$.
  The converse inequalities $\rk(MR) \le \rk(M)$ and $\rk(ML) \le \rk(M)$ always hold.
  As a consequence, we have $\rk(LM) = \rk(M)$ and $\rk(MR) = \rk(M)$.
  Above, we saw that the former equality implies $\rk(LMR) = \rk(MR)$ which, together with the latter equality $\rk(MR) = \rk(M)$, proves the theorem.
\end{proof}

\begin{rem}
A violation of the rank condition $\rk(LMR) = \rk(M)$ does not in general imply that there is no method to obtain $M$ from $LM$ and $MR$. As a trivial example, consider $L = \idm$ and $R = 0$. Then, the rank condition is violated for all $M \ne 0$, but $M$ is obtained trivially from $LM = M$.
\end{rem}

\begin{rem}[Related work]
\label{rem:rec-matrixdec}
\Cref{prop:rec-matrixdec} states that $M$ can be obtained from $LM$ and $MR$ if $\rk(LMR) = \rk(M)$ holds.
Special cases of \cref{prop:rec-matrixdec} have appeared before in several places.
If $r = s = \rk(M)$ and $L$ and $R$ select exactly $r = \rk(M)$ rows and columns of $M$, the decomposition $M = MR (LMR)^{-1} LM$ is known as skeleton decomposition of $M$ \parencite{Goreinov1997}.
Decompositions of the form $M = MR \, X\,  LM$ where $L$ and $R$ select rows and columns of $M$ are known as pseudoskeleton/CUR decomposition of $M$ and it has been recognized that the truncated Moore--Penrose pseudoinverse $X = (LMR)_{\tau}^{\pinv}$ may provide a good approximation if $r = s < \rk(M)$ and suitable rows, columns and threshold $\tau$ are chosen \parencite{Goreinov1997}; we come back to the case of approximately low rank in \cref{sec:rec-reconstr-matrix-stability}.
The case $r = s = \rk(M)$, $X = (LMR)^{\pinv}$ is contained in the results on tensor decompositions by \textcite{Caiafa2015}.
This matrix decomposition with $X = (LMR)^{\pinv}$, restricted $L$ and $R$ but general $r, s \ge \rk(M)$ forms the basis of MPO reconstruction \parencite{Baumgratz2013} which is discussed in \cref{sec:rec-spinchain}. 
\end{rem}

\subsection{Stability of the reconstruction under perturbation
  \label{sec:rec-reconstr-matrix-stability}}

Suppose that we have a matrix $S$ which satisfies the rank condition
\begin{align}
  \rk(LSR) = \rk(S)
\end{align}
for given matrices $L$ and $R$. We want to reconstruct the perturbed matrix
\begin{align}
  M = S + E, \quad \epsilon = \norm{E}.
\end{align}
and $\epsilon = \norm{E}$ is the operator norm of the perturbation.
In \cref{thm:rec-matrixdec-stability} we provide a reconstruction $\check M_{\tau}$ and show that it is close to $M$ if the operator norm $\epsilon$ of the perturbation $E$ is small enough.
A bound on the distance in operator norm between the reconstruction $\check M_{\tau}$ and $M$ is provided by
\begin{align}
  \norm{\check M_{\tau} - M} \le \norm{\check M_{\tau} - S} + \norm{S - M} \le \norm{\check M_{\tau} - S} + \epsilon
\end{align}
and \cref{thm:rec-matrixdec-stability} provides a bound on $\norm{\check M_{\tau} - S}$.

Recall that given a matrix $M$, we define $M_{\tau}^{\pinv} = (M_{\tau})^{\pinv}$ and $M_{\tau}$ is given by $M$ with singular values smaller or equal to $\tau$ replaced by zero.

\begin{thm}
  \label{thm:rec-matrixdec-stability}

  Let $M = S + E$, $\rk(S) = \rk(LSR)$, $\eta = \norm{L} \norm{S} \norm{R}$, $\gamma = \sigma_\text{min}(LSR) / \eta$, $\epsilon = \norm{E} / \norm{S}$. Let $\gamma > 2\epsilon$ and $\epsilon \le \tau < \gamma - \epsilon$. Then
  \begin{align*}
    \norm{ \check M_{\tau}  - S }
    &\le
    \frac{\norm{S}}{\gamma - \epsilon}
    \left(
    \frac{4\epsilon}{\gamma} + 2\epsilon + \epsilon^{2}
    \right),
    \le
    \frac{7\epsilon \norm{S}}{\gamma(\gamma - \epsilon)},
    &
    \check M_{\tau} &= MR (LMR)_{\eta\tau}^{\pinv} LM.
  \end{align*}

\end{thm}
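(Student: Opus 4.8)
The plan is to compare the proposed estimator $\check M_{\tau} = MR\,D^{\pinv}\,LM$ — where I write $D := (LMR)_{\eta\tau}$ for the truncated matrix (so that $(LMR)^{\pinv}_{\eta\tau} = D^{\pinv}$ by the paper's convention) and $C := LSR$ — against the \emph{exact} reconstruction of the unperturbed matrix $S$ supplied by \cref{prop:rec-matrixdec}. Since by hypothesis $\rk(S) = \rk(LSR) = \rk(C)$, that proposition, applied with $X = C^{\pinv} = (LSR)^{\pinv}$, yields $S = SR\,C^{\pinv}\,LS$. Everything then reduces to estimating $\norm{MR\,D^{\pinv}\,LM - SR\,C^{\pinv}\,LS}$, and the whole proof is a perturbative expansion of this difference around $S$ and $C$.

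The first step — and, I expect, the crucial one — is to understand what truncation at level $\eta\tau$ does. Writing $LMR = C + N$ with $N = LER$, one has $\norm{N} \le \norm{L}\,\norm{E}\,\norm{R} = \epsilon\eta$. All nonzero singular values of $C = LSR$ are at least $\sigma_\text{min}(LSR) = \gamma\eta$, so by Weyl's inequality the singular values of $LMR$ originating from the $\rk(C)$ nonzero singular values of $C$ are $\ge \eta(\gamma - \epsilon)$, while the remaining ones are $\le \epsilon\eta$. The hypotheses $\epsilon \le \tau < \gamma - \epsilon$ are exactly what makes $\eta\tau$ fall strictly between these two groups, so the truncation retains precisely the first group: this gives $\rk(D) = \rk(C) = \rk(S)$, the bound $\norm{D^{\pinv}} \le 1/\bigl(\eta(\gamma-\epsilon)\bigr)$ from the smallest retained singular value, and $\norm{D - C} \le \norm{D - LMR} + \norm{N} \le 2\epsilon\eta$.

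With this in hand I would telescope, $\check M_{\tau} - S = \bigl(MR\,D^{\pinv}\,LM - SR\,D^{\pinv}\,LS\bigr) + SR\,(D^{\pinv} - C^{\pinv})\,LS$. Substituting $M = S + E$ into the first bracket leaves the three cross terms $ER\,D^{\pinv}\,LS$, $SR\,D^{\pinv}\,LE$ and $ER\,D^{\pinv}\,LE$; bounding each submultiplicatively using $\norm{L}\,\norm{S}\,\norm{R} = \eta$, $\norm{E} = \epsilon\norm{S}$ and the bound on $\norm{D^{\pinv}}$ above gives $\tfrac{\norm{S}}{\gamma-\epsilon}(2\epsilon + \epsilon^{2})$ for the first bracket. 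For the second bracket I use $\norm{C^{\pinv}} = 1/\sigma_\text{min}(LSR) = 1/(\gamma\eta)$ together with a standard perturbation estimate for the Moore–Penrose pseudoinverse; since $\rk(D) = \rk(C)$ — precisely why the truncation step was needed — this estimate takes the equal-rank form $\norm{D^{\pinv} - C^{\pinv}} \le c\,\norm{D^{\pinv}}\,\norm{C^{\pinv}}\,\norm{D - C}$ with $c \le 2$, whence $\norm{D^{\pinv} - C^{\pinv}} \le \tfrac{4\epsilon}{\eta\gamma(\gamma-\epsilon)}$ and $\norm{SR\,(D^{\pinv} - C^{\pinv})\,LS} \le \eta\norm{S}\cdot\tfrac{4\epsilon}{\eta\gamma(\gamma-\epsilon)} = \tfrac{4\epsilon\norm{S}}{\gamma(\gamma-\epsilon)}$. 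Summing the two contributions produces the first displayed bound. The cruder bound $\tfrac{7\epsilon\norm{S}}{\gamma(\gamma-\epsilon)}$ then follows because $\gamma \le 1$ (as $\sigma_\text{min}(LSR) \le \norm{LSR} \le \eta$) and $\epsilon < \gamma/2$, which together give $2\epsilon + \epsilon^{2} \le 3\epsilon/\gamma$.

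The main obstacle is thus the pseudoinverse perturbation estimate: it degrades catastrophically when the rank changes, so one must first certify via Step~1 that the rank is preserved, and the equal-rank constant must be small enough ($c = \sqrt 2$ in the classical Wedin-type bound, comfortably below the $2$ used above) for the coefficient $4/\gamma$ in the statement to come out. If a self-contained argument is preferred to citing this bound, one can derive it from $\sin\Theta$-type estimates on the left and right singular subspaces of $C$ and $LMR$, whose relevant gap is $\ge \eta(\gamma-\epsilon)$ by the same Weyl argument; everything else is routine bookkeeping with operator norms.
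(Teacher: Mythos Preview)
Your proposal is correct and follows essentially the same route as the paper: substitute $S = SR\,(LSR)^{\pinv}\,LS$ from \cref{prop:rec-matrixdec}, expand $M = S + E$, and control the pseudoinverse perturbation via Weyl's inequality (to pin down the rank and the smallest retained singular value of the truncation) together with the Wedin-type bound $\norm{D^{\pinv} - C^{\pinv}} \le 2\,\norm{D^{\pinv}}\,\norm{C^{\pinv}}\,\norm{D - C}$ in the equal-rank case. The only cosmetic differences are that the paper first normalises to $\norm{L} = \norm{S} = \norm{R} = 1$ (deferring the general case to a scaling argument in an appendix) whereas you track $\eta$ explicitly, and that the paper packages the Weyl/Wedin step as a separate lemma rather than inlining it.
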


\begin{proof}
  We prove the proposition for $\norm{L} = \norm{S} = \norm{R} = 1$ (without loss of generality as explained in \cref{sec:rec-appendix-normalization}). 
  We have $M = S + E$, $\epsilon = \norm{E}$, $\gamma = \sigma_\text{min}(LSR)$ and $LMR = LSR + LER$ with $\norm{LER} \le \epsilon$.
  We insert $S = SR (LSR)^{\pinv} LS$ and use \cref{lem:rec-pi-perturbation-using-wedin} (provided at the end of this subsection):
  \begin{align}
  &\norm{MR (LMR)_{\tau}^{\pinv} LM - S}
  \\ \le \, 
  &
    \norm{(LMR)_{\tau}^{\pinv} - (LSR)^{\pinv}}
    + 2 \norm{(LMR)_{\tau}^{\pinv}} \epsilon
    + \norm{(LMR)_{\tau}^{\pinv}} \epsilon^{2}
  \\ \le \,
  &
    \frac{4\epsilon}{\gamma(\gamma - \epsilon)} 
    + \frac{2\epsilon}{\gamma - \epsilon}
    + \frac{\epsilon^{2}}{\gamma - \epsilon}
    \le
    \frac{7 \epsilon}{\gamma(\gamma - \epsilon)}.
  \end{align}
Note that by premise, we have $\epsilon < \gamma < 1$.
As a consequence, $1 \le \frac1\gamma$ and $\epsilon \le 1$ (which were used in the last equation) hold.
This proves the theorem.
\end{proof}

For the interpretation of the theorem, it is convenient to use the case with $\norm{L} = \norm{S} = \norm{R} = 1$ and $\eta = 1$.
\Cref{thm:rec-matrixdec-stability} shows that the reconstruction $\check M_{\tau}$ reconstructs the low-rank component $S$ of $M = S + E$ up to a small error if the smallest singular value $\gamma$ of the low-rank component $LSR$ is much larger than the norm $\epsilon$ of the noise component. In addition, the threshold $\tau$ must be chosen larger than the noise norm $\epsilon$ but smaller than $\gamma - \epsilon$.
\Cref{sec:rec-appendix-stability-optimal} discusses examples which show that the bound from \cref{thm:rec-matrixdec-stability} is optimal up to constants and that the reconstruction error can diverge as $\epsilon$ approaches zero if small singular values in $LMR$ are not truncated.

Choosing a suitable threshold $\tau$ is equivalent to estimating the rank of the low rank contribution $S$. If the rank and support of $S$ are known, the measurements $L$ and $R$ can be chosen such that $LSR$ becomes invertible. For this special case, an upper bound on the reconstruction error has been given by \textcite{Caiafa2015}. Their bound also includes constants which depend on $LSR$ and may diverge as $\gamma$ approaches zero. In \cref{sec:rec-appendix-caiafa-based-bound}, we generalize their approach to our more general setting and obtain a bound which is similar to \cref{thm:rec-matrixdec-stability}.

The following Lemma was used in the proof of \cref{thm:rec-matrixdec-stability}:
\begin{lem}
  \label{lem:rec-pi-perturbation-using-wedin}
  Let $A, B, F \in \C^{m \times n}$.
  Let $\gamma = \sigma_\text{min}(A)$, $B = A + F$ with $\norm{F} \le \epsilon$.   
Let $\gamma > 2\epsilon$ and choose $\tau$ such that $\epsilon \le \tau < \gamma - \epsilon$. 
  Then $\sigma_\text{min}(B_{\tau}) \ge \gamma - \epsilon$, $\norm{B_{\tau}^{\pinv}} \le 1 / (\gamma - \epsilon)$, $B_{\tau} = B_{\epsilon}$ and $\norm{B - B_{\tau}} \le \epsilon$. In addition,
  \begin{align}
    \norm{B_{\tau}^{\pinv} - A^{\pinv}} \le
    \frac{4\epsilon}{\gamma(\gamma - \epsilon)}.
    \label{eq:rec-pi-perturbation-using-wedin}
  \end{align}
\end{lem}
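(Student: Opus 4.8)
The plan is to obtain the four ``truncation'' statements --- $\sigma_\text{min}(B_\tau)\ge\gamma-\epsilon$, $\norm{B_\tau^\pinv}\le 1/(\gamma-\epsilon)$, $B_\tau=B_\epsilon$ and $\norm{B-B_\tau}\le\epsilon$ --- directly from Weyl's perturbation inequality for singular values, and then to derive the pseudoinverse bound \cref{eq:rec-pi-perturbation-using-wedin} from Wedin's perturbation theorem for the Moore--Penrose pseudoinverse under a rank-preserving perturbation.

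\textbf{Singular value bookkeeping.} Let $q=\rk(A)$ and order the singular values of each matrix decreasingly, so that $\sigma_q(A)=\gamma>0$ while $\sigma_{q+1}(A)=0$. Weyl's inequality gives $|\sigma_k(B)-\sigma_k(A)|\le\norm{B-A}=\norm{F}\le\epsilon$ for every index $k$. Since $\gamma>2\epsilon$ we have $\gamma-\epsilon>\epsilon$, hence $\sigma_k(B)\ge\gamma-\epsilon>\tau$ for $k\le q$ while $\sigma_k(B)\le\epsilon\le\tau$ for $k>q$. Therefore replacing in $B$ every singular value that is $\le\tau$ by zero, or every singular value that is $\le\epsilon$ by zero, deletes in both cases exactly the singular values with index $>q$; this immediately gives $B_\tau=B_\epsilon$, $\rk(B_\tau)=q=\rk(A)$, $\sigma_\text{min}(B_\tau)=\sigma_q(B)\ge\gamma-\epsilon$ (so that $\norm{B_\tau^\pinv}=1/\sigma_\text{min}(B_\tau)\le 1/(\gamma-\epsilon)$), and $\norm{B-B_\tau}=\sigma_{q+1}(B)\le\epsilon$.

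\textbf{The pseudoinverse estimate.} Set $E'=B_\tau-A=F-(B-B_\tau)$, so that $\norm{E'}\le\norm{F}+\norm{B-B_\tau}\le 2\epsilon$, and note $2\epsilon<\gamma=\sigma_\text{min}(A)$ by hypothesis. Since we also have $\rk(B_\tau)=\rk(A)$, the passage from $A$ to $B_\tau$ is a rank-preserving (``acute'') perturbation, and Wedin's perturbation bound for the pseudoinverse applies to the pair $A,B_\tau$, giving $\norm{B_\tau^\pinv-A^\pinv}\le c\,\norm{A^\pinv}\,\norm{B_\tau^\pinv}\,\norm{E'}$ with a universal constant $c<2$ (the sharp spectral-norm constant in the equal-rank case is $(1+\sqrt5)/2$). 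Inserting $\norm{A^\pinv}=1/\gamma$, $\norm{B_\tau^\pinv}\le 1/(\gamma-\epsilon)$ and $\norm{E'}\le 2\epsilon$ then yields $\norm{B_\tau^\pinv-A^\pinv}\le 2c\,\epsilon/\bigl(\gamma(\gamma-\epsilon)\bigr)<4\epsilon/\bigl(\gamma(\gamma-\epsilon)\bigr)$, which is \cref{eq:rec-pi-perturbation-using-wedin}.

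\textbf{Expected obstacle.} The singular value part is mechanical once Weyl's inequality is in hand; the only substantive ingredient is the equal-rank pseudoinverse bound, and there the delicate point is the constant. If one prefers not to quote Wedin, one can start from Stewart's decomposition of $B_\tau^\pinv-A^\pinv$ into the principal term $-B_\tau^\pinv E' A^\pinv$ plus two correction terms carrying the orthogonal projectors $\idm-AA^\pinv$ and $\idm-B_\tau^\pinv B_\tau$ (valid precisely because $\rk(B_\tau)=\rk(A)$); a bare triangle inequality on the three summands, together with $\norm{E'}\le 2\epsilon$, gives only a bound of order $\epsilon/(\gamma-\epsilon)^2$, which is too weak. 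To do better one exploits that the first two summands have orthogonal row spaces ($\rg(A)$ versus $\rg(A)^\perp$) and that the third has column space orthogonal to those of the first two ($\ker(B_\tau)$ versus $\rg(B_\tau^\adjm)$), so that the operator norm of the whole is at most the square root of the sum of the three squared norms, with the ratio $\gamma/(\gamma-\epsilon)<2$ (again from $\gamma>2\epsilon$) keeping all resulting fractions under control; a careful such estimate --- essentially Wedin's own analysis --- recovers a constant below $2$ and hence the stated factor $4$. Citing Wedin's sharp bound directly is the shortest route.
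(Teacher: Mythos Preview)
Your proposal is correct and follows essentially the same route as the paper: Weyl's singular-value perturbation inequality yields the four truncation statements, and Wedin's pseudoinverse perturbation bound applied to the pair $A,B_\tau$ with $\norm{B_\tau-A}\le\norm{B_\tau-B}+\norm{B-A}\le 2\epsilon$ gives \cref{eq:rec-pi-perturbation-using-wedin}. The only cosmetic difference is that the paper quotes the Wedin--Stewart bound with the cleaner constant $2$ rather than the sharp $(1+\sqrt5)/2$, which already gives the factor $4$ with equality; your emphasis on the equal-rank hypothesis is also appropriate, though the paper leaves it implicit.
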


\begin{proof}
  The singular values of $B = A + F$ satisfy
  (see e.g.\ \parencite{Stewart1977})
  \begin{align*}
    \abs{\sigma_{i}(A + F) - \sigma_{i}(A)} 
    \le \sigma_{1}(F) \le \epsilon
  \end{align*}
  and therefore, with $r = \rk(A)$, we obtain
  \begin{align*}
    \sigma_{1}(B) \ge \ldots \ge
    \sigma_{r}(B) \ge \gamma - \epsilon > \tau \ge \epsilon \ge \sigma_{r+1}(B) \ge \ldots \ge 0.
  \end{align*}
  This shows already everything except the inequality in \cref{eq:rec-pi-perturbation-using-wedin}.
  To show the latter one, we use $\norm{X^{\pinv} - A^{\pinv}} \le 2 \norm{X^{\pinv}} \norm{A^{\pinv}} \norm{X - A}$ \parencite{Wedin1973,Stewart1977}.
  Inserting $\norm{B_{\tau} - A} \le \norm{B_{\tau} - B} + \norm{B - A}$ shows the desired inequality.
\end{proof}

\subsection{Reconstruction of bipartite states
  \label{sec:rec-reconstr-bipartite-states-subsec}}

Let $\rho \in \qs(XY)$ be a bipartite quantum state and let $\tau \in \qs(X'Y')$ be a state obtained from it by the action of local operations:
\begin{align}
  \tau &= (\mcn_X \otimes \mcn_Y) \rho,
\end{align}
where $\mcn_X: X \to X'$ and $\mcn_Y: Y \to Y'$ denote quantum operations.
We are interested in the conditions under which the original state $\rho$ can be reconstructed from $\tau$ with local quantum operations, i.e.~$\rho = (\recmap_{X'} \otimes \recmap_{Y'})\tau$ with $\recmap_{X'} \colon X' \to X$ and $\recmap_{Y'} \colon Y' \to Y$.
This question can be answered with the matrix decomposition from \cref{sec:rec-reconstr-matrices} without using the positivity properties of $\rho$, $\mcn_{X,Y}$ and $\recmap_{X',Y'}$. 
The result is provided by the following \namecref{thm:rec-reconstr-bipartite}:

\begin{thm}
  \label{thm:rec-reconstr-bipartite}
  Let $\rho \in \linop(XY)$ be a linear operator and let $\mcn_{X} \in \sop{X}{X'}$,
  $\mcn_{Y} \in \linmap(\linop(Y);$ $\linop(Y'))$
  be linear maps.
  Let $\tau = \mcn_{X} \otimes \mcn_{Y}(\rho)$.
  The original operator $\rho$ can be reconstructed from $\tau$ with local linear maps $\recmap_{X'}$, $\recmap_{Y'}$, i.e.\ 
  \begin{align}
    \label{eq:rec-reconstr-bipartite-reconstr}
    \rho = (\recmap_{X'} \otimes \recmap_{Y'})(\tau)
  \end{align}
  if and only if the following equality holds:
  \begin{align}
    \label{eq:rec-reconstr-bipartite-rkcond}
    \osr(X':Y')_{\tau} = \osr(X:Y)_{\rho}.
  \end{align}
  If the condition is satisfied, the following linear maps $\recmap_{X'} \in \sop{X'}{X}$ and $\recmap_{Y'} \in \sop{Y'}{Y}$ satisfy \cref{eq:rec-reconstr-bipartite-reconstr}:
  \begin{align}
    \label{eq:rec-reconstr-bipartite-maps}
    \recmap_{X'}
    &= \matrec_{\mcm_{\rho}\mcn_{Y}^{\tra},\mcn_{X}},
    &
      \recmap_{Y'}
    &=
      \matrec_{\mcm_{\rho}^{\tra}\mcn_{X}^{\tra},\mcn_{Y}},
    &
      \matrec_{\mathcal L,\mathcal N}
    &=
      \mathcal L (\mathcal N\mathcal L)^{\pinv}.
  \end{align}
  The operators $(\id \otimes \mcn_{Y})(\rho)$ and $(\mcn_{X} \otimes \id)(\rho)$ are sufficient to construct the two $\matrec_{\mathcal L, \mcn}$ maps if $\mcn_{X}$ and $\mcn_{Y}$ are known.
  The superscript M indicates that the reconstruction map is based on matrix reconstruction.
  If the condition is satisfied, the following equation also holds for $\recmap_{X'}$ and $\recmap_{Y'}$ from \cref{eq:rec-reconstr-bipartite-maps}:
  \begin{align}
    \label{eq:rec-reconstr-partial}
    \rho &= (\recmap_{X'}\mcn_{X} \otimes \id)(\rho) = (\id \otimes \recmap_{Y'}\mcn_{Y})(\rho).
  \end{align}
\end{thm}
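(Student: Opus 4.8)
The plan is to reduce the statement to the matrix reconstruction result \cref{prop:rec-matrixdec} via the correspondence $\rho \mapsto \mcm_{\rho} \mapsto M_{\rho}$ set up in \cref{sec:rec-notation}, using that this correspondence is a vector-space isomorphism which, for orthonormal operator bases, also preserves the Hilbert--Schmidt inner product and is therefore compatible with composition, adjoints, pseudoinverses and transposes. First I would translate the action of the local maps to the matrix side: expanding $\rho = \sum_{kl} [\rho]_{kl}\, F^{(X)}_{k} \otimes F^{(Y)}_{l}$ and computing the components $[\tau]_{ij} = \skp{F^{(X')}_{i} \otimes F^{(Y')}_{j}}{(\mcn_{X} \otimes \mcn_{Y})(\rho)}$ gives $[\tau]_{ij} = \sum_{kl} [\mcn_{X}]_{ik}\, [\rho]_{kl}\, [\mcn_{Y}]_{jl}$, i.e., on the level of matrices,
\begin{align*}
  M_{\tau} = N_{X}\, M_{\rho}\, N_{Y}^{\tra}, \qquad N_{X} := [\mcn_{X}], \quad N_{Y} := [\mcn_{Y}].
\end{align*}
The only subtlety is that the second map enters transposed; correspondingly, $(\id \otimes \mcn_{Y})(\rho)$ has matrix $M_{\rho} N_{Y}^{\tra}$ and $(\mcn_{X} \otimes \id)(\rho)$ has matrix $N_{X} M_{\rho}$. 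Setting $M := M_{\rho}$, $L := N_{X}$, $R := N_{Y}^{\tra}$ gives $M_{\tau} = LMR$, and \cref{eq:rec-intro-osr-rank} identifies $\osr(X:Y)_{\rho} = \rk(M)$ and $\osr(X':Y')_{\tau} = \rk(LMR)$, so that \cref{eq:rec-reconstr-bipartite-rkcond} is precisely the rank condition $\rk(LMR) = \rk(M)$ of \cref{prop:rec-matrixdec}.

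For the direction ``reconstruction implies \cref{eq:rec-reconstr-bipartite-rkcond}'', suppose $\rho = (\recmap_{X'} \otimes \recmap_{Y'})(\tau)$ for some linear maps. On the matrix side this reads $M = R_{X'}\,(LMR)\,R_{Y'}^{\tra}$ with $R_{X'} := [\recmap_{X'}]$, $R_{Y'} := [\recmap_{Y'}]$; hence $\rk(M) \le \rk(LMR)$, and since $\rk(LMR) \le \rk(M)$ always holds (product of matrices), equality follows.

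For the converse, assume \cref{eq:rec-reconstr-bipartite-rkcond}, i.e.\ $\rk(LMR) = \rk(M)$, and apply \cref{prop:rec-matrixdec} with the matrix $X := (LMR)^{\pinv}$ (which has the required property $C X C = C$ for $C = LMR$) to obtain $M = (MR)\,(LMR)^{\pinv}\,(LM)$. Then I would check, using the compatibility of the operator-basis representation with composition, pseudoinverse and transpose together with the marginal identities from the first paragraph, that the maps of \cref{eq:rec-reconstr-bipartite-maps} have matrix representations $R_{X'} = (MR)\,(LMR)^{\pinv}$ and $R_{Y'}^{\tra} = (LMR)^{\pinv}\,(LM)$. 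Reading off $M = (MR)\,(LMR)^{\pinv}\,(LM)$ then gives $R_{X'}\,(LM) = M$ and $(MR)\,R_{Y'}^{\tra} = M$, which are exactly the matrix forms of the two identities in \cref{eq:rec-reconstr-partial}, namely $(\recmap_{X'}\mcn_{X} \otimes \id)(\rho) = \rho$ and $(\id \otimes \recmap_{Y'}\mcn_{Y})(\rho) = \rho$. Combining them yields \cref{eq:rec-reconstr-bipartite-reconstr}:
\begin{align*}
  (\recmap_{X'} \otimes \recmap_{Y'})(\tau) = (\recmap_{X'}\mcn_{X} \otimes \recmap_{Y'}\mcn_{Y})(\rho) = (\recmap_{X'}\mcn_{X} \otimes \id)\bigl( (\id \otimes \recmap_{Y'}\mcn_{Y})(\rho) \bigr) = \rho .
\end{align*}
The ``sufficiency'' claim is then immediate, since $R_{X'}$ is assembled from $MR$ (the matrix of $(\id \otimes \mcn_{Y})(\rho)$) and $L = N_{X}$, while $R_{Y'}$ is assembled from $LM$ (the matrix of $(\mcn_{X} \otimes \id)(\rho)$) and $R = N_{Y}^{\tra}$.

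I expect the main obstacle to be purely a matter of bookkeeping: carrying the transpose on $\mcn_{Y}$ correctly through the identification $\rho \leftrightarrow M_{\rho}$, and confirming that the explicit maps in \cref{eq:rec-reconstr-bipartite-maps} reproduce exactly the three factors $MR$, $(LMR)^{\pinv}$, $LM$ appearing in \cref{prop:rec-matrixdec}. Once this dictionary is fixed, both implications are short and, as in \cref{prop:rec-matrixdec}, use no positivity of $\rho$, $\mcn_{X}$, $\mcn_{Y}$ or the reconstruction maps.
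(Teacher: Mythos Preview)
Your proposal is correct and follows essentially the same route as the paper: translate to matrices via \cref{lem:rec-state-as-matrix-composition} (which you rederive), identify the rank condition with that of \cref{prop:rec-matrixdec}, and read off the reconstruction maps as the factors $MR\,(LMR)^{\pinv}$ and $(LMR)^{\pinv}\,LM$. The only cosmetic difference is ordering: the paper computes $R_{X'}(LMR)R_{Y'}^{\tra}$ directly using $A^{\pinv}AA^{\pinv}=A^{\pinv}$ and then notes \cref{eq:rec-reconstr-partial} follows by omitting one factor, whereas you establish the two partial identities first and combine them; both are the same application of \cref{prop:rec-matrixdec}.
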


\begin{rem}
If the rank condition \eqref{eq:rec-reconstr-bipartite-rkcond} is satisfied and $\sigma^{(XY')} = (\id \otimes \mcn_{Y})(\rho)$ and $\sigma^{(X'Y)} = (\mcn_{X} \otimes \id)(\rho)$ are given, one can obtain $\rho$ by computing the maps $\recmap_{X'}$ and $\recmap_{Y'}$ (\cref{eq:rec-reconstr-bipartite-maps}), $\tau = \mcn_{X} \otimes \id(\sigma^{(X'Y)})$ and $\rho = (\recmap_{X'} \otimes \recmap_{Y'})(\tau)$.
More directly, one can also compute only the map $\recmap_{X'}$ followed by computing $\rho = (\recmap_{X'} \otimes \id)(\sigma^{(X'Y)})$ (\cref{eq:rec-reconstr-partial}).
The two options correspond to using either \cref{eq:rec-reconstr-bipartite-proof-rho-withmaps} or \cref{eq:rec-reconstr-bipartite-proof-rho-direct} to obtain $\mcm_{\rho}$.
\end{rem}

\begin{proof}[of \cref{thm:rec-reconstr-bipartite}]
  The operator $\tau$ is given by $\tau = \mcn_{X} \otimes \mcn_{Y}(\rho)$, therefore $\osr(X':Y')_{\tau} \le \osr(X:Y)_{\rho}$ always holds (\cref{cor:rec-osr-data-processing}). 
  Let \cref{eq:rec-reconstr-bipartite-reconstr} hold.
  Again by \cref{cor:rec-osr-data-processing}, the converse inequality $\osr(X:Y)_{\rho} \le \osr(X':Y')_{\tau}$ also holds.
  As a consequence, the two operator Schmidt ranks must be equal.

  Let the rank condition \eqref{eq:rec-reconstr-bipartite-rkcond} hold.
  \Cref{lem:rec-state-as-matrix-composition} and $\tau = \mcn_{X} \otimes \mcn_{Y}(\rho)$ 
  We obtain:
  \begin{subequations}
    \begin{align}
      \label{eq:rec-reconstr-bipartite-proof-start}
      & [\recmap_{X'} \otimes \recmap_{Y'}(\tau)]_{ij}
        = [\recmap_{X'} \mcm_{\tau} \recmap_{Y'}^{\tra}]_{ij}
      \\ &=
           \label{eq:rec-reconstr-bipartite-proof-rho-withmaps}
           [\mcm_{\rho} \mcn_{Y}^{\tra} (\mcn_{X} \mcm_{\rho} \mcn_{Y}^{\tra})^{\pinv}
           (\mcn_{X} \mcm_{\rho} \mcn_{Y}^{\tra})
           (\mcn_{X} \mcm_{\rho} \mcn_{Y}^{\tra})^{\pinv} \mcn_{X} \mcm_{\rho}]_{ij}
      \\ &=
           \label{eq:rec-reconstr-bipartite-proof-rho-direct}
           [\mcm_{\rho} \mcn_{Y}^{\tra} (\mcn_{X} \mcm_{\rho} \mcn_{Y}^{\tra})^{\pinv}
           \mcn_{X} \mcm_{\rho}]_{ij}
      \\ &=
           \label{eq:rec-reconstr-bipartite-proof-result}
           [\mcm_{\rho}]_{ij}.
    \end{align}
  \end{subequations}
  In \cref{eq:rec-reconstr-bipartite-proof-start} and \cref{eq:rec-reconstr-bipartite-proof-rho-withmaps}, we used \cref{lem:rec-state-as-matrix-composition} and inserted the maps from \cref{eq:rec-reconstr-bipartite-maps}.
  In \cref{eq:rec-reconstr-bipartite-proof-rho-direct}, we used the property $A^{\pinv} A A^{\pinv} = A^{\pinv}$ of the Moore--Penrose pseudoinverse.
  In \cref{eq:rec-reconstr-bipartite-proof-result}, we applied the matrix reconstruction result from \cref{prop:rec-matrixdec}.
  The therefor needed rank condition $\rk(\mcn_{X} \mcm_{\rho} \mcn_{Y}^{\tra}) = \rk(\mcm_{\rho})$ is equivalent to the rank condition \eqref{eq:rec-reconstr-bipartite-rkcond} because of $\tau = \mcn_{X} \otimes \mcn_{Y}(\rho)$ and \cref{lem:rec-state-as-matrix-composition}.
  This shows that \cref{eq:rec-reconstr-bipartite-reconstr} holds if \cref{eq:rec-reconstr-bipartite-rkcond} is assumed and the maps from \cref{eq:rec-reconstr-bipartite-maps} are inserted.
  \cref{eq:rec-reconstr-partial} can be shown by omitting $\recmap_{Y'}$ (or $\recmap_{X'}$) from left hand side of \cref{eq:rec-reconstr-bipartite-proof-start}. This finishes the proof of the theorem.
\end{proof}

The remainder of the section provides the ingredients used in the preceding proof.
It also provides a \emph{data processing inequality (DPI)} for the operator Schmidt rank which is used below.

\begin{lem}
  \label{lem:rec-state-as-matrix-composition}

Let $\rho \in \linop(XY)$ be a linear operator and let $\mcn_{X}: \cB(X) \to \cB(X')$, $\mcn_{Y}: \cB(Y) \to \cB(Y')$, be linear maps. Set $\tau = (\mcn_{X} \otimes \mcn_{Y})(\rho)$. Then
  \begin{align}
    \label{eq:rec-state-as-matrix-composition}
    [\mcm_{\tau}]_{ij} &= [\mcn_{X} \mcm_{\rho} \mcn_{Y}^{\tra}]_{ij}.
  \end{align}
\end{lem}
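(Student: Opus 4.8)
The plan is to unfold everything in the orthonormal operator bases and track indices. Recall that the components of a linear map $\mcm \in \sop{Y}{X}$ are $[\mcm]_{ij} = \skp{F^{(X)}_{i}}{\mcm(F^{(Y)}_{j})}$, that $[\mcm_{\rho}]_{ij} = [\rho]_{ij} = \skp{F^{(X)}_{i}\otimes F^{(Y)}_{j}}{\rho}$ by definition \eqref{eq:rec-intro-state-to-map}, and that the transpose is defined by $[\mcn_{Y}^{\tra}]_{ij} = [\mcn_{Y}]_{ji}$ in the same bases. So the claim $[\mcm_{\tau}]_{ij} = [\mcn_{X}\mcm_{\rho}\mcn_{Y}^{\tra}]_{ij}$ is an identity between two complex numbers for each $i,j$, and I would prove it by computing both sides.

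First I would expand the right-hand side as a matrix product over the internal indices: $[\mcn_{X}\mcm_{\rho}\mcn_{Y}^{\tra}]_{ij} = \sum_{k,l} [\mcn_{X}]_{ik}\,[\mcm_{\rho}]_{kl}\,[\mcn_{Y}^{\tra}]_{lj} = \sum_{k,l}[\mcn_{X}]_{ik}\,[\rho]_{kl}\,[\mcn_{Y}]_{jl}$, where the matrices have the compatible sizes $d_{X'}^{2}\times d_{X}^{2}$, $d_{X}^{2}\times d_{Y}^{2}$, $d_{Y}^{2}\times d_{Y'}^{2}$. Next I would compute the left-hand side using the definition of $\tau$ and the resolution of identity \eqref{eq:rec-intro-op-identity} on each factor. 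Writing $\rho = \sum_{k,l}[\rho]_{kl}\,F^{(X)}_{k}\otimes F^{(Y)}_{l}$ and using linearity of $\mcn_{X}\otimes\mcn_{Y}$,
\begin{align*}
  [\mcm_{\tau}]_{ij}
  &= \skp{F^{(X')}_{i}\otimes F^{(Y')}_{j}}{(\mcn_{X}\otimes\mcn_{Y})(\rho)}
  \\
  &= \sum_{k,l}[\rho]_{kl}\,\skp{F^{(X')}_{i}}{\mcn_{X}(F^{(X)}_{k})}\,\skp{F^{(Y')}_{j}}{\mcn_{Y}(F^{(Y)}_{l})}
  \\
  &= \sum_{k,l}[\mcn_{X}]_{ik}\,[\rho]_{kl}\,[\mcn_{Y}]_{jl},
\end{align*}
which matches the right-hand side computed above. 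The only subtlety to state carefully is the use of the transpose: the index $l$ on $\mcn_{Y}$ appears as the \emph{second} argument of its component bracket $[\mcn_{Y}]_{jl}=\skp{F^{(Y')}_{j}}{\mcn_{Y}(F^{(Y)}_{l})}$, whereas matrix multiplication by $\mcm_{\rho}$ from the right contracts against the \emph{first} index; this is precisely why $\mcn_{Y}^{\tra}$ rather than $\mcn_{Y}$ appears in \eqref{eq:rec-state-as-matrix-composition}.

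I do not expect a serious obstacle here; the lemma is essentially a bookkeeping statement that rewrites the action of a tensor product of maps on an operator as a sandwiched matrix product, which is the linear-algebraic backbone of \cref{thm:rec-reconstr-bipartite}. The one thing worth being explicit about is that the bases on the primed and unprimed systems are independent choices, so the matrices $[\mcn_{X}]$ and $[\mcn_{Y}]$ are genuinely rectangular, and that the Hilbert--Schmidt inner product factorizes over the tensor product $\skp{A_{1}\otimes A_{2}}{B_{1}\otimes B_{2}} = \skp{A_{1}}{B_{1}}\skp{A_{2}}{B_{2}}$, which is what lets the double sum decouple into the two component brackets.
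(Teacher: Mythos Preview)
Your proposal is correct and follows essentially the same approach as the paper: expand $\rho$ in the tensor-product operator basis (equivalently, insert the resolution of the identity \eqref{eq:rec-intro-op-identity} on each factor), apply $\mcn_{X}\otimes\mcn_{Y}$, take the Hilbert--Schmidt inner product with $F^{(X')}_{i}\otimes F^{(Y')}_{j}$, use the factorization of the inner product over the tensor product, and read off the matrix-product form. The only cosmetic difference is that the paper writes one chain of equalities from $[\mcm_{\tau}]_{ij}$ to $[\mcn_{X}\mcm_{\rho}\mcn_{Y}^{\tra}]_{ij}$, whereas you compute both sides separately and match them; the content is identical.
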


\begin{proof}
  First, note that
  \begin{subequations}
    \begin{align}
      [(F^{(X)}_{k} \tilde F^{(X)}_{k}) \otimes (F^{(Y)}_{l} \tilde F^{(Y)}_{l})](\rho)
      &=
        (F^{(X)}_{k} \otimes F^{(Y)}_{l})[(\tilde F^{(X)}_{k} \otimes \tilde F^{(Y)}_{l})(\rho)]
      \\ &=
           (F^{(X)}_{k} \otimes F^{(Y)}_{l}) [\rho]_{kl}.
    \end{align}
  \end{subequations}
  The proof involves several basic steps:
  \begin{subequations}
  \begin{align}
    [\mcm_{\tau}]_{ij}
    &=
      [\tau]_{ij}
      =
      [\mcn_{X} \otimes \mcn_{Y}(\rho)]_{ij}
    \\ &=
         \skpc{F^{(X')}_{i} \otimes F^{(Y')}_{j}}{\mcn_{X} \otimes \mcn_{Y}(\rho)}
    \\ &=
         \sum_{kl}\skpc{F^{(X')}_{i} \otimes F^{(Y')}_{j}}{(\mcn_{X} \otimes \mcn_{Y})[(F^{(X)}_{k} \tilde F^{(X)}_{k}) \otimes (F^{(Y)}_{l} \tilde F^{(Y)}_{l})](\rho)}
    \\ &=
         \sum_{kl}
         \skpc{F^{(X')}_{i}}{\mcn_{X}(F^{(X)}_{k})}
         \skpc{F^{(Y')}_{j}}{\mcn_{Y}(F^{(Y)}_{l})}
         [\rho]_{kl}
    \\ &=
         \sum_{kl}
         [\mcn_{X}]_{ik} [\mcn_{Y}]_{jl} [\mcm_{\rho}]_{kl}
    \\ &=
         [\mcn_{X} \mcm_{\rho} \mcn_{Y}^{\tra}]_{ij}.
  \end{align}
  \end{subequations}
\end{proof}

In the introduction, we saw that the operator Schmidt rank is given by $\osr(X:Y)_{\rho} = \rk(\mcm_{\rho})$ where $\mcm_{\rho} \in \sop{Y}{X}$ is a linear map.
As corollary from \cref{lem:rec-state-as-matrix-composition}, we obtain the monotonicity of the operator Schmidt rank under local maps, i.e.~a data processing inequality:

\begin{cor}
  \label{cor:rec-osr-data-processing}
  Let $\rho$, $\mcn_{X}$, $\mcn_{Y}$ and $\tau = \mcn_{X} \otimes \mcn_{Y}(\rho)$ as in \cref{lem:rec-state-as-matrix-composition}.
  Then,
  \begin{align}
    \label{eq:rec-osr-data-processing}
    \osr(X':Y')_{\tau} \le \osr(X:Y)_{\rho}
  \end{align}
  and
  \begin{align}
    \label{eq:rec-osr-prep-map-rank}
    \osr(X':Y')_{\tau} \le \min\{\rk(\mcn_{X}), \rk(\mcn_{Y})\}.
  \end{align}
\end{cor}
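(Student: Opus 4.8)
The plan is to reduce everything to the elementary fact that matrix rank is submultiplicative and invariant under transposition, using the dictionary between operator Schmidt rank and matrix rank already established in the preliminaries.

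First I would recall the identification $\osr(X:Y)_{\rho} = \rk(\mcm_{\rho})$ from \cref{eq:rec-intro-osr-rank} (and likewise $\osr(X':Y')_{\tau} = \rk(\mcm_{\tau})$), so that both inequalities become statements about the ranks of the associated matrices. Next I would invoke \cref{lem:rec-state-as-matrix-composition}, which gives $\mcm_{\tau} = \mcn_{X}\mcm_{\rho}\mcn_{Y}^{\tra}$ at the level of matrix representations. Then the first inequality follows immediately from $\rk(AB) \le \min\{\rk(A),\rk(B)\}$ applied twice:
\begin{align*}
  \osr(X':Y')_{\tau} = \rk(\mcn_{X}\mcm_{\rho}\mcn_{Y}^{\tra}) \le \rk(\mcm_{\rho}\mcn_{Y}^{\tra}) \le \rk(\mcm_{\rho}) = \osr(X:Y)_{\rho}.
\end{align*}

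For the second inequality I would again apply submultiplicativity, but now bounding by the outer factors: $\rk(\mcn_{X}\mcm_{\rho}\mcn_{Y}^{\tra}) \le \rk(\mcn_{X})$ and $\rk(\mcn_{X}\mcm_{\rho}\mcn_{Y}^{\tra}) \le \rk(\mcn_{Y}^{\tra})$, together with the elementary fact $\rk(\mcn_{Y}^{\tra}) = \rk(\mcn_{Y})$, which yields $\osr(X':Y')_{\tau} \le \min\{\rk(\mcn_{X}),\rk(\mcn_{Y})\}$. There is essentially no obstacle here: the content is entirely in \cref{lem:rec-state-as-matrix-composition} (already proved) and in the rank identity \cref{eq:rec-intro-osr-rank}; the only point requiring a word of care is that one must pass through the matrix picture rather than trying to argue directly with tensor-product decompositions, and that transposing a finite matrix leaves its rank unchanged.
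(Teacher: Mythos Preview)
Your proof is correct and matches the paper's own argument essentially verbatim: the paper likewise invokes \cref{eq:rec-intro-osr-rank}, \cref{lem:rec-state-as-matrix-composition}, and the rank inequality $\rk(AB)\le\min\{\rk(A),\rk(B)\}$. The only addition on your side is the explicit remark that $\rk(\mcn_{Y}^{\tra})=\rk(\mcn_{Y})$, which the paper leaves implicit.
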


\begin{proof}
  Use the property $\osr(X:Y)_{\rho} = \rk(\mcm_{\rho})$ (\cref{eq:rec-intro-osr-rank}), the identity $\rk(\mcm_{\tau}) = \rk(\mcn_{X} \mcm_{\rho} \mcn_{Y}^{\tra})$ (\cref{lem:rec-state-as-matrix-composition}) and the rank inequality $\rk(AB) \le \min\{\rk(A), \rk(B)\}$ for arbitrary matrices or linear maps $A$ and $B$.
\end{proof}

\section{Petz recovery of bipartite states subjected to local quantum operations
  \label{sec:rec-recov-bipartite-states}}

In the previous section, we considered a linear operator $\rho \in \linop(XY)$ subjected to local linear maps $\mcn_{X} \in \sop{X}{X'}$ and $\mcn_{Y} \in \sop{Y}{Y'}$,
\begin{align}
  \tau = (\mcn_{X} \otimes \mcn_{Y})(\rho).
\end{align}
In \cref{sec:rec-reconstr-bipartite-states-subsec} we presented a condition under which $\rho$ can be reconstructed from $\tau$ via local linear maps.
Here, we discuss the same question for a bipartite quantum state $\rho\equiv \rho_{XY} \in \cD(XY)$ and quantum operations $\mcn_{X}$ and $\mcn_{Y}$.
The answer is obtained by restricting \cref{thm:rec-petz} to the bipartite setting, i.e.~by inserting $\rho = \rho_{XY}$, $\sigma = \rho_{X} \otimes \rho_{Y}$ and $\mcn = \mcn_{X} \otimes \mcn_{Y}$ \parencite{Piani2008}:

\begin{cor}[Bipartite Petz recovery map \parencite{Piani2008}]
  \label{thm:rec-petz-bipartite}
  Let $\rho \in \linop(XY)$ a quantum state and $\mcn_{X} \colon X \to X'$, $\mcn_{Y} \colon Y \to Y'$ quantum operations.
  Set $\tau = (\mcn_{X} \otimes \mcn_{Y})(\rho)$.
  The equality
  \begin{align}
    \label{eq:rec-petz-bipartite-mi}
    I(X':Y')_{\tau} = I(X:Y)_{\rho}
  \end{align}
  holds if and only if there are quantum operations $\cptprec_{X'} \colon X' \to X$ and $\cptprec_{Y'} \colon Y' \to Y$ which satisfy
  \begin{align}
    \label{eq:rec-petz-bipartite-recov}
    \rho &= (\cptprec_{X'} \otimes \cptprec_{Y'})(\tau).
  \end{align}
  If the condition is satisfied, the two Petz recovery maps $\cptprec_{X'} = \petzrec_{\rho_{X},\mcn_{X}}$ and $\cptprec_{Y'} = \petzrec_{\rho_{Y},\mcn_{Y}}$ satisfy the equation.
\end{cor}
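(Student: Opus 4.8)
The plan is to deduce the corollary directly from \cref{thm:rec-petz} via the substitution $\rho = \rho_{XY}$, $\sigma = \rho_{X} \otimes \rho_{Y}$ and $\mcn = \mcn_{X} \otimes \mcn_{Y}$. First I would record that $\relent(\rho_{XY} \| \rho_{X} \otimes \rho_{Y}) = I(X:Y)_{\rho}$ by \cref{eq:rec-intro-mi}. Next, since $\mcn_{Y}$ (resp.\ $\mcn_{X}$) is trace-preserving, $\tau_{X'} = \tr_{Y'}\tau = \mcn_{X}(\tr_{Y}\rho) = \mcn_{X}(\rho_{X})$ and likewise $\tau_{Y'} = \mcn_{Y}(\rho_{Y})$, hence $\mcn(\sigma) = \mcn_{X}(\rho_{X}) \otimes \mcn_{Y}(\rho_{Y}) = \tau_{X'} \otimes \tau_{Y'}$ and therefore $\relent(\mcn(\rho) \| \mcn(\sigma)) = \relent(\tau \| \tau_{X'} \otimes \tau_{Y'}) = I(X':Y')_{\tau}$. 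Thus \cref{eq:rec-petz-bipartite-mi} is exactly the relative-entropy equality \cref{eq:rec-petz-relent} for this choice of $\rho$, $\sigma$, $\mcn$.

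For the ``only if'' direction, I would assume \cref{eq:rec-petz-bipartite-mi}, so that \cref{thm:rec-petz} applies and the Petz recovery map $\petzrec_{\sigma,\mcn}$ satisfies $\rho = \petzrec_{\sigma,\mcn}\mcn(\rho)$. Inserting the explicit form \cref{eq:rec-petz-map} and using that $\sigma^{1/2} = \rho_{X}^{1/2} \otimes \rho_{Y}^{1/2}$, that $(\mcn_{X} \otimes \mcn_{Y})^{\adjs} = \mcn_{X}^{\adjs} \otimes \mcn_{Y}^{\adjs}$, and that $\mcn(\sigma)^{-1/2} = \mcn_{X}(\rho_{X})^{-1/2} \otimes \mcn_{Y}(\rho_{Y})^{-1/2}$, the map $\petzrec_{\sigma,\mcn}$ factorizes as $\petzrec_{\rho_{X},\mcn_{X}} \otimes \petzrec_{\rho_{Y},\mcn_{Y}}$ on $\supp\tau$; here one uses $\supp\tau \subseteq \supp\tau_{X'} \otimes \supp\tau_{Y'}$, so the two local Petz maps are defined there and their tensor product agrees with $\petzrec_{\sigma,\mcn}$ on $\tau$. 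Setting $\cptprec_{X'} = \petzrec_{\rho_{X},\mcn_{X}}$ and $\cptprec_{Y'} = \petzrec_{\rho_{Y},\mcn_{Y}}$, which are quantum operations by the Petz construction, then gives \cref{eq:rec-petz-bipartite-recov}.

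For the converse, suppose quantum operations $\cptprec_{X'} \colon X' \to X$ and $\cptprec_{Y'} \colon Y' \to Y$ satisfy \cref{eq:rec-petz-bipartite-recov}. Taking the partial trace over $Y$ and using that $\cptprec_{Y'}$ is trace-preserving gives $\rho_{X} = \cptprec_{X'}(\tau_{X'})$, and symmetrically $\rho_{Y} = \cptprec_{Y'}(\tau_{Y'})$. Hence the global quantum operation $\cptprec = \cptprec_{X'} \otimes \cptprec_{Y'}$ satisfies both $\rho = \cptprec\mcn(\rho)$ and $\sigma = \rho_{X} \otimes \rho_{Y} = \cptprec(\tau_{X'} \otimes \tau_{Y'}) = \cptprec\mcn(\sigma)$, so the ``if'' part of \cref{thm:rec-petz} yields $\relent(\mcn(\rho)\|\mcn(\sigma)) = \relent(\rho\|\sigma)$, which by the first paragraph is \cref{eq:rec-petz-bipartite-mi}.

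The main obstacle is the tensor-product factorization of the Petz map: one must verify that the Hermitian adjoint, the operator square root, and the inverse (on the relevant support) all distribute over tensor products, and that replacing $\supp(\mcn(\rho))$ by the larger space $\supp\tau_{X'} \otimes \supp\tau_{Y'}$ causes no harm. The supporting identities $\tau_{X'} = \mcn_{X}(\rho_{X})$ and $\rho_{X} = \cptprec_{X'}(\tau_{X'})$ both hinge on trace preservation, which is precisely why this corollary must be stated for quantum operations rather than for the arbitrary linear maps allowed in \cref{thm:rec-reconstr-bipartite}.
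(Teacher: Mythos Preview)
Your proposal is correct and follows exactly the approach the paper indicates: the paper simply says the corollary is obtained by restricting \cref{thm:rec-petz} to the bipartite setting via the substitution $\rho = \rho_{XY}$, $\sigma = \rho_{X} \otimes \rho_{Y}$, $\mcn = \mcn_{X} \otimes \mcn_{Y}$, and cites \parencite{Piani2008}. You have carried out precisely this reduction, filling in the details (the identification of the relative-entropy equality with the mutual-information equality, the tensor factorization of the Petz map, and the verification via partial traces that local recovery of $\rho$ entails recovery of $\sigma$) that the paper leaves implicit.
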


In the next section, we explore the relation between bipartite state reconstruction (\cref{thm:rec-reconstr-bipartite}) and bipartite Petz recovery (\cref{thm:rec-petz-bipartite}).

\section{Comparison of Petz recovery and state reconstruction
  \label{sec:rec-comparison}}

In this section, we compare Petz recovery with state reconstruction for a bipartite quantum state $\rho \in \qs(XY)$ subject to local quantum operations $\mcn_{X}\colon X \to X'$ and $\mcn_{Y}\colon Y \to Y'$:
\begin{align}
  \label{eq:rec-comp-tau-from-rho}
  \tau = (\mcn_{X} \otimes \mcn_{Y})(\rho).
\end{align}
The reconstruction is to be achieved via local linear maps:
\begin{align}
  \label{eq:rec-comp-rho-from-tau}
  \rho = (\recmap_{X'} \otimes \recmap_{Y'})(\tau).
\end{align}
State reconstruction and the Petz recovery map both provide maps $\recmap_{X'}$ and $\recmap_{Y'}$ under the assumption of different conditions on $\rho$ and $\tau$ (\cref{thm:rec-reconstr-bipartite,thm:rec-petz-bipartite}).
There is the following evident relation between state reconstruction and Petz recovery:
\begin{thm}
  \label{thm:rec-comparison}
  Let $\rho \in \qs(XY)$ a quantum state and let $\mcn_{X} \colon X \to X'$, $\mcn_{Y} \colon Y \to Y'$ quantum operations.
  Let $\tau = \mcn_{X} \otimes \mcn_{Y}(\rho)$.
  Then
  \begin{align}
    \label{eq:rec-comparison-conditions}
    I(X':Y')_{\tau} = I(X:Y)_{\rho} \quad\Rightarrow\quad \osr(X':Y')_{\tau} = \osr(X:Y)_{\rho}.
  \end{align}
  The converse implication does not hold.
  The premise of the implication \eqref{eq:rec-comparison-conditions} is equivalent to Petz recovery being possible, i.e.~there are CPTP maps $\cptprec_{X'}$ and $\cptprec_{Y'}$ which recover $\rho$ from $\tau$ (\cref{thm:rec-petz-bipartite}):
  \begin{align}
    \label{eq:rec-comparison-recov}
    \rho = \cptprec_{X'} \otimes \cptprec_{Y'}(\tau).
  \end{align}
  CPTP maps which recover $\rho$ in this way are given by the Petz recovery maps $\cptprec_{X'} = \petzrec_{\rho_{X},\mcn_{X}}$ and $\cptprec_{Y'} = \petzrec_{\rho_{Y},\mcn_{Y}}$ (see \cref{thm:rec-petz}).
  The conclusion of the implication \eqref{eq:rec-comparison-conditions} is equivalent to state reconstruction being possible, i.e.~there are linear maps $\recmap_{X'}$ and $\recmap_{Y'}$ which reconstruct $\rho$ from $\tau$ (\cref{thm:rec-reconstr-bipartite}):
  \begin{align}
    \label{eq:rec-comparison-reconstr}
    \rho = \recmap_{X'} \otimes \recmap_{Y'}(\tau).
  \end{align}
  Linear maps which recover $\rho$ in this way are given by the reconstruction maps $\recmap_{X'} = \matrec_{\mcm_{\rho}\mcn_{Y}^{\tra},\mcn_{X}}$ and $\recmap_{Y'} = \matrec_{\mcm_{\rho}^{\tra}\mcn_{X}^{\tra},\mcn_{Y}}$ (see \cref{thm:rec-reconstr-bipartite}).
\end{thm}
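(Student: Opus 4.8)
The first thing to note is that almost everything asserted is a restatement of results already in hand: the equivalence between the premise $I(X':Y')_\tau = I(X:Y)_\rho$ and the existence of recovering quantum operations, together with the explicit choice $\cptprec_{X'}=\petzrec_{\rho_{X},\mcn_{X}}$, $\cptprec_{Y'}=\petzrec_{\rho_{Y},\mcn_{Y}}$ and the formula \eqref{eq:rec-petz-map}, is exactly \cref{thm:rec-petz-bipartite} (with \cref{thm:rec-petz}); and the equivalence between the conclusion $\osr(X':Y')_\tau = \osr(X:Y)_\rho$ and the existence of local linear reconstruction maps, together with the explicit $\recmap_{X'}=\matrec_{\mcm_{\rho}\mcn_{Y}^{\tra},\mcn_{X}}$, $\recmap_{Y'}=\matrec_{\mcm_{\rho}^{\tra}\mcn_{X}^{\tra},\mcn_{Y}}$, is exactly \cref{thm:rec-reconstr-bipartite}. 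So the only statements that genuinely require an argument are the implication \eqref{eq:rec-comparison-conditions} and the failure of its converse.

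For the forward implication the plan is to invoke the data processing inequality for the operator Schmidt rank (\cref{cor:rec-osr-data-processing}) twice. Assume $I(X':Y')_\tau = I(X:Y)_\rho$. By \cref{thm:rec-petz-bipartite} there exist quantum operations $\cptprec_{X'}\colon X'\to X$ and $\cptprec_{Y'}\colon Y'\to Y$ with $\rho = (\cptprec_{X'}\otimes\cptprec_{Y'})(\tau)$. Applying \cref{cor:rec-osr-data-processing} to $\tau=(\mcn_{X}\otimes\mcn_{Y})(\rho)$ gives $\osr(X':Y')_\tau \le \osr(X:Y)_\rho$, while applying it to $\rho=(\cptprec_{X'}\otimes\cptprec_{Y'})(\tau)$ (legitimate since CPTP maps are in particular linear maps) gives the reverse inequality $\osr(X:Y)_\rho \le \osr(X':Y')_\tau$; the two together force equality. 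As a by-product this shows, via the ``if'' direction of \cref{thm:rec-reconstr-bipartite}, that state reconstruction with local linear maps is possible whenever Petz recovery is — which also pins down the ``conclusion $\Leftrightarrow$ state reconstruction'' clause.

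For the failure of the converse it suffices to exhibit one pair $(\rho,\tau)$ with $\osr(X':Y')_\tau=\osr(X:Y)_\rho$ but $I(X':Y')_\tau < I(X:Y)_\rho$; the non-strict inequality $I(X':Y')_\tau\le I(X:Y)_\rho$ always holds by the DPI for the relative entropy. I would take $X,Y$ to be qubits, $\rho=\tfrac14(\id\otimes\id + Z\otimes Z)$ the maximally classically correlated two-qubit state (eigenvalues $\tfrac12,\tfrac12,0,0$, hence a valid state with $\osr(X:Y)_\rho=2$ and $I(X:Y)_\rho=1$), $\mcn_{X}=\id$, and $\mcn_{Y}$ the depolarizing channel $\sigma\mapsto p\,\sigma+(1-p)\,\tr(\sigma)\,\id/2$ with a fixed $0<p<1$. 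Then $\tau=(\id\otimes\mcn_{Y})(\rho)=\tfrac14(\id\otimes\id + p\,Z\otimes Z)$, which still has $\osr(X:Y)_\tau=2$ — the two Schmidt terms do not collapse because the coefficient of $Z\otimes Z$ stays nonzero, and $\tau$ is manifestly not a product — so the rank condition \eqref{eq:rec-reconstr-bipartite-rkcond} holds and state reconstruction succeeds; but the eigenvalues of $\tau$ are $\tfrac{1\pm p}{4}$ (each with multiplicity two), so by strict concavity of the entropy $S(XY)_\tau>1$ while $S(X)_\tau=S(Y)_\tau=1$, giving $I(X:Y)_\tau=2-S(XY)_\tau<1$, so Petz recovery is impossible.

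I do not expect a serious obstacle: the substantive machinery is carried entirely by \cref{thm:rec-petz-bipartite,thm:rec-reconstr-bipartite,cor:rec-osr-data-processing}, and the only real task is verifying the counterexample, an elementary computation with $2\times 2$ and $4\times 4$ matrices. The single point deserving attention is confirming that the operator Schmidt rank of $\tau$ is not accidentally smaller than that of $\rho$, i.e.\ that the channel on $Y$ does not merge the two Schmidt terms; this is precisely why the example is arranged so that the $Z\otimes Z$ coefficient remains strictly positive.
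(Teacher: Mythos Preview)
Your proof of the forward implication is essentially identical to the paper's: both obtain the Petz recovery maps from \cref{thm:rec-petz-bipartite} and then deduce the operator-Schmidt-rank equality from the existence of local linear maps inverting $\mcn_X\otimes\mcn_Y$ on $\rho$. The paper phrases this last step as an appeal to \cref{thm:rec-reconstr-bipartite}, while you invoke \cref{cor:rec-osr-data-processing} directly twice; since the relevant direction of \cref{thm:rec-reconstr-bipartite} is itself proved via \cref{cor:rec-osr-data-processing}, there is no substantive difference.

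The counterexample is where you diverge. The paper defers to \cref{sec:rec-comparison-fourpartite} and uses the four-qubit $W$ state with partial traces as the local operations (so $X=AB$, $Y=CD$, $X'=B$, $Y'=C$). Your example --- the classically correlated two-qubit state $\rho=\tfrac14(\idm\otimes\idm+Z\otimes Z)$ with a depolarizing channel on one side --- is correct and more self-contained: it stays in two qubits, the operator Schmidt rank is visibly preserved because the $Z\otimes Z$ coefficient merely shrinks, and the strict drop in mutual information follows from strict concavity of the entropy. What the paper's choice buys is that it feeds directly into the quadripartite comparison (\cref{lem:rec-comp-fourpartite-implications}, \cref{fig:inv-recov-comparison}) developed immediately afterwards; your example is cleaner for the bipartite theorem in isolation but would not serve that later purpose.
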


\begin{proof}
  The premise of \cref{eq:rec-comparison-conditions} implies that the CPTP maps from \cref{eq:rec-comparison-recov} exist (\cref{thm:rec-petz-bipartite}).
  These CPTP maps are linear maps which satisfy \cref{eq:rec-comparison-reconstr}, which in turn implies that the conclusion of \cref{eq:rec-comparison-conditions} holds (\cref{thm:rec-reconstr-bipartite}).

  A counterexample for the converse implication of \cref{eq:rec-comparison-conditions} will be provided in \cref{sec:rec-comparison-fourpartite}.
\end{proof}

\begin{rem}
  Suppose that the conclusion of \cref{eq:rec-comparison-conditions} holds while its premise does not hold. 
  If both linear maps $\recmap_{X'}$ and $\recmap_{Y'}$ were CPTP, the equality $I(X':Y')_{\rho} \le I(X:Y)_{\tau}$ would be implied by \cref{eq:rec-comparison-reconstr} and $I(X':Y')_{\rho} = I(X:Y)_{\tau}$ would follow (since the converse inequality always holds because of $\tau = (\mcn_{X} \otimes \mcn_{Y})\rho$).
  This would contradict our assumption; i.e.~at least one of $\recmap_{X'}$ and $\recmap_{Y'}$ is not CPTP\@.
  For example, the reconstruction maps for the $W$ state on four qubits (\cref{sec:rec-comparison-fourpartite}) are non-positive.
\end{rem}

\cref{thm:rec-comparison} implies that any state which admits Petz recovery also admits state reconstruction. 
\begin{table}[t]
  \centering
  \begin{tabular}[t]{lll}
    \hline
    Method & $\recmap_{X'}$ depends on & $\recmap_{Y'}$ depends on \\\hline
    Petz recovery & $\rho_{X}$, $\mcn_{X}$ & $\rho_{Y}$, $\mcn_{Y}$ \\
    Reconstruction & $(\id \otimes \mcn_{Y})(\rho)$, $\mcn_{X}$ & $(\mcn_{X} \otimes \id)(\rho)$, $\mcn_{Y}$ \\\hline
  \end{tabular}
  \caption{Necessary input data for a recovery or reconstruction of a quantum state $\rho \in \qs(XY)$ from $\tau = (\mcn_{X} \otimes \mcn_{Y})(\rho)$. In both cases, $\rho$ is obtained from $\rho = (\recmap_{X'} \otimes \recmap_{Y'})(\tau)$.}
  \label{tab:rec-bipartite-input}
\end{table}
\Cref{tab:rec-bipartite-input} compares the quantities on which the state reconstruction maps and the Petz recovery maps depend:
Both methods require knowledge of the quantum operations $\mcn_{X}$ and $\mcn_{Y}$.  
The marginal states $\rho_{X}$ and $\rho_{Y}$ are sufficient for computing the Petz recovery maps.
To compute the state reconstruction maps, the states $(\id \otimes \mcn_{Y})(\rho) \in \qs(XY')$ and $(\mcn_{X} \otimes \id)(\rho) \in \qs(X'Y)$ are needed.
The marginals $\rho_{X}$, $\rho_{Y}$, $\tau_{X'} \equiv \mcn_{X}(\rho_{X})$ and $\tau_{Y'} \equiv \mcn_{Y}(\rho_{Y})$ (which are also needed for the recovery maps) 
can be inferred from the states $(\id \otimes \mcn_{Y})(\rho)$ and $(\mcn_{X} \otimes \id)(\rho)$. However, in addition, these states contain correlations between the systems $X$ and $Y'$ and between $X'$ and $Y$, respectively. 
This means that state reconstruction requires more input data for a reconstruction of $\rho$ than state recovery.
On the other hand, \cref{thm:rec-comparison} and the examples in the following subsection show that state reconstruction is successful for a strictly larger set of states than Petz recovery.

\subsection{Comparison for four-partite systems
  \label{sec:rec-comparison-fourpartite}}

After the comparison of state reconstruction and Petz recovery for bipartite quantum systems, we apply this result to the more specific case of quantum systems which comprise four subsystems.
Specifically, we consider four systems $A$, $B$, $C$, and $D$.
We insert the partial trace $\tr_{A}\colon AB \to B$ for $\mcn_{X} \colon X\to X'$ and the partial trace $\tr_{D}\colon CD \to C$ for $\mcn_{Y} \colon Y \to Y'$ in \cref{thm:rec-comparison}. 
Accordingly, reconstruction of $\rho \in \qs(ABCD)$ from its reduced state $\rho_{BC} = \tr_{AD}(\rho)$ is achieved with maps $\recmap_{B} \in \sop{B}{AB}$ and $\recmap_{C} \in \sop{C}{CD}$ as
\begin{align}
  \label{eq:rec-comp-fourpartite-rho-from-rho-bc}
  \rho &= \recmap_{B} \otimes \recmap_{C}(\rho_{BC}).
\end{align}
A straightforward application of \cref{thm:rec-comparison} provides the implication
\begin{align}
  \label{eq:rec-comp-fourpartite-mi-implies-osr-equality}
  I(B:C)_{\rho} = I(AB:CD)_{\rho} \quad \Rightarrow \quad \osr(B:C)_{\rho} = \osr(AB:CD)_{\rho}.
\end{align}
Since the W state on four qubits satisfies the conclusion of the last Equation but not its premise it constitutes a counterexample for the converse implication.
In addition, it provides a counterexample for the converse implication of \cref{eq:rec-comparison-conditions} in \cref{thm:rec-comparison}. 
On $\ns$ qubits, the W state is given by
\begin{align}
  \label{eq:rec-comp-fourpartite-w-state}
  W_{\ns}
  &=
    \ketbra{w_{\ns}}{w_{\ns}}, \quad
    \ket{w_{\ns}}
    = \frac1{\sqrt \ns} \sqbd{
    \ket{10\ldots0} + \ket{010\ldots0} + \cdots + \ket{0\ldots01} }
\end{align}
and the operator Schmidt rank and mutual information values of the four-qubit W state $W_{ABCD}$ are provided in \cref{tab:inv-recov-comparison} on \cpageref{tab:inv-recov-comparison}.

\begin{table}[t]
  \centering
  \begin{tabular}[t]{lllll}
    \hline
    $X \to X'$ & $Y \to Y'$ & $\mcn_{X}$ & $\mcn_{Y}$ & Condition \\\hline
    $AB \to B$ & $CD \to C$ & $\tr_{A}$ & $\tr_{D}$ & \eqref{eq:rec-comp-cond-osr}, \eqref{eq:rec-comp-cond-mi-A-B-C-D} \\
    $AB \to B$ & $CD \to CD$ & $\tr_{A}$ & $\id$ & \eqref{eq:rec-comp-cond-mi-A-B-CD} \\
    $ABC \to BC$ & $D \to D$ & $\tr_{A}$ & $\id$ & \eqref{eq:rec-comp-cond-mi-A-BC-D} \\
    \hline
  \end{tabular}
  \caption{
    Possible applications of state reconstruction and Petz recovery (\cref{thm:rec-comparison}) to a quadripartite system.
    Corresponding conditions for successful recovery are stated in \cref{lem:rec-comp-fourpartite-implications}.
  }
  \label{tab:rec-comp-fourpartite-details}
\end{table}
Above, we presented one possible application of bipartite Petz recovery (\cref{thm:rec-petz-bipartite}) to a quadripartite system.
It turns out that Petz recovery can be applied to a quadripartite system in three different ways.
The first row of \cref{tab:rec-comp-fourpartite-details} corresponds to the application of state reconstruction and Petz recovery to a quadripartite system as presented above.
Rows two and three of \cref{tab:rec-comp-fourpartite-details} present two different ways to apply Petz recovery to a quadripartite system.
In total, we have one possible application of state reconstruction and three possible applications of Petz recovery and for each application, there is a condition for successful reconstruction/recovery.
These conditions read as follows:
\begin{lem}
  \label{lem:rec-comp-fourpartite-implications}
  Given a quantum state $\rho \in \qs(ABCD)$, consider the equations
  \begin{align}
    \tag{C1}
    \label{eq:rec-comp-cond-osr}
    \osr(B:C)_{\rho} &= \osr(AB:CD)_{\rho} \\
    \tag{C2}
    \label{eq:rec-comp-cond-mi-A-B-C-D}
    I(B:C)_{\rho} &= I(AB:CD)_{\rho} \\
    \tag{C3}
    \label{eq:rec-comp-cond-mi-A-B-CD}
    I(A:B)_{\rho} &= I(A:BCD)_{\rho} \\
    \tag{C4}
    \label{eq:rec-comp-cond-mi-A-BC-D}
    I(A:BC)_{\rho} &= I(A:BCD)_{\rho}
  \end{align}
  The following implications hold, but the converse implications do not:
  \begin{align}
    \text{\eqref{eq:rec-comp-cond-mi-A-B-C-D} } &\Rightarrow \text{ \eqref{eq:rec-comp-cond-osr}}, &
    \text{\eqref{eq:rec-comp-cond-mi-A-B-C-D} } &\Rightarrow \text{ \eqref{eq:rec-comp-cond-mi-A-B-CD}}, &
    \text{\eqref{eq:rec-comp-cond-mi-A-B-CD} } &\Rightarrow \text{ \eqref{eq:rec-comp-cond-mi-A-BC-D}}.
  \end{align}
\end{lem}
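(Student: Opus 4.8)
The plan is to handle the three implications one at a time. The implication \eqref{eq:rec-comp-cond-mi-A-B-C-D}$\,\Rightarrow\,$\eqref{eq:rec-comp-cond-osr} requires no new work: it is precisely the specialization of \cref{thm:rec-comparison} to $\mcn_{X} = \tr_{A}$ and $\mcn_{Y} = \tr_{D}$ (so that $\tau = \rho_{BC}$, $X' = B$ and $Y' = C$), which is already recorded as \eqref{eq:rec-comp-fourpartite-mi-implies-osr-equality}.

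For the other two implications I would argue purely with the von Neumann mutual information, using only the chain rule $I(U:VW) = I(U:V) + I(U:W|V)$, the data processing inequality $I(U:V) \le I(U:VW)$ (discarding a subsystem), and the non-negativity $I(U:W|V) \ge 0$ of the conditional mutual information, i.e.\ strong subadditivity \eqref{SSA}. To prove \eqref{eq:rec-comp-cond-mi-A-B-C-D}$\,\Rightarrow\,$\eqref{eq:rec-comp-cond-mi-A-B-CD} I would chain $I(AB:CD) = I(B:CD) + I(A:CD|B) \ge I(B:C) + 0$, where the equality is the chain rule splitting $AB$ into $B$ and $A$ and the inequality combines discarding $D$ with strong subadditivity; the premise \eqref{eq:rec-comp-cond-mi-A-B-C-D} makes every step tight, so in particular $I(A:CD|B) = 0$, and then the chain rule $I(A:BCD) = I(A:B) + I(A:CD|B)$ yields \eqref{eq:rec-comp-cond-mi-A-B-CD}. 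To prove \eqref{eq:rec-comp-cond-mi-A-B-CD}$\,\Rightarrow\,$\eqref{eq:rec-comp-cond-mi-A-BC-D}, I would note that \eqref{eq:rec-comp-cond-mi-A-B-CD} is equivalent to $I(A:CD|B) = 0$ by the same chain rule; the conditional chain rule $I(A:CD|B) = I(A:C|B) + I(A:D|BC)$ with both summands non-negative forces $I(A:D|BC) = 0$, and $I(A:BCD) = I(A:BC) + I(A:D|BC)$ then gives \eqref{eq:rec-comp-cond-mi-A-BC-D}.

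For the three converse non-implications I would exhibit explicit four-qubit states. That \eqref{eq:rec-comp-cond-osr} does not imply \eqref{eq:rec-comp-cond-mi-A-B-C-D} is witnessed by the W state $W_{ABCD}$ of \eqref{eq:rec-comp-fourpartite-w-state}: a short computation (summarized in \cref{tab:inv-recov-comparison}) shows $\osr(B:C)_{W_{ABCD}} = \osr(AB:CD)_{W_{ABCD}} = 4$ while $I(B:C)_{W_{ABCD}} < 2 = I(AB:CD)_{W_{ABCD}}$. That \eqref{eq:rec-comp-cond-mi-A-B-CD} does not imply \eqref{eq:rec-comp-cond-mi-A-B-C-D} is witnessed by $\rho = \ketbra{0}{0}_{A} \otimes \phi_{BCD}$ with $\phi_{BCD}$ a GHZ state on $BCD$: since $A$ is uncorrelated with $BCD$, every mutual information involving $A$ vanishes and \eqref{eq:rec-comp-cond-mi-A-B-CD} holds, whereas $I(AB:CD)_{\rho} = I(B:CD)_{\rho} = 2 \ne 1 = I(B:C)_{\rho}$. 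That \eqref{eq:rec-comp-cond-mi-A-BC-D} does not imply \eqref{eq:rec-comp-cond-mi-A-B-CD} is witnessed by $\rho = \phi_{ABC} \otimes \ketbra{0}{0}_{D}$ with $\phi_{ABC}$ a GHZ state on $ABC$: since $D$ is uncorrelated, \eqref{eq:rec-comp-cond-mi-A-BC-D} holds automatically, whereas $I(A:BCD)_{\rho} = I(A:BC)_{\rho} = 2 \ne 1 = I(A:B)_{\rho}$.

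The chain-rule bookkeeping is routine and I expect no difficulty there. The only genuinely computational part is verifying the W-state values, and within that the equality $\osr(B:C)_{W_{ABCD}} = \osr(AB:CD)_{W_{ABCD}}$ is what I expect to be the main obstacle, since it requires explicitly exhibiting the Schmidt decomposition of $\ket{w_{4}}$ across $AB\vert CD$ and the operator Schmidt structure of the reduced operator $\rho_{BC}$ rather than appealing to a general principle.
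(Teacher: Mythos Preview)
Your proposal is correct and follows essentially the same approach as the paper: the first implication is the specialization of \cref{thm:rec-comparison}, the two mutual-information implications are proved by the same chain-rule/sandwich arguments (the paper's version of \eqref{eq:rec-comp-cond-mi-A-B-CD}$\,\Rightarrow\,$\eqref{eq:rec-comp-cond-mi-A-BC-D} uses the slightly shorter sandwich $I(A:B)\le I(A:BC)\le I(A:BCD)$ directly, but your conditional-chain-rule route is equivalent), and your three counterexamples $W_{ABCD}$, $\rho_{A}\otimes\operatorname{GHZ}_{BCD}$, $\operatorname{GHZ}_{ABC}\otimes\rho_{D}$ are exactly the ones the paper tabulates in \cref{tab:inv-recov-comparison}.
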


\begin{proof}
  The implication \eqref{eq:rec-comp-cond-mi-A-B-C-D} $\Rightarrow$ \eqref{eq:rec-comp-cond-osr} follows from \cref{thm:rec-comparison} with the substitutions given in the first row of \cref{tab:rec-comp-fourpartite-details}.

  \cref{eq:rec-comp-cond-mi-A-B-C-D} $\Rightarrow$ \cref{eq:rec-comp-cond-mi-A-B-CD}:
  The inequality
  $I(B:C)_{\rho} \le I(B:CD)_{\rho} \le I(AB:CD)_{\rho}$ always holds,
  therefore $I(B:C)_{\rho} = I(AB:CD)_{\rho}$ implies
  $I(B:CD)_{\rho} = I(AB:CD)_{\rho}$. The latter can be written with
  the conditional mutual information as $I(A:CD|B) = 0$ (\cref{eq:rec-intro-cmi}). The CMI in
  turn is also equal to $I(A:CD|B) = I(A:BCD) - I(A:B)$, which shows
  the desired equality $I(A:B) = I(A:BCD)$.

  \cref{eq:rec-comp-cond-mi-A-B-CD} $\Rightarrow$ \cref{eq:rec-comp-cond-mi-A-BC-D}:
  The inequality
  $I(A:B)_{\rho} \le I(A:BC)_{\rho} \le I(A:BCD)_{\rho}$ always holds,
  therefore $I(A:B)_{\rho} = I(A:BCD)_{\rho}$ implies
  $I(A:BC)_{\rho} = I(A:BCD)_{\rho}$.

  \cref{tab:inv-recov-comparison} contains states which show that the converse implications do not hold. The states are constructed from the $\ns$-qubit W state from \cref{eq:rec-comp-fourpartite-w-state} and the GHZ and classical GHZ states on $\ns$ qubits:
  \begin{subequations}
  \label{eq:rec-comp-fourpartite-cc-ghz-states}%
  \begin{align}
  \operatorname{GHZ}_{\ns}
  &=
    \ketbra{\text{GHZ}_{\ns}}{\text{GHZ}_{\ns}}, \quad
    \ket{\text{GHZ}_{\ns}}
    = \frac1{\sqrt 2} \sqbd{
    \ket{0\ldots0} + \ket{1\ldots1}
    }
  \\
  \operatorname{cGHZ}_{\ns}
  &=
    \frac12\sqbd{ \ketbra{0\ldots0}{0\ldots0} + \ketbra{1\ldots1}{1\ldots1} }
  \end{align}
  \end{subequations}
  The values of the operator Schmidt rank and mutual information given in \cref{tab:inv-recov-comparison} show that the converse implications do not hold.
\end{proof}

\begin{figure}[t]
  \centering
  \isvgc{graphics/reconstruction_vs_recovery_new2_9pt}
  \caption{
    State reconstruction vs.~Petz recovery for a quadripartite system.
    The figure shows the relations between conditions \eqref{eq:rec-comp-cond-osr}--\eqref{eq:rec-comp-cond-mi-A-BC-D} from \cref{lem:rec-comp-fourpartite-implications} and several states whose position indicates which of the conditions are satisfied by each state. e.g.\ $W_{ABCD}$ satisfies \eqref{eq:rec-comp-cond-osr} but it does not satisfy \eqref{eq:rec-comp-cond-mi-A-B-C-D}--\eqref{eq:rec-comp-cond-mi-A-BC-D}.
    $\rho_{A}$ and $\rho_{D}$ denote arbitrary states while the W, GHZ and classical GHZ states are defined in \cref{eq:rec-comp-fourpartite-w-state,eq:rec-comp-fourpartite-cc-ghz-states}.
  }
  \label{fig:inv-recov-comparison}
  \vspace*{\floatsep}
  \centering
  \footnotesize
  \begin{tabular}{lllllll}
    \hline
    Method & Cond. & Input & Domain/Range & Depends on \\\hline
    Reconstruction & \eqref{eq:rec-comp-cond-osr} & $\rho_{BC}$ & $B \to AB$, $C \to CD$ & $\rho_{ABC}$, $\rho_{BCD}$ \\
    Recovery & \eqref{eq:rec-comp-cond-mi-A-B-C-D} & $\rho_{BC}$ & $B \to AB$, $C \to CD$ & $\rho_{AB}$, $\rho_{CD}$ \\
    Recovery & \eqref{eq:rec-comp-cond-mi-A-B-CD} & $\rho_{BCD}$ & $B \to AB$ & $\rho_{AB}$ \\
    Recovery & \eqref{eq:rec-comp-cond-mi-A-BC-D} & $\rho_{BCD}$ & $BC \to ABC$ & $\rho_{ABC}$ \\\hline
  \end{tabular}
  \captionof{table}{
    Properties of the four reconstruction/recovery settings considered in \cref{fig:inv-recov-comparison}.
    The last three columns \emph{Input}, \emph{Domain/Range} and \emph{Depends on} refer to the reconstruction map(s).
    See \cref{tab:rec-comp-fourpartite-details} for further details.
  }
  \label{tab:rec-comp-fourpartite}
\end{figure}

The relations between \crefrange{eq:rec-comp-cond-osr}{eq:rec-comp-cond-mi-A-BC-D} from \cref{lem:rec-comp-fourpartite-implications} are illustrated in \cref{fig:inv-recov-comparison}.
The figure also shows which of the conditions are satisfied by the example states from \cref{tab:inv-recov-comparison}.
For example, the W state $W_{ABCD}$ on four qubits does not satisfy \eqref{eq:rec-comp-cond-mi-A-B-C-D}--\eqref{eq:rec-comp-cond-mi-A-BC-D}.
We can understand that $W_{ABCD}$ cannot satisfy \eqref{eq:rec-comp-cond-mi-A-BC-D} by considering the following known result: 
If \eqref{eq:rec-comp-cond-mi-A-BC-D} holds, then \cref{thm:petz-structure} tells us that the reduced state $\rho_{AD}$ must be a separable state.
However, the reduced state $\tr_{BC}(W_{ABCD})$ has a non-positive semidefinite partial transpose and therefore is inseparable, i.e.~entangled \parencite{Peres1996,Horodecki1996}:
The entanglement in the reduced state on $AD$ mandates that \cref{eq:rec-comp-cond-mi-A-BC-D} is not satisfied.

\Cref{fig:inv-recov-comparison} illustrates that reconstruction and the different applications of Petz recovery work for different subsets of all quadripartite states but one should not forget that they also require different reduced states of $\rho$ in order to recover $\rho \in \qs(ABCD)$.
\cref{tab:rec-comp-fourpartite} shows the necessary reduced states for each case.
In all four cases, the full state $\rho \in \qs(ABCD)$ can be reconstructed from marginal states on only two or three of the systems.
Each scheme enables quantum state tomography with incomplete information (i.e.~the necessary marginals) if the corresponding condition is assumed to hold.
Each scheme also relies on the fact that correlations as measured by the operator Schmidt rank or the mutual information are less than maximal; this restriction is imposed by the conditions \eqref{eq:rec-comp-cond-osr}--\eqref{eq:rec-comp-cond-mi-A-BC-D}.

\begin{table}[t]
  \centering%
  \footnotesize%
  \setlength{\tabcolsep}{0.4em}%
  \begin{tabular}{|c|cccc|ccc|cccc|}
    \hline
    & $\kappa(B{:}C)$ & $\kappa(AB{:}CD)$ & $I(B{:}C)$ & $I(AB{:}CD)$
    & $I(A{:}B)$ & $I(A{:}BC)$ & $I(A{:}BCD)$ 
    & C1 & C2 & C3 & C4
    \\ \hline
    $\operatorname{cGHZ}_{ABCD}$ & 1 & 1 & 1 & 1 
    & 1 & 1 & 1
    & $\checkmark$ & $\checkmark$ & $\checkmark$ & $\checkmark$
    \\ \hline
    $\rho_{A} \otimes W_{BCD}$
    & 2 & 2 & $\approx$ 0.92 & $\approx$ 1.84
    & 0 & 0 & 0
    & $\checkmark$ & $-$ & $\checkmark$ & $\checkmark$
    \\
    $W_{ABC} \otimes \rho_{D}$
    & 2 & 2 & $\approx$ 0.92 & $\approx$ 1.84
    & $\approx$ 0.92 & $\approx$ 1.84 & $\approx$ 1.84
    & $\checkmark$ & $-$ & $-$ & $\checkmark$
    \\
    $W_{ABCD}$
    & 2 & 2 & $\approx$ 0.62 & 2
    & $\approx$ 0.62 & 1 & $\approx$ 1.62
    & $\checkmark$ & $-$ & $-$ & $-$
    \\ \hline
    $\rho_{A} \otimes \operatorname{GHZ}_{BCD}$
    & 1 & 2 & 1 & 2
    & 0 & 0 & 0
    & $-$ & $-$ & $\checkmark$ & $\checkmark$
    \\
    $\operatorname{GHZ}_{ABC} \otimes \rho_{D}$
    & 1 & 2 & 1 & 2
    & 1 & 2 & 2
    & $-$ & $-$ & $-$ & $\checkmark$
    \\
    $\operatorname{GHZ}_{ABCD}$
    & 1 & 2 & 1 & 2
    & 1 & 1 & 2
    & $-$ & $-$ & $-$ & $-$
    \\ \hline
  \end{tabular}
  \caption{
    Log-operator Schmidt rank $\kappa(X:Y) = \log_{2}(\osr(X:Y))$ and mutual information\protect\footnotemark{} $I(X:Y)$ of the states shown in \cref{fig:inv-recov-comparison}.
    \Crefrange{eq:rec-comp-cond-osr}{eq:rec-comp-cond-mi-A-BC-D} are defined in \cref{lem:rec-comp-fourpartite-implications}.
  }
  \label{tab:inv-recov-comparison}
\end{table}

\Cref{tab:rec-comp-fourpartite} shows that the state $\rho \in \qs(ABCD)$ can be obtained from $\rho_{ABC}$ and $\rho_{BCD}$ with reconstruction under \eqref{eq:rec-comp-cond-osr} but recovery under \eqref{eq:rec-comp-cond-mi-A-B-C-D} requires only the marginal states on $AB$, $BC$ and $CD$.
This prompts the question whether smaller marginal states are sufficient to reconstruct a state under the reconstruction condition  \eqref{eq:rec-comp-cond-osr}.
For example, one could hope to obtain $\rho$ from $\rho_{AB}$ and $\rho_{BCD}$ but the following two states $\sigma_{+}$ and $\sigma_{-}$ show that this is not possible:
\begin{align}
  \label{eq:rec-comp-4partite-sigma4pm}
  \sigma_{\pm} 
  &= \frac1{32} \parend{ 2 \idm \otimes \idm \otimes \idm + (\idm \pm Z) \otimes Z \otimes Z }
    \otimes \idm,
    \quad
  \idm = \left(\begin{smallmatrix} 1 & 0 \\ 0 &
      1 \end{smallmatrix}\right),
  \quad
  Z = \left(\begin{smallmatrix} 1 & 0 \\ 0 &
      -1 \end{smallmatrix}\right).
\end{align}
\footnotetext{
  The exact values of the numerical constants are $2 \log(3)-\frac43 \approx 1.84$, $4 - \frac32 \log(3) \approx 1.62$, $\log(3)-\frac23 \approx 0.92$ and $3 - \frac32 \log(3) \approx 0.62$.
}
The states $\sigma_{+}$ and $\sigma_{-}$ have the same marginals on $AB$ and $BCD$ but they do satisfy \eqref{eq:rec-comp-cond-osr}.\footnote{
  The reduced states $\sigma_{\pm,AB} = \tr_{CD}(\sigma_{\pm}) = \frac14 \idm \otimes \idm$ and $\sigma_{\pm,BCD} = \tr_{A}(\sigma_{\pm}) = \frac1{16} \parena{ 2 \idm \otimes \idm + Z \otimes Z } \otimes \idm$ do not depend on the sign.
  The values of the operator Schmidt rank are $\osr(AB:CD)_{\sigma_{\pm}} = \osr(B:C)_{\sigma_{\pm}} = 2$.
}
As a consequence, $\rho_{AB}$ and $\rho_{BCD}$ are not sufficient to obtain $\rho$ under \eqref{eq:rec-comp-cond-osr} and it is now apparent that $\rho_{ABC}$ and $\rho_{BCD}$ are \emph{necessary} to reconstruct a state under that condition.%
\footnote{
  This holds true if $\rho$ is to be reconstructed from marginal states of $\rho$ which include at most $\rho_{ABC}$ and $\rho_{BCD}$.
}

\section{Efficient reconstruction of states on spin chains via recursively defined measurements
  \label{sec:rec-spinchain}}

Under suitable conditions, the state of a linear spin chain with $\ns$ spins can be reconstructed from marginal states of few neighbouring spins with the Petz recovery map \parencite{Poulin2011} or with state reconstruction \parencite{Baumgratz2013}.
In \cref{sec:rec-spinchain-from-marginals}, we explore the relation between Petz recovery and state reconstruction in that setting. 
In \cref{sec:rec-spinchain-from-long-range-meas}, we generalize both techniques to use long-range measurements instead of or in addition to short-ranged correlations found in marginal states of few neighbouring spins.
This allows for the efficient recovery/reconstruction of a larger set of states, as is explained in the following.

\paragraph{Motivation for long-range measurements. }

Consider the following quantum states on $\ns$ qubits:
\begin{subequations}
\begin{align}
  \operatorname{GHZ}_{\alpha,\ns}
  &=
    \ketbra{\text{GHZ}_{\alpha,\ns}}{\text{GHZ}_{\alpha,\ns}}, \quad
  \ket{\text{GHZ}_{\alpha,\ns}}
  = \frac1{\sqrt 2} \sqbd{
    \ket{0\ldots0} + \ee^{\ii \alpha} \ket{1\ldots1}
    }
  \\
  \operatorname{cGHZ}_{\ns}
  &=
    \frac12\sqbd{ \ketbra{0\ldots0}{0\ldots0} + \ketbra{1\ldots1}{1\ldots1} }
\end{align}
\end{subequations}
All states from the set $S_{\ns} = \{\operatorname{cGHZ}_{\ns}\} \cup \{\operatorname{GHZ}_{\alpha,\ns} \colon \alpha \in \R\}$ have the same reduced state $\operatorname{cGHZ}_{k}$ on $k < \ns$ qubits.
No recovery or reconstruction method which receives only local reduced states as input can distinguish between the states from the set $S_{\ns}$ and this is also the reason why no method could recover or reconstruct the four-qubit state $\ket{\text{GHZ}_{0,4}} = \ket{\text{GHZ}_{4}}$ in \cref{sec:rec-comparison-fourpartite}. 
Note that the pure states $\ket{\text{GHZ}_{\alpha,\ns}}$ can be represented as an MPS with bond dimension two (because they are the superposition of two pure product states) and that all states from the set $S_{\ns}$ can be represented as an MPO with bond dimension at most four (because they are the sum of at most four tensor product operators) \parencite{Schollwoeck2011}.

We call an MPS representation \emph{efficient} if its bond dimension is at most $D \bigo{\poly(\ns)}$ and a we call a tomography scheme \emph{efficient} if expectation values of at most $\poly(\ns)$ simple observables are needed; a possible definition of a \emph{simple} observable is provided in \cref{rem:rec-efficient}.
Standard quantum state tomography is not efficient because it requires $\sim \exp(\ns)$ expectation values. 
In contrast, it has been shown that any pure state which admits an efficient MPS representation can be determined efficiently from observables with a simple structure.%
\footnote{
  This is shown by the tomography scheme based on unitary operations introduced in Ref.~\parencite{Cramer2010}.
  We discuss it in more detail in \cref{rem:rec-petz-longrange-mps}.
}
The tomography scheme from Ref.~\parencite{Cramer2010} is efficient for the states $\ket{\text{GHZ}_{\alpha,\ns}}$ but recovery/reconstruction methods based on local reduced states must fail for these states. 
In \cref{sec:rec-spinchain-from-long-range-meas}, we extend both Petz recovery and state reconstruction in a way which allows the long-range measurements from Ref.~\parencite{Cramer2010} to be used and thus the states $\ket{\text{GHZ}_{\alpha,\ns}}$ to be reconstructed successfully.
What is more, we show that there are mixed states which cannot be reconstructed from local reduced states but can be reconstructed from long-range measurements (\cref{rem:rec-petz-longrange-addstate}).
This shows that Petz recovery and state reconstruction with long-range measurements can reconstruct more states than prior techniques (recovery/reconstruction from local reduced states and the tomography scheme from Ref.~\parencite{Cramer2010}). 
Furthermore, state reconstruction can reconstruct any MPO of bond dimension $D$ from $\sim \ns D$ expectation values of global tensor product observables, as has been shown in related prior work \parencite{Oseledets2010a}.
We build upon that to show that successful, efficient Petz recovery with long-range measurements implies that efficient state reconstruction with long-range measurements is also possible (\cref{thm:rec-petz-implies-mat-longrange}).

\paragraph{Prior work: MPO reconstruction \parencite{Baumgratz2013}.}

Many physically interesting quantum states $\rho \in \qs(\hilb_{1\dots\ns})$ can be represented efficiently, i.e.~with $\poly(\ns)$ parameters, via an MPO representation \parencite{Baumgratz2013}.
However, standard quantum state tomography requires $\sim \exp(\ns)$ different expectation values in $\rho$ to reconstruct $\rho$, even if $\rho$ admits such an efficient MPO representation.
As an improvement over that, it has been shown%
\footnote{
  Assume that all spins have the same dimension $\ldim = \ldim_{k}$, $k \in \{1, \dots, \ns\}$. 
  Lemma~1 in the supplementary material of \parencite{Baumgratz2013} states that $\rho$ can be reconstructed from reduced states on $l+r+1$ neighbouring spins if a certain condition is satisfied.
  This condition can be satisfied only if both $D^{2} \le \ldim^{2l}$ and $D^{2} \le \ldim^{2r}$ hold.
  However, if these two inequalities are satisfied, the given conditions almost always hold for MPO matrices with random entries. 
}
that almost all states with an MPO representation of bond dimension $D$ can be reconstructed from their reduced states on $\sim \log(D)$ neighbouring spins if a suitable reconstruction scheme is used \parencite{Baumgratz2013}.
We refer to this reconstruction scheme as \emph{MPO reconstruction} and we rederive it below in \cref{thm:rec-mat-marginals} as a consequence of our result on bipartite state reconstruction (\cref{thm:rec-reconstr-bipartite}).

\paragraph{Prior work: Cross approximation of tensor trains \parencite{Oseledets2010a}.}

Our generalization of state reconstruction to long-range measurements in \cref{thm:rec-mat-longrange} can be used to construct an MPO representation of the quantum state (\cref{rem:rec-mat-longrange-eff}).
An MPO representation of a quantum state $\rho \in \qs(\hilb_{1\dots\ns})$ is exactly the same as a tensor train representation of $\rho$ if the operator $\rho$ is regarded as vector from the tensor product vector space $\linop(\hilb_{1}) \otimes \dots \otimes \linop(\hilb_{\ns})$.
Ref.~\parencite{Oseledets2010a} provides a means to reconstruct a tensor of low tensor train rank (i.e.~an MPO of low bond dimension) from few entries.
This procedure is called \emph{tensor train cross approximation}. 
When applied to quantum states, tensor train cross approximation allows for the reconstruction of a quantum state from the expectation values of few tensor product observables.
\Cref{thm:rec-mat-longrange} is more general because it admits more general measurements; e.g.\ it also permits the measurements introduced in Ref.~\parencite{Cramer2010} (cf.\ \cref{rem:rec-petz-longrange-mps,rem:rec-petz-implies-mat-longrange-U-k-matrices}).

\paragraph{Prior work: Markov entropy decomposition \parencite{Poulin2011}.}

The strong subadditivity (SSA) property of the von~Neumann entropy of a tripartite state $\rho \equiv \rho_{ABC}$ (cf.~\eqref{SSA} of \cref{sec:rec-intro-petz-recovery}) can be expressed in terms of the conditional entropy $S(A|B)_{\rho} = S(AB)_{\rho} - S(B)_{\rho}$:
\begin{align}
  \label{eq:rec-vn-cond-reduces-entropy}
  S(A|BC)_\rho &\le S(A|B)_\rho
\end{align}
If we choose arbitrary subsets $\mcm_{k} \subset \{1\dots k\}$, the entropy $S(\rho) = S(1\dots \ns)_{\rho}$ can be rewritten and upper-bounded as follows:
\begin{align}
  S(1\dots \ns)_{\rho}
  &=
    S(12)_{\rho} + \sum_{k=2}^{\ns-1} S(k+1|1\dots k)_{\rho}
  \\
  &\le
    S(12)_{\rho} + \sum_{k=2}^{\ns-1} S(k+1|\mcm_{k})_{\rho} =: S_{M}(\rho).
  \label{eq:rec-vn-upper-bound}
\end{align}
In the second step, we applied \cref{eq:rec-vn-cond-reduces-entropy} $\ns-2$ times.
The sets $\mcm_{k}$ are called \emph{Markov shields} and the upper bound $S_{M}(\rho)$ is called the \emph{Markov entropy} \parencite{Poulin2011}.
In the following, we consider the particular choice $\mcm_{k} = \{k\}$.
In that case, the conditional entropies $S(k+1|\mcm_{k})_{\rho} = S(k+1|k)_{\rho}$ depend only on the reduced state $\rho_{k,k+1}$. 
As a consequence, the Markov entropy $S_{M}(\rho)$ is an upper bound on $S(\rho)$ which depends only on the nearest-neighbour reduced states $\rho_{k,k+1}$ ($k \in \{1, \dots, \ns-1\}$).
For a nearest-neighbour Hamiltonian $H = \sum_{k=1}^{\ns-1} h_{k,k+1}$ the energy $E = \tr(\rho H) = \sum_{k=1}^{\ns-1} \tr(\rho_{k,k+1} h_{k,k+1})$ is determined by the same reduced states $\rho_{k,k+1}$.
Therefore, lower bounds to the free energy $F = E - TS$ of a thermal state at temperature $T$ can be found with a variational algorithm which only uses the reduced states $\rho_{k,k+1}$ \parencite{Poulin2011}. 

\Cref{eq:rec-vn-upper-bound} was obtained by applying \cref{eq:rec-vn-cond-reduces-entropy} $n-2$ times for $A = \{k+1\}$, $B = \{k\}$ and $C = \{1, \dots, k-1\}$ ($k \in \{2, \dots, \ns-1\}$). 
These inequalities are equivalent to the following inequalities (because \cref{eq:rec-vn-cond-reduces-entropy} is equivalent to $I(B:C) \le I(AB:C)$):
\begin{align}
  \label{eq:rec-vn-upper-bound-as-mi}
  I(1\dots k-1:k)_{\rho} &\le I(1\dots k-1:k,k+1)_{\rho}, \quad (k \in \{2, \dots, \ns-1\}).
\end{align}
If equality holds in \cref{eq:rec-vn-upper-bound-as-mi} or, equivalently, in \cref{eq:rec-vn-upper-bound}, the global state $\rho$ can be obtained from the reduced states $\rho_{k,k+1}$ ($k \in \{1, \dots, \ns-1\}$) via Petz recovery maps (see supplementary material and main text of \parencite{Poulin2011}).
We state this known result in \cref{thm:rec-petz-marginals} and show that these conditions imply that MPO reconstruction (as stated in \cref{thm:rec-mat-marginals}) is possible  (\cref{thm:rec-petz-implies-mat-marginals}).

\subsection{Reconstruction of states from local marginal states
  \label{sec:rec-spinchain-from-marginals}}

The state of the spin chain is $\rho \in \qs(\hilb_{1\dots\ns})$ where $\hilb_{1\dots\ns} = \hilb_{1} \otimes \dots \hilb_{\ns}$ is the tensor product of the single-spin Hilbert spaces.
For each $k \in \{1, \dots, \ns\}$, we partition the spins on the chain into two parts:
\begin{align}
  \leftsetdef k
  \quad\text{and}\quad
  \rightsetdef k.
\end{align}
The marginal states $\rho_{\leftsetnb k} = \tr_{\rightsetnb k}(\rho)$, for $k \in \{2, \dots, \ns\}$, can be defined recursively via
\begin{align}
  \rho_{\leftsetnb k} = (\id_{\leftsetnb k} \otimes  \tr_{k+1})(\rho_{\leftsetnb{k+1}})
  \quad\quad (k \in \{2, \dots, \ns-1\}),
  \quad\quad
  \rho_{\leftsetnb{\ns}} = \rho.
  \label{eq:rec-petz-marginals-rho-k}
\end{align}
Each partial trace $\tr_{k+1}$ is a local CPTP map.
If the partial trace $\tr_{k+1}$ does not decrease the mutual information between $\leftsetval{k-1}$ and $\{k, k+1\}$ for all $k \in \{2 \dots \ns-1\}$, then the $\ns$-spin state $\rho$ can be recovered from marginal states of two neighbouring spins \parencite{Poulin2011}:

\begin{thm}
  \label{thm:rec-petz-marginals}
  Let $\rho \in \qs(\hilb_{1\dots \ns})$ be a quantum state.
  If the conditions
  \begin{align}
    \label{eq:rec-petz-marginals-mi}
    I(\leftsetnb{k-1}:k)_{\rho}
    &= I(\leftsetnb{k-1}: k,k+1)_{\rho}
      \quad\quad (k \in \{2, \dots, \ns-1\}),
  \end{align}
  are satisfied, then the marginal states $\rho_{\leftsetnb{k}} = \tr_{\rightsetnb{k}}(\rho)$ are given by
  \begin{align}
    \label{eq:rec-petz-marginals-rec}
    \rho_{\leftsetnb{k+1}} &= (\id_{\leftsetnb{k-1}} \otimes \cptprec_{k})(\rho_{\leftsetnb{k}}),         \quad\quad \rho
                     =
        \rho_{\leftsetnb{\ns}} = \cptprec_{\ns-1}\cptprec_{\ns-2}\dots \cptprec_{3}\cptprec_{2}(\rho_{12}),
     \end{align}
  where $k \in \{2, \dots, \ns-1\}$. The recovery maps $\issop{\recmap_{k}}{\hilb_{k}}{\hilb_{k,k+1}}$ are given by Petz recovery maps $\cptprec_{k} = \petzrec_{\rho_{k,k+1},\tr_{k+1}}$ (\cref{thm:rec-petz}). In the above, $\rho_{ij}$ denotes the reduced state of $\rho$ on sites $i$ and $j$.
\end{thm}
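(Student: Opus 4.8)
The plan is to derive each single step of \eqref{eq:rec-petz-marginals-rec} from one application of \cref{thm:rec-petz} and then to compose the steps. Fix $k \in \{2,\dots,\ns-1\}$ and regard the marginal $\rho_{\leftsetnb{k+1}}$ as a state on the three blocks $\leftsetnb{k-1}$, $\{k\}$ and $\{k+1\}$. I would apply \cref{thm:rec-petz} with $\rho \mapsto \rho_{\leftsetnb{k+1}}$, with the (density) reference operator $\sigma \mapsto \rho_{\leftsetnb{k-1}} \otimes \rho_{k,k+1}$, and with the quantum operation $\mcn \mapsto \id_{\leftsetnb{k-1}} \otimes \tr_{k+1}$, where $\tr_{k+1}$ is the partial trace $\linop(\hilb_{k,k+1}) \to \linop(\hilb_{k})$ (so $\mcn$ simply traces out spin $k+1$; hence $\mcn(\sigma) = \rho_{\leftsetnb{k-1}} \otimes \rho_{k}$ and $\mcn(\rho_{\leftsetnb{k+1}}) = \rho_{\leftsetnb{k}}$, the latter by the recursive definition \eqref{eq:rec-petz-marginals-rho-k}). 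Both relative entropies appearing in \eqref{eq:rec-petz-relent} are finite here, since $\supp \rho_{\leftsetnb{k+1}} \subseteq \supp \rho_{\leftsetnb{k-1}} \otimes \supp \rho_{k,k+1}$, and by \eqref{eq:rec-intro-mi} they evaluate to $\relent(\rho \| \sigma) = I(\leftsetnb{k-1}:k,k+1)_{\rho}$ and $\relent(\mcn(\rho)\|\mcn(\sigma)) = I(\leftsetnb{k-1}:k)_{\rho}$. Thus hypothesis \eqref{eq:rec-petz-marginals-mi} is precisely the saturation condition \eqref{eq:rec-petz-relent} for this data, and \cref{thm:rec-petz} produces a CPTP recovery map, in particular the Petz map $\petzrec_{\sigma,\mcn}$, with $\rho_{\leftsetnb{k+1}} = \petzrec_{\sigma,\mcn}(\mcn(\rho_{\leftsetnb{k+1}})) = \petzrec_{\sigma,\mcn}(\rho_{\leftsetnb{k}})$.

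It then remains to recognise $\petzrec_{\sigma,\mcn}$ as $\id_{\leftsetnb{k-1}} \otimes \cptprec_{k}$ with $\cptprec_{k} = \petzrec_{\rho_{k,k+1},\tr_{k+1}}$. This is a direct evaluation of the explicit formula \eqref{eq:rec-petz-map}: everything factorises across the bipartition $\leftsetnb{k-1} \,:\, \{k,k+1\}$, with $\sigma^{1/2} = \rho_{\leftsetnb{k-1}}^{1/2} \otimes \rho_{k,k+1}^{1/2}$, $\mcn(\sigma)^{-1/2} = \rho_{\leftsetnb{k-1}}^{-1/2} \otimes \rho_{k}^{-1/2}$, and $\mcn^{\adjs} = \id_{\leftsetnb{k-1}} \otimes (\tr_{k+1})^{\adjs}$, where $(\tr_{k+1})^{\adjs}$ is the adjoint of the partial trace, $F \mapsto F \otimes \idm_{k+1}$. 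Plugging a product operator $\omega_{\leftsetnb{k-1}} \otimes \omega_{k}$ into \eqref{eq:rec-petz-map} and cancelling $\rho_{\leftsetnb{k-1}}^{1/2}\rho_{\leftsetnb{k-1}}^{-1/2}$ on the support of $\rho_{\leftsetnb{k-1}}$ (which is all \eqref{eq:rec-petz-map} requires, being claimed only on the support of $\mcn(\rho)$) yields $\omega_{\leftsetnb{k-1}} \otimes \petzrec_{\rho_{k,k+1},\tr_{k+1}}(\omega_{k})$; extending by linearity over $\linop(\supp\rho_{\leftsetnb{k-1}} \otimes \supp\rho_{k})$ gives $\petzrec_{\sigma,\mcn} = \id_{\leftsetnb{k-1}} \otimes \cptprec_{k}$ there, and $\rho_{\leftsetnb{k}}$ lies in this subspace. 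Combined with the previous paragraph, this would establish the first identity of \eqref{eq:rec-petz-marginals-rec} for every $k \in \{2,\dots,\ns-1\}$.

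The full-chain reconstruction then follows by telescoping. Writing $\cptprec'_{j} := \id_{\leftsetnb{j-1}} \otimes \cptprec_{j}$, an induction on $m$ using the identity just established gives $\rho_{\leftsetnb{m}} = \cptprec'_{m-1}\cptprec'_{m-2}\cdots\cptprec'_{2}(\rho_{12})$ for all $m \in \{2,\dots,\ns\}$, with base case $\rho_{\leftsetnb{2}} = \rho_{12}$; taking $m = \ns$ and absorbing the identity paddings into the notation yields $\rho = \rho_{\leftsetnb{\ns}} = \cptprec_{\ns-1}\cptprec_{\ns-2}\cdots\cptprec_{2}(\rho_{12})$.

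I expect the only genuinely delicate point to be the factorisation $\petzrec_{\sigma,\mcn} = \id_{\leftsetnb{k-1}} \otimes \petzrec_{\rho_{k,k+1},\tr_{k+1}}$ together with the careful tracking of which subsystems (and supports) each map acts on; by contrast, the translation of \eqref{eq:rec-petz-marginals-mi} into the equality case of the data processing inequality is automatic once the correct reference operator $\sigma = \rho_{\leftsetnb{k-1}} \otimes \rho_{k,k+1}$ has been chosen, since $I(\,\cdot\,:\,\cdot\,)$ is exactly a relative entropy against such a product.
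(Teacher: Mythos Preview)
Your proof is correct and is essentially the same argument as the paper's, just presented at a slightly different level of abstraction. The paper applies \cref{thm:rec-petz-bipartite} (the bipartite Petz recovery corollary) with $X=X'=\leftsetnb{k-1}$, $\mcn_X=\id$, $Y=\hilb_{k,k+1}$, $Y'=\hilb_k$, $\mcn_Y=\tr_{k+1}$, which already packages the factorisation $\petzrec_{\sigma,\mcn}=\id_{\leftsetnb{k-1}}\otimes\petzrec_{\rho_{k,k+1},\tr_{k+1}}$; you instead bypass that corollary, invoke \cref{thm:rec-petz} directly with the same choice $\sigma=\rho_{\leftsetnb{k-1}}\otimes\rho_{k,k+1}$, and verify the factorisation by hand from the explicit formula \eqref{eq:rec-petz-map}. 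Since \cref{thm:rec-petz-bipartite} is itself derived in the paper by exactly this substitution into \cref{thm:rec-petz}, the two routes coincide in content.
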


\begin{proof}
  For $k \in \{2, \dots, \ns - 1\}$, apply \cref{thm:rec-petz-bipartite} with $X = X' = \leftsetnb{k-1}$, $\mcn_{X} = \id_{X}$, $Y = \hilb_{k,k+1}$, $Y' = \hilb_{k}$ and $\mcn_{Y}(\cdot) = \tr_{k+1}(\cdot)$.
\end{proof}

In a similar way, if the partial traces do not decrease certain operator Schmidt ranks, their actions can be reverted with state reconstruction \parencite{Baumgratz2013,Oseledets2010a}:

\begin{thm}
  \label{thm:rec-mat-marginals}
  Let $\rho \in \linop(\hilb_{1 \dots \ns})$ be a linear operator.
  If the conditions
  \begin{align}
    \osr({k-1}:{k})_{\rho}
    &= \osr(\leftsetnb{k-1}: {k,k+1})_{\rho}
      \quad\quad (k \in \{3, \dots, \ns-1\})
      \label{eq:rec-mat-marginals-osr}
  \end{align}
  are satisfied, the marginal states $\rho_{\leftsetnb{k}} = \tr_{\rightsetnb{k}}(\rho)$ are given by
\begin{align}
  \label{eq:rec-mat-marginals-rec}
  \rho_{\leftsetnb{k+1}}
  &=
    (\id \otimes \recmap_{k})(\rho_{\leftsetnb{k}}),
  &
    \rho
  &=
    \recmap_{\ns-1} \recmap_{\ns-2} \dots \recmap_{4} \recmap_{3}(\rho_{123})
\end{align}
where $k \in \{3, \dots, \ns-1\}$ and $\linrec_{k} \in \sop{\hilb_{k}}{\hilb_{k,k+1}}$.
The maps are given by $\linrec_{k} = \matrec_{\mathcal L_{k},\tr_{k+1}}$ (\cref{thm:rec-reconstr-bipartite}), $\mathcal L_{k} = \mcm_{\sigma_{k}}^{\tra}$ with $\sigma_{k} = \rho_{k-1,k,k+1}$ and $\mcm_{\sigma_{k}} \in \linmap(\linop(\hilb_{k,k+1}); \linop(\hilb_{k-1}))$ (\cref{eq:rec-intro-state-to-map}).
\end{thm}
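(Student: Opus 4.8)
The plan is to derive \cref{thm:rec-mat-marginals} from the matrix reconstruction result \cref{prop:rec-matrixdec} and the dictionary \cref{lem:rec-state-as-matrix-composition}, in close analogy with the way \cref{thm:rec-petz-marginals} follows from \cref{thm:rec-petz-bipartite}. The only genuinely new point compared with the Petz case is that the stated map $\linrec_k$ must be shown to depend on the three-site marginal $\sigma_k = \rho_{k-1,k,k+1}$ alone; this is why the rank condition \eqref{eq:rec-mat-marginals-osr} involves the two-site marginal $\rho_{k-1,k}$ rather than the weaker condition one would read off from applying \cref{thm:rec-reconstr-bipartite} directly to the cut $\leftsetnb{k-1}\vert\{k,k+1\}$. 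I read $\osr({k-1}:{k})_{\rho}$ as the operator Schmidt rank of $\rho_{k-1,k}$ across $\{k-1\}\vert\{k\}$ and $\osr(\leftsetnb{k-1}:k,k+1)_{\rho}$ as that of $\rho_{\leftsetnb{k+1}}$ across $\leftsetnb{k-1}\vert\{k,k+1\}$; by \cref{cor:rec-osr-data-processing} the former never exceeds the latter, so \eqref{eq:rec-mat-marginals-osr} asserts their equality.

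Fix $k\in\{3,\dots,\ns-1\}$; the recursion step to establish is $\rho_{\leftsetnb{k+1}} = (\id_{\leftsetnb{k-1}}\otimes\linrec_k)(\rho_{\leftsetnb{k}})$. Using $\rho_{\leftsetnb k} = (\id_{\leftsetnb{k-1}}\otimes\tr_{k+1})(\rho_{\leftsetnb{k+1}})$ from \eqref{eq:rec-petz-marginals-rho-k} together with \cref{lem:rec-state-as-matrix-composition} for the cut $\leftsetnb{k-1}\vert\{k,k+1\}$, and the bijectivity of $\rho\mapsto\mcm_\rho$, the step is equivalent to the matrix identity $\mcm_{\rho_{\leftsetnb{k+1}}}(\tr_{k+1})^{\tra}\linrec_k^{\tra} = \mcm_{\rho_{\leftsetnb{k+1}}}$. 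A second use of \cref{lem:rec-state-as-matrix-composition}, now for $\sigma_k = (\tr_{1\dots k-2}\otimes\id)(\rho_{\leftsetnb{k+1}})$, gives $\mcm_{\sigma_k} = \tr_{1\dots k-2}\,\mcm_{\rho_{\leftsetnb{k+1}}}$. Abbreviating $M = \mcm_{\rho_{\leftsetnb{k+1}}}^{\tra}$, $L = \tr_{k+1}$, $R = (\tr_{1\dots k-2})^{\tra}$, one then has $\mathcal L_k = \mcm_{\sigma_k}^{\tra} = MR$ and $\tr_{k+1}\mathcal L_k = LMR$, so $\linrec_k = MR\,(LMR)^{\pinv}$ and $\linrec_k\tr_{k+1} = MR\,(LMR)^{\pinv}\,L$; taking the transpose of the matrix identity above turns it into
\begin{align*}
  M = MR\,(LMR)^{\pinv}\,LM .
\end{align*}

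By \cref{prop:rec-matrixdec}, and because $(LMR)^{\pinv}$ has the property $CXC = C$ with $C = LMR$ required there, this holds provided $\rk(LMR) = \rk(M)$. Here $\rk(M) = \rk(\mcm_{\rho_{\leftsetnb{k+1}}}) = \osr(\leftsetnb{k-1}:k,k+1)_{\rho}$ by \eqref{eq:rec-intro-osr-rank}, while one more application of \cref{lem:rec-state-as-matrix-composition}, to $\rho_{k-1,k} = (\tr_{1\dots k-2}\otimes\tr_{k+1})(\rho_{\leftsetnb{k+1}})$, gives $LMR = \tr_{k+1}\,\mcm_{\rho_{\leftsetnb{k+1}}}^{\tra}\,(\tr_{1\dots k-2})^{\tra} = \mcm_{\rho_{k-1,k}}^{\tra}$, so $\rk(LMR) = \osr({k-1}:{k})_{\rho}$. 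Thus $\rk(LMR) = \rk(M)$ is precisely \eqref{eq:rec-mat-marginals-osr}, the recursion step holds, and $\linrec_k$ is visibly a function of $\sigma_k = \rho_{k-1,k,k+1}$ only (through $\mathcal L_k = MR$). Iterating $\rho_{\leftsetnb{k+1}} = (\id\otimes\linrec_k)(\rho_{\leftsetnb k})$ from the base case $\rho_{\leftsetnb 3} = \rho_{123}$ up to $k = \ns-1$, and absorbing the left identity factors into the notation as in \eqref{eq:rec-petz-marginals-rec}, yields $\rho = \rho_{\leftsetnb{\ns}} = \linrec_{\ns-1}\linrec_{\ns-2}\dots\linrec_3(\rho_{123})$.

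I expect the only real difficulty to be bookkeeping: tracking which partial trace acts on which tensor factor, and keeping the basis-transpose operations $(\cdot)^{\tra}$ straight — recall that $\mcm^{\tra}$ is the transpose in the fixed operator basis, so it is rank-preserving and reorders freely under products, which is what makes the chain of substitutions above mechanical. Once $M$, $L$, $R$ are chosen as above, everything reduces to the single invocation of \cref{prop:rec-matrixdec}, exactly as \cref{thm:rec-petz-marginals} reduces to a single invocation of \cref{thm:rec-petz-bipartite}.
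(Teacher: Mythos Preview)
Your proof is correct and follows essentially the same route as the paper's: the paper simply invokes \cref{thm:rec-reconstr-bipartite} with $X=\leftsetnb{k-1}$, $X'=\{k-1\}$, $\mcn_X=\tr_{\leftsetnb{k-2}}$, $Y=\hilb_{k,k+1}$, $Y'=\hilb_k$, $\mcn_Y=\tr_{k+1}$ and then uses \cref{eq:rec-reconstr-partial}, whereas you unfold that invocation and apply \cref{prop:rec-matrixdec} and \cref{lem:rec-state-as-matrix-composition} directly. Your choice $M=\mcm_{\rho_{\leftsetnb{k+1}}}^{\tra}$, $L=\tr_{k+1}$, $R=(\tr_{1\dots k-2})^{\tra}$ is exactly the matrix-level content of that substitution, and your identification of $\rk(LMR)=\rk(M)$ with \eqref{eq:rec-mat-marginals-osr} via $LMR=\mcm_{\rho_{k-1,k}}^{\tra}$ is the same computation the paper does implicitly through the rank condition \eqref{eq:rec-reconstr-bipartite-rkcond}.
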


\begin{proof}
  For $k \in \{3, \dots, n-1\}$, apply \cref{thm:rec-reconstr-bipartite} with $X = \leftsetnb{k-1}$, $X' = \{k-1\}$, $\mcn_{X}(\cdot) = \tr_{\leftsetnb{k-2}}(\cdot)$, $Y = \hilb_{k,k+1}$, $Y' = \hilb_{k}$, $\mcn_{Y}(\cdot) = \tr_{k+1}(\cdot)$.
  Recall that $\sigma_{k} = \rho_{k-1,k,k+1} = (\tr_{\leftsetnb{k-2}} \otimes \id)(\rho_{\leftsetnb{k+1}})$ implies $\mcm_{\sigma_{k}} = (\tr_{\leftsetnb{k-2}}) \mcm_{\rho_{\leftsetnb{k+1}}}$ where $\mcm_{\sigma_{k}} \in \sop{\hilb_{k,k+1}}{\hilb_{k-1}}$ and
  $\mcm_{\rho_{\leftsetnb{k+1}}} \in \linmap(\linop(\hilb_{k,k+1});$ $\linop(\leftsetnb{k-1}))$
  (\cref{lem:rec-state-as-matrix-composition}).
  Therefore, the reconstruction map is given by $\linrec_{k} = \matrec_{\mathcal L_{k}, \tr_{k+1}}$ with $\mathcal L_{k} = \mcm_{\rho_{\leftsetnb{k+1}}}^{\tra} [\tr_{\leftsetnb{k-2}}]^{\tra} = \mcm_{\sigma_{k}}^{\tra}$. 
\end{proof}

The result from \cref{thm:rec-mat-marginals} has been obtained previously in \parencite{Baumgratz2013} under the name \emph{reconstruction of quantum states} or \emph{MPO reconstruction}.
For a discussion of futher related work \parencite{Oseledets2010a}, see \cref{rem:rec-petz-implies-mat-longrange-prior-work}. 

\begin{rem}[Efficient recovery/reconstruction]
  \label{rem:rec-efficient}

  We call a recovery or reconstruction method to obtain $\rho \in \qs(\hilb_{1\dots\ns})$ \emph{efficient} if it satisfies the following conditions.
  The method provides an efficient representation of $\rho$ (cf.\ \cref{sec:rec-intro-mpo}).
  This representation of $\rho$ can be constructed from suitable input data in at most $\poly(\ns)$ computational time.
  As a consequence, the size of the input data may be at most $\poly(\ns)$ (i.e.~at most $\poly(\ns)$ complex numbers).
  The necessary input data may be obtained from at most $\poly(\ns)$ different tensor product expectation values, i.e.~expectation values of the form $\tr[(A_{1} \otimes \dots \otimes A_{\ns} \otimes A') \mcn(\rho \otimes \rho')]$ where $A_{k} \in \linop(\hilb_{k})$, $A' \in \linop(\hilb_{Y'})$, $\rho' \in \qs(\hilb_{Y'})$ and $Y'$ is an ancilla system of dimension $d_{Y'} \bigo{\poly(n)}$.%
  \footnote{
    We introduce the ancilla system to capture the precise definition of the measurements in \cref{sec:rec-spinchain-from-long-range-meas}.
  }
  The quantum operation $\mcn$ is constructed from at most $\poly(\ns)$ quantum operations whose input and output dimension is at most $\poly(\ns)$.
  This severely restricts the available measurements because the number of two-qubit gates required to implement e.g.\ an arbitrary $n$-qubit unitary is exponential in~$n$~\parencite{Nielsen2007}. 
  
  Standard quantum state tomography is not efficient because it fails to satisfy any of these criteria. For example, in quantum state tomography $\sim \exp(\ns)$ expectation values are required in order to determine $\rho$. 
  
  Clearly, \cref{thm:rec-petz-marginals} and \cref{thm:rec-mat-marginals} satisfy all of these criteria because efficient representations are provided and the necessary input data consists only of two- and three-spin marginals of $\rho$.
  \Cref{lem:rec-spinchain-seq-prep} also provides efficient MPO and PMPS representations for \cref{thm:rec-petz-marginals} and an efficient MPO representation for \cref{thm:rec-mat-marginals}. 
\end{rem}

Note that the operator Schmidt rank condition \eqref{eq:rec-mat-marginals-osr} is different from the mutual information condition \eqref{eq:rec-petz-marginals-mi} in that it contains $\{k-1\}$ instead of $\leftsetnb{k-1}$ on the very left.
If the partial trace $\tr_{\leftsetnb{k-2}}$ which maps $\leftsetnb{k-1}$ onto $\{k-1\}$ was left out, the state $\rho_{\leftsetnb{k+1}}$ would be needed to construct $\linrec_{k}$.
Construction of $\linrec_{\ns-1}$ would need $\rho_{\leftsetnb{\ns}} = \rho$ and the reconstruction would be neither efficient nor useful. 
Despite this difference, we show that the premise of state recovery (\cref{thm:rec-petz-marginals}) implies the premise of state reconstruction (\cref{thm:rec-mat-marginals}):

\begin{thm}
  \label{thm:rec-petz-implies-mat-marginals}

  Let $\rho \in \qs(\hilb_{1\dots \ns})$ a quantum state. The conditions
  \begin{align}
    \label{eq:rec-petz-mat-marginals-mi}
    I(\leftsetnb{k-1}:{k})_{\rho}
    &= I(\leftsetnb{k-1}: {k,k+1})_{\rho}
      \quad\quad (k \in \{2, \dots, \ns-1\}),
  \end{align}
  imply
  \begin{align}
    \osr({k-1}:{k})_{\rho}
    &= \osr(\leftsetnb{k-1}: {k,k+1})_{\rho}
      \quad\quad (k \in \{3, \dots, \ns-1\}).
      \label{eq:rec-petz-mat-marginals-osr}
  \end{align}
  In other words, if the state $\rho$ can be recovered with Petz recovery from the marginals $\rho_{k,k+1}$ ($k \in \{1, \dots, \ns-1\}$), then it can also be reconstructed with state reconstruction from the marginals $\rho_{k-1,k,k+1}$ ($k \in \{2, \dots, \ns-1\}$).
\end{thm}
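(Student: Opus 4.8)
The plan is to fix $k\in\{3,\dots,\ns-1\}$ and, to obtain \eqref{eq:rec-petz-mat-marginals-osr}, i.e.\ $\osr({k-1}:{k})_\rho=\osr(\leftsetnb{k-1}:{k,k+1})_\rho$, to pass through the intermediate quantity $\osr(\leftsetnb{k-1}:{k})_\rho$ and prove the two equalities
\[
\text{(i)}\quad \osr({k-1}:{k})_\rho=\osr(\leftsetnb{k-1}:{k})_\rho,
\qquad
\text{(ii)}\quad \osr(\leftsetnb{k-1}:{k})_\rho=\osr(\leftsetnb{k-1}:{k,k+1})_\rho .
\]
Throughout I use that the mutual information and the operator Schmidt rank of a group of sites depend only on the corresponding reduced state, so (i) involves only $\rho_{\leftsetnb{k}}$ and $\rho_{k-1,k}$, while (ii) involves only $\rho_{\leftsetnb{k}}$ and $\rho_{\leftsetnb{k+1}}$.

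Step (ii) would be immediate from \cref{thm:rec-comparison} applied with $X=X'=\leftsetnb{k-1}$, $\mcn_X=\id$, $Y=\hilb_{k,k+1}$, $Y'=\hilb_{k}$, $\mcn_Y=\tr_{k+1}$, and $\rho\mapsto\rho_{\leftsetnb{k+1}}$, $\tau\mapsto\rho_{\leftsetnb{k}}$: its premise $I(\leftsetnb{k-1}:{k})=I(\leftsetnb{k-1}:{k,k+1})$ is exactly \eqref{eq:rec-petz-mat-marginals-mi} at index $k$, and its conclusion is (ii). (Equivalently: \cref{thm:rec-petz-marginals} supplies $\rho_{\leftsetnb{k+1}}=(\id_{\leftsetnb{k-1}}\otimes\cptprec_{k})(\rho_{\leftsetnb{k}})$ with $\cptprec_{k}$ CPTP, and applying \cref{cor:rec-osr-data-processing} once to $\cptprec_{k}$ and once to $\tr_{k+1}$ squeezes the two ranks together.)

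Step (i) is the substantive part. The inequality ``$\le$'' is \cref{cor:rec-osr-data-processing} applied to $\tr_{\leftsetnb{k-2}}$, so I must prove ``$\ge$''. I would feed \eqref{eq:rec-petz-mat-marginals-mi} at index $k-1$, namely $I(\leftsetnb{k-2}:{k-1})=I(\leftsetnb{k-2}:{k-1,k})$, into \cref{thm:petz-structure} with $A=\leftsetnb{k-2}$, $B=\{k-1\}$, $C=\{k\}$ applied to $\rho_{\leftsetnb{k}}$. This yields a splitting $\hilb_{k-1}=\bigoplus_{j}\hilb_{L_{j}}\otimes\hilb_{R_{j}}$, a probability distribution $\{p_{j}\}$ and states $\sigma_{j}$ on $\hilb_{\leftsetnb{k-2}}\otimes\hilb_{L_{j}}$ and $\omega_{j}$ on $\hilb_{R_{j}}\otimes\hilb_{k}$ with $\rho_{\leftsetnb{k}}=\bigoplus_{j}p_{j}\,\sigma_{j}\otimes\omega_{j}$; tracing out $\leftsetnb{k-2}$ gives $\rho_{k-1,k}=\bigoplus_{j}p_{j}\,\widehat\sigma_{j}\otimes\omega_{j}$ where $\widehat\sigma_{j}:=\tr_{\leftsetnb{k-2}}\sigma_{j}\neq0$. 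I then pick for each $j$ a Schmidt decomposition $\omega_{j}=\sum_{t}E^{j}_{t}\otimes G^{j}_{t}$ across $R_{j}\,|\,\{k\}$, so that the $E^{j}_{t}$ are linearly independent. Using $\osr=\rk(\mcm_{\cdot})$ from \eqref{eq:rec-intro-osr-rank} and fixing orthonormal operator bases as in \cref{sec:rec-notation}, the decompositions above express $\mcm_{\rho_{\leftsetnb{k}}}$ (for the cut $\leftsetnb{k-1}\,|\,\{k\}$) and $\mcm_{\rho_{k-1,k}}$ (for the cut $\{k-1\}\,|\,\{k\}$) as $\sum_{j,t}\ell_{jt}\,r_{jt}^{\tra}$ and $\sum_{j,t}\widehat\ell_{jt}\,r_{jt}^{\tra}$ with the \emph{same} coordinate vectors $r_{jt}$ of $G^{j}_{t}$, where $\ell_{jt}$ and $\widehat\ell_{jt}$ are the coordinate vectors of $\sigma_{j}\otimes E^{j}_{t}$ and of $\widehat\sigma_{j}\otimes E^{j}_{t}$. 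Since the subspaces $\hilb_{L_{j}}\otimes\hilb_{R_{j}}$ are mutually orthogonal inside $\hilb_{k-1}$, each $\sigma_{j}$ and $\widehat\sigma_{j}$ is nonzero, and the $E^{j}_{t}$ are linearly independent for each $j$, both families $\{\ell_{jt}\}_{j,t}$ and $\{\widehat\ell_{jt}\}_{j,t}$ are linearly independent; hence $\rk(\mcm_{\rho_{\leftsetnb{k}}})=\rk(\mcm_{\rho_{k-1,k}})$, both being the rank of the matrix with columns $r_{jt}$, which is (i).

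Combining (i) and (ii) for every $k\in\{3,\dots,\ns-1\}$ yields \eqref{eq:rec-petz-mat-marginals-osr}. I expect the only genuine obstacle to be the rank bookkeeping in step (i): one must resist concluding that the operator Schmidt rank is additive over the blocks of the direct sum, since the $\{k\}$-side of the cut is \emph{not} block-decomposed, so $\mcm_{\rho_{\leftsetnb{k}}}$ is a vertically stacked, not a block-diagonal, matrix; the correct observation is that $\mcm_{\rho_{\leftsetnb{k}}}$ and $\mcm_{\rho_{k-1,k}}$ share the same right factors $r_{jt}$ while both have linearly independent left factors, so both have rank $\rk([\,r_{jt}\,]_{j,t})$. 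Invoking the structural \cref{thm:petz-structure}, rather than only the recovery map of \cref{thm:rec-petz-marginals}, is what makes this work, because the splitting of $\hilb_{k-1}$ is precisely what allows site $k-1$ to move from the $\{k,k+1\}$-side to the $\leftsetnb{k-1}$-side of the relevant cut without changing the rank.
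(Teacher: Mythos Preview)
Your argument is correct, but it takes a noticeably different route from the paper for step~(i). The paper does not invoke the structure theorem (\cref{thm:petz-structure}) at all; instead it uses the symmetry of the conditional mutual information to observe that the hypothesis \eqref{eq:rec-petz-mat-marginals-mi} at index $k-1$, namely $I(\leftsetnb{k-2}:k-1)=I(\leftsetnb{k-2}:k-1,k)$, is the statement $I(\leftsetnb{k-2}:k\mid k-1)=0$, and by the dual form of the CMI this equals $I(\leftsetnb{k-1}:k)-I(k-1:k)$. Hence $I(k-1:k)=I(\leftsetnb{k-1}:k)=I(\leftsetnb{k-1}:k,k+1)$, where the last equality is the hypothesis at index $k$. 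A \emph{single} application of \cref{thm:rec-comparison} with $X=\leftsetnb{k-1}$, $X'=\{k-1\}$, $Y=\{k,k+1\}$, $Y'=\{k\}$ then gives the desired operator Schmidt rank equality directly.

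Your approach works, but it is heavier: you pass through the Hayden--Jozsa--Petz--Winter block decomposition and do an explicit rank bookkeeping (which, incidentally, you handle carefully and correctly---your caveat about the $\{k\}$-side not being block-decomposed is exactly the right point). What this buys you is a structural picture of why the rank is preserved; what the paper's approach buys is brevity, since the same black box (\cref{thm:rec-comparison}) you already used for step~(ii) also dispatches step~(i) once the mutual-information equality $I(k-1:k)=I(\leftsetnb{k-1}:k)$ has been extracted by the CMI shuffle. In effect, \cref{thm:petz-structure} is the engine behind \cref{thm:rec-comparison} anyway, so your proof unpacks one layer that the paper leaves wrapped.
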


\begin{proof}
  \Cref{eq:rec-petz-mat-marginals-mi} implies ($k \in \{2, \dots, \ns-1\}$, use \cref{eq:rec-intro-cmi})
  \begin{subequations}
    \label{eq:rec-petz-mat-marginals-proof-cmi}
    \begin{align}
      0
      & = I(\leftsetnb{k-1}:{k,k+1}) - I(\leftsetnb{k-1}:{k})
        = I(\leftsetnb{k-1}:{k+1}|{k})
      \\
      &= I(\leftsetnb{k-1},{k}:{k+1}) - I({k}:{k+1}).
      \\
      &= I(\leftsetnb{k}:{k+1}) - I({k}:{k+1}).
    \end{align}
  \end{subequations}
  We shift the index of the last equation by one and obtain, for $k \in \{3, \dots, \ns-1\}$,
  \begin{align}
    \label{eq:rec-petz-mat-marginals-proof-mi}
    I({k-1}:{k})
    & = I(\leftsetnb{k-1}:{k})
      = I(\leftsetnb{k-1}:{k,k+1}).
  \end{align}
  This mutual information equality implies the corresponding operator Schmidt rank equality (\cref{thm:rec-comparison}). 
\end{proof}

\begin{rem}
  For $\ns = 4$, \cref{thm:rec-petz-implies-mat-marginals} reduces to ``\eqref{eq:rec-comp-cond-mi-A-B-C-D} implies \cref{eq:rec-comp-cond-osr}'' from \cref{lem:rec-comp-fourpartite-implications} if one takes into account that ``$I(B:C) = I(AB:CD)$'' \eqref{eq:rec-comp-cond-mi-A-B-C-D} is equivalent to ``$I(A:B) = I(A:BC)$ and $I(AB:C) = I(AB:CD)$'' (see the following \cref{lem:rec-quadripartite-mi-shuffle-equiv}).
\end{rem}

\begin{lem}
  \label{lem:rec-quadripartite-mi-shuffle-equiv}
  
  $I(B:C) = I(AB:CD)$ holds if and only if $I(A:B) = I(A:BC)$ and $I(AB:C) = I(AB:CD)$.
\end{lem}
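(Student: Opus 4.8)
The plan is to express the gap $I(AB:CD) - I(B:C)$ as a sum of two conditional mutual informations, each of which is nonnegative by strong subadditivity, and then read off the two claimed equalities as precisely the conditions under which these two terms vanish. Recall that $I(B:C) \le I(AB:CD)$ always holds (monotonicity of the mutual information under the partial traces $\tr_{A}$ and $\tr_{D}$), so the gap is a nonnegative quantity to begin with.

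Concretely, I would start from the telescoping identity
\begin{align*}
  I(AB:CD) - I(B:C)
  &= \bigl[I(AB:CD) - I(AB:C)\bigr] + \bigl[I(AB:C) - I(B:C)\bigr].
\end{align*}
By the definition of the conditional mutual information (\cref{eq:rec-intro-cmi}), the first bracket equals $I(AB:D|C)$ and the second bracket equals $I(A:C|B)$, and both are nonnegative since they are instances of the SSA~\eqref{SSA}. Hence the left-hand side vanishes if and only if $I(AB:D|C) = 0$ and $I(A:C|B) = 0$ simultaneously.

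It then remains only to translate each of these two vanishing conditions back into mutual-information equalities, again via \cref{eq:rec-intro-cmi}: since $I(A:C|B) = I(A:BC) - I(A:B)$, we have $I(A:C|B) = 0 \Leftrightarrow I(A:B) = I(A:BC)$; and since $I(AB:D|C) = I(AB:CD) - I(AB:C)$, we have $I(AB:D|C) = 0 \Leftrightarrow I(AB:C) = I(AB:CD)$. Chaining these equivalences gives that $I(B:C) = I(AB:CD)$ holds if and only if both $I(A:B) = I(A:BC)$ and $I(AB:C) = I(AB:CD)$ hold, which is the assertion.

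There is no genuine obstacle here; the only point requiring care is choosing the grouping of the telescoping sum so that each bracket is manifestly a conditional mutual information with the conditioning system in the correct slot — adding $A$ into the left argument while conditioning on $B$, and adding $D$ into the right argument while conditioning on $C$ — after which the result follows from the chain rule and the nonnegativity of conditional mutual information alone.
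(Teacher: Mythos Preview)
Your proof is correct and follows essentially the same route as the paper: both arguments use the intermediate quantity $I(AB:C)$ to split the gap $I(AB:CD)-I(B:C)$ into the two nonnegative pieces $I(A:C|B)$ and $I(AB:D|C)$, and then invoke the chain rule \eqref{eq:rec-intro-cmi} to rewrite $I(A:C|B)=0$ as $I(A:B)=I(A:BC)$. The only cosmetic difference is that the paper handles the two implications separately via the sandwich $I(B:C)\le I(AB:C)\le I(AB:CD)$, whereas you package the equivalence in one step as a sum of nonnegative terms.
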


\begin{proof}
  ``$\Rightarrow$'':
  Let $I(B:C) = I(AB:CD)$ hold.
  We have $I(B:C) \le I(AB:C) \le I(AB:CD) = I(B:C)$, which implies that $I(B:C) = I(AB:C) = I(AB:CD)$.
  As a consequence,
  \begin{align}
    \label{eq:rec-quadripartite-mi-shuffle-equiv-proof}
    0 = I(AB:C) - I(B:C) = I(A:C|B) = I(A:BC) - I(A:B)
  \end{align}
  also holds (\cref{eq:rec-intro-cmi}).
  This already shows the proposed conclusion.

  ``$\Leftarrow$'':
  Let $I(A:B) = I(A:BC)$ and $I(AB:C) = I(AB:CD)$ hold.
  The former equality implies \cref{eq:rec-quadripartite-mi-shuffle-equiv-proof} and this shows that $I(B:C) = I(AB:C) = I(AB:CD)$ holds.
\end{proof}

\subsection{Long-ranged measurements
  \label{sec:rec-spinchain-from-long-range-meas}}

In this subsection, we generalize recovery and reconstruction to use certain long-range measurements as input and show that successful recovery implies that successful reconstruction is also possible.

\subsubsection{Recovery from long-ranged measurements
  \label{sec:rec-long-range-meas-recov}}

Recovery and reconstruction of a spin chain state from few-body marginals required that correlations (as measured by the mutual information or the operator Schmidt rank) do not decrease under the following partial traces (\cref{fig:rec-longrange-petz-meas}):
\begin{align}
  \rho_{\leftsetnb{k}} = (\id_{X_{k}} \otimes  \tr_{k+1})(\rho_{\leftsetnb{k+1}}) \quad\quad (k \in \{2, \dots, \ns-1\}),
  \quad\quad
  \rho_{\leftsetnb{\ns}} = \rho.
  \label{eq:rec-petz-marginals-rho-k-repeated}
\end{align}
In order to incorporate long-range measurements, we introduce ancillary systems $\rightsetp{k}$ ($k \in \{0, \dots, \ns\}$, $d_{\rightsetp{0}} = d_{\rightsetp{\ns}} = 1$), quantum operations $\issop{\mct_{k}}{\hilbsysand{k}\rightsetp{k}}{\rightsetp{k-1}}$ and define $\rho_{k}  \in \qs(\leftsetnb{k}\sysand\rightsetp{k})$ via (\cref{fig:rec-longrange-petz-meas})
\begin{align}
  \rho_{k-1} = (\id_{X_{k-1}} \otimes  \mct_{k})(\rho_{k}) \quad\quad (k \in \{2, \dots, \ns\}),
  \quad\quad
  \rho_{\ns} = \rho.
  \label{eq:rec-petz-longrange-rho-k-intro}
\end{align}%
The relation between long-range measurements and the maps $\mct_{k}$ is explained \cref{rem:rec-petz-longrange-meas}. 
If the mutual information $I(\leftsetnb{k-1}:{k},\rightsetp{k})_{\rho_{k}}$ does not decrease when $\mct_{k}$ is applied, then \cref{thm:rec-petz-longrange} provides a reconstruction of $\rho$ from the states $\sigma_{k} = \tr_{\leftsetnb{k-1}}(\rho_{k}) \in \qs(\hilbsysand{k} \rightsetp{k})$ (details are specified in the theorem).
Before we state the theorem, we explain that measurements on $\sigma_{k}$ correspond to recursively defined long-range measurements on $\rho$ and we observe that suitable ancilla systems $\rightsetp{k}$ and operations $\mct_{k}$ can be determined for any pure MPS.
\begin{figure}[t]
  \centering
  \isvgc{graphics/reconstruction_vs_recovery_petz_meas}
  \caption{Left: Recursive definition of the reduced states $\rho_{\leftsetnb{k}}$ on $\leftsetdef{k}$ (\cref{eq:rec-petz-marginals-rho-k-repeated}).
    Middle: Spaces on which the recursively defined states $\rho_{k}$ act (\cref{eq:rec-petz-longrange-rho-k-intro,eq:rec-petz-longrange-rho-k}).
    As $\ns = 5$, we have $\rightsetp{5} = \C$.
    Right: Recursive definition of the long-ranged observable $G$ from the local observable $F$ (\cref{rem:rec-petz-longrange-meas}).
}%
  \label{fig:rec-longrange-petz-meas}%
\end{figure}%
\begin{rem}[Recursively defined long-range measurements]
  \label{rem:rec-petz-longrange-meas}
In \cref{thm:rec-petz-longrange} below, the state $\rho$ is reconstructed from the states $\sigma_{k} = \tr_{\leftsetnb{k-1}}(\rho_{k}) \in \qs(\hilbsysand{k}\rightsetp{k})$ ($k \in \{1\dots\ns\}$).
The states $\sigma_{k}$ can be reconstructed from the expectation values $\tr(F_{i}\sigma_{k})$ of a set observables $\{F_{i} \in \linop(\hilbsysand{k}\rightsetp{k})\}_{i}$ which is complete, i.e.\ spans the full vector space $\linop(\hilbsysand{k}\rightsetp{k})$.%
\footnote{
  The observables may be given e.g.\ by the elements of a POVM\@.
}
For simplicity, we drop the index $i$ and denote a possible observable by $F \in \linop(\hilbsysand{k}\rightsetp{k})$. 
The expectation value $\tr(F \sigma_{k})$ corresponds to the following expectation value in $\rho$ (\cref{fig:rec-longrange-petz-meas}):
\begin{subequations}%
\begin{align}%
  \tr(F \sigma_{k})
  &=
    \tr(F (\mct_{k+1}\dots\mct_{\ns})(\rho))
    =
    \tr((\mct_{\ns}^{\adjs}\dots\mct_{k+1}^{\adjs})(F) \rho)
    =
    \tr(G \rho),
  \\
  G
  &= (\mct_{\ns}^{\adjs}\dots\mct_{k+1}^{\adjs})(F) \in \linop(\hilb_{k\dots\ns}).
\end{align}%
\label{eq:rec-petz-longrange-meas}%
\end{subequations}
Here, $\issop{\mct_{k}^{\adjs}}{\rightsetp{k-1}}{\hilbsysand{k}\rightsetp{k}}$ are adjoint superoperators (\cref{sec:rec-notation}) and we used that $F$ is Hermitian, that the channels $\mct_{k}$ map Hermitian operators onto Hermitian operators (because they are completely positive) and that $d_{\rightsetp{\ns}} = 1$.
The observable $F$ describes a measurement on $\sigma_{k}$ and the recursively defined observable $G$ which acts on $\hilb_{k\dots\ns}$ describes a measurement on $\rho$.
\Cref{eq:rec-petz-longrange-meas} hence demonstrates that measurements on $\sigma_{k} \in \qs(\hilbsysand{k}\rightsetp{k})$ correspond to recursively defined long-range measurements on $\rho$.
\end{rem}

\begin{rem}[Example: Pure matrix product states]
  \label{rem:rec-petz-longrange-mps}
  Suppose that we fix $l \ge 1$ and set $\hilb_{\rightsetp{k}} = \hilb_{k+1\dots k+l}$.
  In this case, $\rho_{k} \in \qs(X_{k}\sysand Y'_{k}) = \qs(X_{k+l})$ (cf.\ \cref{eq:rec-petz-longrange-rho-k-intro}). 
  We define $\mct_{k}(\cdot) = \tr_{k+l}(U_{k} \cdot U_{k}^{\adjm})$ where $U_{k} \in \linop(\hilb_{k\dots k+l})$ are unitary operators.
  Suppose further that the unitaries have the property that they transform $\rho_{k}$ into
  \begin{align}
    \label{eq:rec-longrange-petz-unitary-decouple}
    (\idm \otimes U_{k}) \rho_{k} (\idm \otimes U_{k}^{\adjm}) =
    \rho'_{k} \otimes \ketbra{\phi_{k}}{\phi_{k}}
  \end{align}
  where $\rho'_{k} \in \qs(\hilb_{1\dots k+l-1})$ and $\ket{\phi_{k}} \in \hilb_{k+l}$ are states. 
  Then, $\rho_{k-1} = \mct(\rho_{k}) = \rho'_{k}$ and the action of $\mct_{k}$ on $\rho_{k}$ can be reversed with $\mck_{k}(\cdot) = U_{k}^{\adjm} (\cdot \otimes \ketbra{\phi_{k}}{\phi_{k}}) U_{k}$, i.e.\ $\mck_{k}(\rho_{k-1}) = \rho_{k}$.
  As a consequence, the mutual information $I(\leftsetnb{k-1}:{k},\rightsetp{k})_{\rho_{k}}$ does not decrease if $\mct_{k}$ is applied (\cref{thm:rec-petz-bipartite}) and we can apply \cref{thm:rec-petz-longrange} to reconstruct $\rho$.
  If $\rho$ is a pure state which has an MPS representation of bond dimension $D$, then unitaries which satisfy \cref{eq:rec-longrange-petz-unitary-decouple} exist if $l = \lceil \log_{d}(D) \rceil$ where $d = \max_{k} d_{k}$ is the maximal dimension of a single spin \parencite{Cramer2010}.
  In this case, \cref{thm:rec-petz-longrange} provides an efficient reconstruction if $D \bigo{\poly(\ns)}$ (cf.\ \cref{rem:rec-petz-longrange-eff}).
  As a consequence, any state which can be reconstructed with the pure-state reconstruction scheme based on unitary operations from Ref.~\parencite{Cramer2010} can also be reconstructed with \cref{thm:rec-petz-longrange} if the same unitaries are used. 
\end{rem}

\begin{thm}
  \label{thm:rec-petz-longrange}
  Let $\rho \in \qs(\hilb_{1\dots \ns})$ a quantum state. 
  Let $\rightsetp{k}$ ($k \in \{1, \dots, \ns\}$) be ancilla systems with $\dim(\rightsetp{\ns}) = 1$ and choose quantum operations $\issop{\mct_{k}}{\hilbsysand{k} \rightsetp{k}}{\rightsetp{k-1}}$ ($k \in \{2, \dots, \ns\}$).
  Define $\rho_{k} \in \qs(\leftsetnb{k}\sysand\rightsetp{k})$ ($k \in \{1, \dots, \ns\}$) recursively via
  \begin{align}
    \rho_{k-1} = (\id \otimes  \mct_{k})(\rho_{k}) \quad\quad (k \in \{2, \dots, \ns\}),
    \quad\quad
    \rho_{\ns} = \rho.
    \label{eq:rec-petz-longrange-rho-k}
  \end{align}
  If the conditions
  \begin{align}
    \label{eq:rec-petz-longrange-mi}
    I(\leftsetnb{k-1}:\rightsetp{k-1})_{\rho_{k-1}}
    &= I(\leftsetnb{k-1}: {k}, \rightsetp{k})_{\rho_{k}},
      \quad\quad k \in \{2, \dots, \ns\},
  \end{align}
  are satisfied, then
    \begin{align}
      \label{eq:rec-petz-longrange-rec}
      \rho_{k}
      &=
        (\id \otimes \cptprec_{k})(\rho_{k-1})
        \quad\quad (k \in \{2, \dots, \ns\}),
      &
        \rho
      &=
        \rho_{\ns} = \cptprec_{\ns}\cptprec_{\ns-1}\dots \cptprec_{3}\cptprec_{2}(\rho_{1})
    \end{align}
  where the recovery maps $\issop{\recmap_{k}}{\rightsetp{k-1}}{\hilbsysand{k} \rightsetp{k}}$ are given by Petz recovery maps $\cptprec_{k} = \petzrec_{\sigma_{k},\mct_{k}}$ (\cref{thm:rec-petz}) with $\sigma_{k} = \tr_{\leftsetnb{k-1}}(\rho_{k}) \in \qs(\hilbsysand{k} \rightsetp{k})$.
\end{thm}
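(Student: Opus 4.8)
The plan is to derive \eqref{eq:rec-petz-longrange-rec} by applying the bipartite Petz recovery corollary \cref{thm:rec-petz-bipartite} once for each $k\in\{2,\dots,\ns\}$ and then telescoping the resulting local identities; this mirrors the one-line proof of \cref{thm:rec-petz-marginals}, with the partial trace replaced by $\mct_{k}$ and the ancilla $\rightsetp{k}$ inserted. Fix $k$ and apply \cref{thm:rec-petz-bipartite} with $X=X'=\leftsetnb{k-1}$, $\mcn_{X}=\id_{\leftsetnb{k-1}}$, $Y=\hilb_{k}\otimes\rightsetp{k}$, $Y'=\rightsetp{k-1}$ and $\mcn_{Y}=\mct_{k}$. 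With these choices the bipartite input state is $\rho_{k}\in\qs(\leftsetnb{k}\sysand\rightsetp{k})$, its image $(\mcn_{X}\otimes\mcn_{Y})(\rho_{k})$ equals $(\id\otimes\mct_{k})(\rho_{k})=\rho_{k-1}$ by \eqref{eq:rec-petz-longrange-rho-k}, and the hypothesis $I(X':Y')_{\tau}=I(X:Y)_{\rho}$ of \cref{thm:rec-petz-bipartite} reads exactly $I(\leftsetnb{k-1}:\rightsetp{k-1})_{\rho_{k-1}}=I(\leftsetnb{k-1}:{k},\rightsetp{k})_{\rho_{k}}$, i.e.\ assumption \eqref{eq:rec-petz-longrange-mi}.

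Hence \cref{thm:rec-petz-bipartite} supplies quantum operations $\cptprec_{X'}=\petzrec_{\rho_{X},\id}$ and $\cptprec_{Y'}=\petzrec_{\rho_{Y},\mct_{k}}$ with $\rho_{k}=(\cptprec_{X'}\otimes\cptprec_{Y'})(\rho_{k-1})$, where $\rho_{X}=(\rho_{k})_{\leftsetnb{k-1}}$ and $\rho_{Y}=\tr_{\leftsetnb{k-1}}(\rho_{k})=\sigma_{k}$. Thus $\cptprec_{Y'}=\petzrec_{\sigma_{k},\mct_{k}}=:\cptprec_{k}$. On the $X$-side, inspecting \cref{eq:rec-petz-map} with $\mcn=\id$ gives $\petzrec_{\rho_{X},\id}(\omega)=\rho_{X}^{1/2}\rho_{X}^{-1/2}\omega\rho_{X}^{-1/2}\rho_{X}^{1/2}=\omega$ for $\omega$ supported on $\supp\rho_{X}$, so $\cptprec_{X'}$ restricts to the identity there; since $\mct_{k}$ is trace preserving the $\leftsetnb{k-1}$-marginal of $\rho_{k-1}$ equals $\rho_{X}$, whence $(\cptprec_{X'}\otimes\id)(\rho_{k-1})=\rho_{k-1}$ and therefore $\rho_{k}=(\id_{\leftsetnb{k-1}}\otimes\cptprec_{k})(\rho_{k-1})$ with $\cptprec_{k}=\petzrec_{\sigma_{k},\mct_{k}}$. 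This is the first equation of \eqref{eq:rec-petz-longrange-rec}.

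Finally I would chain these $\ns-1$ equalities. Starting from $\rho_{1}\in\qs(\leftsetnb{1}\sysand\rightsetp{1})$ and applying $\id\otimes\cptprec_{2}$, then $\id\otimes\cptprec_{3}$, and so on — each $\cptprec_{k}$ mapping $\rightsetp{k-1}$ to $\hilb_{k}\otimes\rightsetp{k}$ and tacitly tensored with the identity on the already-built factors $\leftsetnb{k-1}$ — one reaches $\rho_{\ns}=\rho$, using $\rightsetp{\ns}=\C$ and $\leftsetnb{\ns}=\{1,\dots,\ns\}$; this gives $\rho=\cptprec_{\ns}\cptprec_{\ns-1}\dots\cptprec_{2}(\rho_{1})$ in the same abuse of notation as \cref{thm:rec-petz-marginals}. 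The argument is essentially bookkeeping: the one point needing care is that the $\leftsetnb{k-1}$-marginal is the same whether computed from $\rho_{k}$ or from $\rho_{k-1}$ (trace preservation of $\mct_{k}$), which is what makes the $X$-side Petz maps consistently trivial, keeps the states $\rho_{X}$ used at successive steps equal to the common marginals, and lets the recursion close at $\rho_{\ns}=\rho$. No estimate beyond \cref{thm:rec-petz-bipartite} (equivalently \cref{thm:rec-petz}) is needed.
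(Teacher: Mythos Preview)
Your proposal is correct and follows exactly the paper's approach: for each $k$, apply \cref{thm:rec-petz-bipartite} with $X=X'=\leftsetnb{k-1}$, $\mcn_X=\id$, $Y=\hilb_{k}\otimes\rightsetp{k}$, $Y'=\rightsetp{k-1}$, $\mcn_Y=\mct_k$, then telescope. You simply spell out two points the paper leaves implicit in its one-line proof: that $\petzrec_{\rho_X,\id}$ acts as the identity on the relevant support (so the $X$-side map drops out), and that trace preservation of $\mct_k$ keeps the $\leftsetnb{k-1}$-marginal unchanged so this reduction is legitimate.
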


\begin{proof}
  For $k \in \{2, \dots, \ns\}$, apply \cref{thm:rec-petz-bipartite} with $X = X' = \leftset{k-1}$, $\mcn_{X} = \id$, $Y = \hilb_{k,\rightsetp{k}}$, $Y' = \rightsetp{k-1}$ and $\mcn_{Y} = \mct_{k}$.
\end{proof}

\begin{rem}
  \Cref{thm:rec-petz-longrange} provides \cref{thm:rec-petz-marginals} by restricting to the special case $\hilb_{\rightsetp{k}} = \hilb_{k+1}$ ($k \in \{1, \dots \ns-1\}$), $\mct_{k}(\cdot) = \tr_{k+1}(\cdot)$ ($k \in \{2, \dots, \ns-1\}$), $\mct_{\ns} = \id$ and using \cref{eq:rec-petz-longrange-rec} only for $k \in \{2, \dots, \ns-1\}$.
\end{rem}

\begin{rem}
  \label{rem:rec-petz-longrange-eff}

  Denote by $d_{Y'} = \max_{k} \dim(\rightsetp{k})$ the maximal dimension of any ancillary system.
  If $d_{Y'} \bigo{\poly(\ns)}$, the recovery scheme from \cref{thm:rec-petz-longrange} is efficient (it satisfies all conditions from \cref{rem:rec-efficient}). 
  \cref{lem:rec-spinchain-seq-prep} provides efficient PMPS and MPO representations of $\rho$.   
\end{rem}

\subsubsection{Reconstruction from long-ranged measurements
  \label{sec:rec-long-range-meas-reconst}}

State reconstruction can be generalized similarly but it requires that additional ancillary systems $\leftsetp{k}$ and linear maps $\mcu_{k}$ are introduced (\cref{fig:rec-mat-longrange-spaces}):

\begin{thm}
  \label{thm:rec-mat-longrange}
  Let $\rho \in \linop(\hilb_{1 \dots \ns})$ be a linear operator.
  Let $\rightsetp{k}$ ($k \in \{0, \dots, \ns\}$) and $\leftsetp{k}$ ($k \in \{0, \dots, \ns-1\}$) ancilla systems with $\dim(\rightsetp{0}) = \dim(\rightsetp{\ns}) = \dim(\leftsetp{0}) = 1$.
  Choose linear maps $\mct_{k} \in \sop{\hilbsysand{k}\rightsetp{k}}{\rightsetp{k-1}}$ ($k \in \{2, \dots, \ns\}$) and  $\mcu_{k} \in \sop{\leftsetnb{k-1}}{\leftsetp{k-1}}$ ($k \in \{1, \dots, \ns\}$, $\mcu_{1} = 1$). As before (\cref{eq:rec-petz-longrange-rho-k}), we define $\rho_{k} \in \linop(\leftsetnb{k}\sysand\rightsetp{k})$ via
  \begin{align}
    \rho_{k-1} = (\id \otimes  \mct_{k})(\rho_{k}) \quad\quad (k \in \{2, \dots, \ns\}),
    \quad\quad
    \rho_{\ns} = \rho.
    \label{eq:rec-mat-longrange-rho-k}
  \end{align}
  In addition, we define (\cref{fig:rec-mat-longrange-spaces})
  \begin{alignat}{3}
    \label{eq:rec-mat-longrange-sigma-k}
    \sigma_{k}
    \;=\;& (\mcu_{k} \otimes \id)(\rho_{k}) &\quad\quad\in\;& \linop(\leftsetp{k-1}\andhilbsysand{k}\rightsetp{k})
      \quad\quad&& (k \in \{1, \dots, \ns),
    \\
    \label{eq:rec-mat-longrange-tau-k}
    \tau_{k}
    \;=\;& (\id \otimes \mct_{k})(\sigma_{k}) &\quad\quad\in\;& \linop(\leftsetp{k-1}\sysand\rightsetp{k-1})
      \quad\quad&& (k \in \{2, \dots, \ns\}).
  \end{alignat}
  If the conditions
  \begin{align}
    \osr(\leftsetp{k-1}:\rightsetp{k-1})_{\tau_{k}}
    &= \osr(\leftsetnb{k-1}: {k}, \rightsetp{k})_{\rho_{k}}
      \quad\quad (k \in \{2, \dots, \ns\})
      \label{eq:rec-mat-longrange-osr}
  \end{align}
  are satisfied, there are linear maps $\linrec_{k} \in \sop{\rightsetp{k-1}}{\hilbsysand{k}\rightsetp{k}}$ ($k \in \{2, \dots, \ns\}$) such that
\begin{align}
  \label{eq:rec-mat-longrange-rec}
  \rho_{k}
  &=
    (\id \otimes \recmap_{k})(\rho_{k-1})
    \quad\quad (k \in \{2, \dots, \ns\}),
  &
    \rho
  &=
    \recmap_{\ns} \recmap_{\ns-1} \dots \recmap_{3} \recmap_{2}(\rho_{1}).
\end{align}
The maps are given by $\linrec_{k} = \matrec_{\mathcal L_{k},\mct_{k}}$ (\cref{thm:rec-reconstr-bipartite}) where $\mathcal L_{k} = \mcm_{\sigma_{k}}^{\tra}$ and $\mcm_{\sigma_{k}} \in \linmap(\linop(\hilbsysand{k}\rightsetp{k}); \linop(\leftsetp{k-1}))$ (\cref{eq:rec-intro-state-to-map}).
\end{thm}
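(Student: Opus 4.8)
The plan is to reduce the statement to a chain of applications of the bipartite reconstruction result \cref{thm:rec-reconstr-bipartite}, in exactly the same spirit as the proof of \cref{thm:rec-mat-marginals}, the only difference being that the ancillas $\leftsetp{k}$ and $\rightsetp{k}$ now play the role of the truncated left and right blocks. Positivity is irrelevant throughout, since \cref{thm:rec-reconstr-bipartite} is stated for general linear operators and maps.

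First I would fix $k \in \{2, \dots, \ns\}$ and regard $\rho_{k} \in \linop(\leftsetnb{k}\sysand\rightsetp{k})$ as a bipartite operator across the cut $X = \leftsetnb{k-1}$ versus $Y = \hilbsysand{k}\rightsetp{k}$, and apply \cref{thm:rec-reconstr-bipartite} with $X' = \leftsetp{k-1}$, $Y' = \rightsetp{k-1}$, $\mcn_{X} = \mcu_{k}$ and $\mcn_{Y} = \mct_{k}$. The first check is that the composite operator in that theorem is exactly $\tau_{k}$: one has $(\mcu_{k} \otimes \mct_{k})(\rho_{k}) = (\id \otimes \mct_{k})\bigl((\mcu_{k} \otimes \id)(\rho_{k})\bigr) = (\id \otimes \mct_{k})(\sigma_{k}) = \tau_{k}$ by \cref{eq:rec-mat-longrange-sigma-k,eq:rec-mat-longrange-tau-k}. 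Hence the rank hypothesis $\osr(X':Y')_{\tau} = \osr(X:Y)_{\rho}$ of \cref{thm:rec-reconstr-bipartite} is literally condition \cref{eq:rec-mat-longrange-osr} for this $k$, so the theorem supplies a linear map $\recmap_{Y'} \in \sop{\rightsetp{k-1}}{\hilbsysand{k}\rightsetp{k}}$, which I call $\linrec_{k}$, satisfying the partial-reconstruction identity \cref{eq:rec-reconstr-partial}, namely $\rho_{k} = (\id \otimes \linrec_{k}\mct_{k})(\rho_{k})$. Composing this with $\rho_{k-1} = (\id \otimes \mct_{k})(\rho_{k})$ from \cref{eq:rec-mat-longrange-rho-k} gives the local recursion $\rho_{k} = (\id \otimes \linrec_{k})(\rho_{k-1})$, the first half of \cref{eq:rec-mat-longrange-rec}.

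Next I would pin down the explicit form of $\linrec_{k}$. By \cref{thm:rec-reconstr-bipartite} the map it returns is $\recmap_{Y'} = \matrec_{\mcm_{\rho_{k}}^{\tra}\mcn_{X}^{\tra},\mcn_{Y}} = \matrec_{\mcm_{\rho_{k}}^{\tra}\mcu_{k}^{\tra},\mct_{k}}$, so it suffices to show that $\mathcal L_{k} = \mcm_{\sigma_{k}}^{\tra}$ equals $\mcm_{\rho_{k}}^{\tra}\mcu_{k}^{\tra}$. This is a one-line use of \cref{lem:rec-state-as-matrix-composition}: applied to $\sigma_{k} = (\mcu_{k} \otimes \id)(\rho_{k})$ in the bipartition $\leftsetnb{k-1}$ versus $\hilbsysand{k}\rightsetp{k}$ it gives $\mcm_{\sigma_{k}} = \mcu_{k}\mcm_{\rho_{k}}$, hence $\mcm_{\sigma_{k}}^{\tra} = \mcm_{\rho_{k}}^{\tra}\mcu_{k}^{\tra}$, and it also records $\mcm_{\sigma_{k}} \in \linmap(\linop(\hilbsysand{k}\rightsetp{k}); \linop(\leftsetp{k-1}))$. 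Equivalently one may simply note that $\sigma_{k} = (\mcn_{X} \otimes \id)(\rho_{k})$, which \cref{thm:rec-reconstr-bipartite} already identifies as data sufficient to build the $\recmap_{Y'}$ map.

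Finally I would telescope: each $\linrec_{k}$ acts only on the $\rightsetp{k-1}$ tensor factor and outputs the factors $\hilb_{k} \otimes \hilb_{\rightsetp{k}}$, leaving the left block $\leftsetnb{k-1}$ untouched, so starting from $\rho_{1} \in \linop(\leftsetnb{1}\sysand\rightsetp{1})$ and applying $\linrec_{2}, \linrec_{3}, \dots, \linrec_{\ns}$ in turn, using $\rho_{k} = (\id \otimes \linrec_{k})(\rho_{k-1})$ at each step and $\dim(\rightsetp{\ns}) = 1$ at the last, reconstructs $\rho = \rho_{\ns} = \linrec_{\ns}\linrec_{\ns-1}\dots\linrec_{2}(\rho_{1})$, which is the composition in \cref{eq:rec-mat-longrange-rec}. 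The proof involves no new idea beyond \cref{thm:rec-reconstr-bipartite} and \cref{lem:rec-state-as-matrix-composition}; the only thing that requires care is the bookkeeping — keeping the cut $\leftsetnb{k-1}$ versus $\hilbsysand{k}\rightsetp{k}$ consistent across $k$, tracking which ``$\id$'' sits to the left of $\linrec_{k}$ once the maps are composed, checking that the degenerate ancillas ($\mcu_{1} = 1$, and the one-dimensional $\leftsetp{0}$, $\rightsetp{0}$, $\rightsetp{\ns}$) cause no trouble, and applying the transpose identity $(\mcu_{k}\mcm_{\rho_{k}})^{\tra} = \mcm_{\rho_{k}}^{\tra}\mcu_{k}^{\tra}$ with the factors in the right order.
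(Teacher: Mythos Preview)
Your proposal is correct and follows essentially the same approach as the paper: for each $k$ apply \cref{thm:rec-reconstr-bipartite} with $X = \leftsetnb{k-1}$, $X' = \leftsetp{k-1}$, $\mcn_{X} = \mcu_{k}$, $Y = \hilb_{k,\rightsetp{k}}$, $Y' = \rightsetp{k-1}$, $\mcn_{Y} = \mct_{k}$, read off $\rho_{k} = (\id \otimes \linrec_{k})(\id \otimes \mct_{k})(\rho_{k})$ from \cref{eq:rec-reconstr-partial}, and identify $\mathcal L_{k} = \mcm_{\rho_{k}}^{\tra}\mcu_{k}^{\tra} = \mcm_{\sigma_{k}}^{\tra}$ via \cref{lem:rec-state-as-matrix-composition}. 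Your extra bookkeeping (verifying $(\mcu_{k}\otimes\mct_{k})(\rho_{k}) = \tau_{k}$ and spelling out the telescoping) is fine and does not depart from the paper's argument.
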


\begin{proof}
  For $k \in \{2, \dots, \ns\}$, apply \cref{thm:rec-reconstr-bipartite} with $X = \leftset{k-1}$, $X' = \leftsetp{k-1}$, $\mcn_{X} = \mcu_{k}$, $Y = \hilb_{k, \rightsetp{k}}$, $Y' = \rightsetp{k-1}$ and $\mcn_{Y} = \mct_{k}$.
  The equality $\rho = (\id \otimes \linrec_{Y'})(\id \otimes \mcn_{Y})(\rho)$ from the \namecref{thm:rec-reconstr-bipartite} becomes $\rho_{k} = (\id \otimes \linrec_{k})(\id \otimes \mct_{k})(\rho_{k})$.
  Recall that \cref{eq:rec-mat-longrange-sigma-k} implies $\mcm_{\sigma_{k}} = \mcu_{k} \mcm_{\rho_{k}}$ where $\mcm_{\sigma_{k}} \in \sop{\hilbsysand{k}\rightsetp{k}}{\leftsetp{k-1}}$ and $\mcm_{\rho_{k}} \in \sop{\hilbsysand{k}\rightsetp{k}}{\leftsetnb{k-1}}$ (\cref{lem:rec-state-as-matrix-composition}).
  Therefore, the reconstruction map is given by $\linrec_{k} = \matrec_{\mathcal L_{k}, \mct_{k}}$ with $\mathcal L_{k} = \mcm_{\rho_{k}}^{\tra} \mcu_{k}^{\tra} = \mcm_{\sigma_{k}}^{\tra}$.
\end{proof}

\begin{figure}[t]
  \centering
  \isvgc{graphics/reconstruction_vs_recovery_mat_meas}
  \caption{
    Left: Spaces on which the operators $\rho_{k}$, $\sigma_{k}$ and $\tau_{k}$ act (\cref{thm:rec-mat-longrange}). As $\ns = 5$, we have $\rightsetp{5} = \C$.
    Right: Spaces on which the operators $F$ and $G$ act (\cref{rem:rec-mat-longrange-meas}).
  }
  \label{fig:rec-mat-longrange-spaces}
\end{figure}

\begin{rem}
  \Cref{thm:rec-mat-longrange} provides \cref{thm:rec-mat-marginals} by restricting to the special case $\rightsetp{k} = \hilb_{k+1}$, $\leftsetp{k} = \hilb_{k}$, $\mct_{k} = \tr_{k+1}$, $\mcu_{k} = \tr_{\leftsetnb{k-2}}$ and using \cref{eq:rec-mat-longrange-rec} only for $k \in \{3, \dots, \ns-1\}$.%
  \;
\end{rem}

\begin{rem}
  \label{rem:rec-mat-longrange-eff}
  
  Denote by $d_{Y'} = \max_{k} \dim(\rightsetp{k})$ and $d_{X'} = \max_{k} \dim(\leftsetp{k})$ the maximal dimensions.
  If $d_{Y'} \bigo{\poly(\ns)}$ and $d_{X'} \bigo{\poly(\ns)}$, the reconstruction scheme from \cref{thm:rec-mat-longrange} is efficient (it satisfies all conditions from \cref{rem:rec-efficient}).
  \Cref{lem:rec-spinchain-seq-prep} provides an efficient MPO representation of $\rho$.

  Efficient reconstruction implies that a given state can be reconstructed from a number of expectation values which grows polynomially instead of exponentially with $n$.
  This improvement can only be achieved if the to-be-reconstructed state is not a completely general quantum state of $n$ systems.
  In the following, we show that the condition for efficient reconstruction in particular implies that the operator Schmidt ranks of the state are restricted to growing polynomially (instead of exponentially) with $n$.

  For $k = \ceil{\frac n2}$, the maximal value of the operator Schmidt rank $\osr(\leftsetnb{k}:\rightsetnb{k})_{\rho}$ is $(\min\{\ldim_{1}\dots\ldim_{k}, \ldim_{k+1}\dots\ldim_{\ns}\})^{2} \bigo{\exp(\ns)}$ and it is assumed e.g.\ for maximally entangled pure states.
  Suppose that $\rho$ can be reconstructed efficiently. 
  The equality $\rho = (\id_{\leftsetnb{k}} \otimes \linrec_{\ns}\dots\linrec_{k+1})(\rho_{k})$ (\cref{eq:rec-mat-longrange-rec}) implies $\osr(\leftsetnb{k}:\rightsetnb{k})_{\rho} \le \rk(\linrec_{k+1})$ (\cref{cor:rec-osr-data-processing}).
  The rank of $\linrec_{k+1}$ is, in turn, upper bounded by $\rk(\linrec_{k+1}) \le d_{Y'}^{2} \bigo{\poly(\ns)}$.
  I.e.~the operator Schmidt rank of $\rho$ is at most $\osr(\leftsetnb{k}:\rightsetnb{k})_{\rho} \bigo{\poly(\ns)}$.
  In conclusion, any state which admits an \emph{efficient} reconstruction with \cref{thm:rec-mat-longrange} has a \emph{small} operator Schmidt rank in the sense that it does not grows exponentially but only polynomially with the number of spins $\ns$.
\end{rem}

\begin{rem}[Recursively defined long-range measurements]
  \label{rem:rec-mat-longrange-meas}
In \cref{thm:rec-mat-longrange}, $\rho$ is reconstructed from $\sigma_{k} \in \linop(\leftsetp{k-1}\andhilbsysand{k}\rightsetp{k})$ ($k \in \{1\dots\ns\}$, noting that $\sigma_{1} = \rho_{1}$).
As above (\cref{rem:rec-petz-longrange-meas}), measurements on $\sigma_{k}$ correspond to recursively defined long-range measurements on $\rho$ (\cref{fig:rec-mat-longrange-spaces}):%
\begin{subequations}%
\begin{align}%
  \tr(F \sigma_{k})
  &=
    \tr(F (\mcu_{k} \otimes (\mct_{\ns}\dots\mct_{k+1}))(\rho))
    =
    \tr(G \rho),
  &
    F
  &\in \linop(\leftsetp{k-1}\andhilbsysand{k}\rightsetp{k}),
  \\
  G
  &=
    \sqbc{[\mcu_{k}^{\adjs} \otimes (\mct_{k+1}^{\adjs}\dots\mct_{\ns}^{\adjs})](F^{\adjm})}^{\adjm}
    \in \linop(\hilb_{1\dots\ns}).
\end{align}
\label{eq:rec-mat-longrange-meas}%
\end{subequations}
If the superoperators involved are quantum operations and the operator $F$ is Hermitian, $G$ is Hermitian as well and there is a correspondence between observables on $\sigma_{k}$ and recursively defined long-ranged observables on $\rho$.
Otherwise, the correspondence holds only in an abstract sense between operators $F \in \linop(\leftsetp{k-1}\andhilbsysand{k}\rightsetp{k})$ and $G\in \linop(\hilb_{1\dots\ns})$.
\end{rem}

\begin{rem}[Mixed state which requires long-range measurements]
  \label{rem:rec-petz-longrange-addstate}

  \Cref{rem:rec-petz-longrange-mps} showed
  \\
  that any pure MPS can be recovered with \cref{thm:rec-petz-longrange} if the unitary operations from \parencite{Cramer2010} are used. Below, we show that recovery with \cref{thm:rec-petz-longrange} implies that reconstruction with \cref{thm:rec-mat-longrange} is also possible (see \cref{thm:rec-petz-implies-mat-longrange}).   
  The following simple mixed state shows that \cref{thm:rec-petz-longrange,thm:rec-mat-longrange} can recover more than pure matrix product states and more than recovery or reconstruction from local reduced states (\cref{thm:rec-petz-marginals,thm:rec-mat-marginals}): The state
  \begin{align}
    \rho &= \frac12\sqbd{ \ketbra{0\ldots0}{0\ldots0} + \ketbra{100\dots01}{100\dots01}}
           \in \qs(\hilb_{1\dots\ns})
    \label{eq:rec-petz-addstate}
  \end{align}
  does not admit recovery or reconstruction from local reduced states because it turns into a product state if the first or last site is traced out; this unavoidably reduces the mutual information from non-zero to zero and the operator Schmidt rank from larger than one to one.
  The state admits recovery or reconstruction via \cref{thm:rec-petz-longrange} and \cref{thm:rec-mat-longrange} if the following definitions are used: 
  Assuming uniform local dimensions $\ldim = \ldim_{k}$ ($k \in \{1, \dots, \ns\}$), set $\hilb_{\rightsetp{k}} = \hilb_{k+1}$, $\mct_{k} = \tr_{k+1} \operatorname{SWAP}_{k,k+1}$ ($k \in \{2, \dots, \ns-1\}$), $\mct_{\ns} = \idm_{\ns}$, $\leftsetp{k-1} = \hilb_{k-1}$, $\mcu_{k} = \tr_{1,\dots,k-2} \operatorname{SWAP}_{1,k-1}$ ($k \in \{2,\dots,\ns\}$) and $\mcu_{1} = 1$.%
  \footnote{
    The swap gate is given by $\operatorname{SWAP}_{kl}(\cdot) = S_{kl} \cdot S_{kl}$, $S_{kl} = \sum_{i=1}^{d_{k}} \sum_{j=1}^{d_{l}} \ketbra{ij}{ji}$ where $\ket i$ and $\ket j$ are orthonormal bases of $\hilb_{k}$ and $\hilb_{l}$.
  }
  With these definitions, the states $\rho_{k}$ used in \cref{thm:rec-petz-longrange} and \cref{thm:rec-mat-longrange} are given by $\rho_{n} = \rho$,
  \begin{subequations}%
    \begin{align}%
      \rho_{k} &= \frac12\sqbd{ \ketbra{0\ldots0}{0\ldots0} + \ketbra{100\dots01}{100\dots01}}
                 \in \qs(\hilb_{1\dots k+1})
    \end{align}
    where $k \in \{1\dots n-1\}$.
    The states $\sigma_{k}$ and $\tau_{k}$ used in \cref{thm:rec-mat-longrange} are given by $\sigma_{1} = \rho_{1}$, $\sigma_{n} = \rho_{1} \in \qs(\hilb_{n-1,n})$,
    \begin{align}
      \sigma_{k} &= \frac12\sqbd{ \ketbra{000}{000} + \ketbra{101}{101} } \in \qs(\hilb_{k-1,k,k+1})
                   (k \in \{2\dots n-1\}),
      \\
      \tau_{k} &= \frac12\sqbd{ \ketbra{00}{00} + \ketbra{11}{11} } \in \qs(\hilb_{k-1,k})
                 (k \in \{2\dots n\}).
    \end{align}%
    \label{eq:rec-petz-addstate-steps}%
  \end{subequations}
  These states show that the conditions from \cref{thm:rec-petz-longrange} and \cref{thm:rec-mat-longrange} are satisfied.
\end{rem}

\subsubsection{Recovery vs. reconstruction for long-ranged measurements
  \label{sec:rec-long-range-meas-vs}}

In this section, we show that the conditions for state recovery (\cref{thm:rec-petz-longrange}) imply that state reconstruction (\cref{thm:rec-mat-longrange}) is also possible. 
The premise of \cref{thm:rec-petz-longrange} implies the premise of \cref{thm:rec-mat-longrange} for $\mcu_{k} = \id$.
However, \cref{thm:rec-mat-longrange} does not provide a useful reconstruction with $\mcu_{k} = \id$ because the necessary input $\sigma_{\ns}$ for the construction of $\linrec_{\ns}$ would be $\sigma_{\ns} = \rho$.
In \cref{thm:rec-petz-implies-mat-marginals} we used the symmetry of the conditional mutual information to work around this but this is no longer possible because $\mct_{k}$ was introduced. 
Note that \cref{eq:rec-petz-longrange-mi} implies the same equality for operator Schmidt ranks and that \cref{eq:rec-petz-longrange-rec} provides MPO representations of the $\rho_{k}$ (\cref{lem:rec-spinchain-seq-prep}).
It is well-known that maps $\mcu_{k}$ suitable for \cref{thm:rec-mat-longrange} can be obtained directly from the matrices of the MPO representation after the matrices have been transformed into a suitable orthogonal (mixed-canonical) form (\parencite{Schollwoeck2011,Oseledets2011}; see also \cref{rem:mpr-orthocore-unfolding-svd} in the appendix).
The maps $\mcu_{k}$ obtained in this way are given by partial isometries on the vector space of linear operators.
Such a map is not guaranteed to be completely positive or trace preserving, i.e.~it does not represent a quantum operation and it may not allow for an efficient implementation in a given quantum experiment.
An alternative construction has been put forward in \parencite{Oseledets2010a}:%
\footnote{
  We provide a formal description of the corresponding part of their work in \cref{lem:mpr-interface-inv,lem:rec-permsub}.
}
Here, maps $\mcu_{k}$ are provided whose matrix representation is given by a submatrix of a permutation matrix in a product basis of $\linop(\leftsetnb{k}) = \linop(\hilb_{1}) \otimes \dots \otimes \linop(\hilb_{k})$.
We use this result to prove that efficient recovery implies efficient reconstruction in \cref{thm:rec-petz-implies-mat-longrange}.
\Cref{rem:rec-petz-implies-mat-longrange-U-k-matrices} discusses advantages and disadvantages of the two different choices for $\mcu_{k}$ mentioned in this paragraph.

\begin{thm}
  \label{thm:rec-petz-implies-mat-longrange}

  Let the premise of \cref{thm:rec-petz-longrange} hold.
  Set $\dim(\rightsetp{0}) = 1$ and $\leftsetp{k} = \rightsetp{k}$.
  There are linear maps $\mcu_{k} \in \sop{\leftsetnb{k-1}}{\leftsetp{k-1}}$ ($k \in \{2, \dots, \ns\}$) such that
  \begin{align}
    \osr(\leftsetp{k-1}:\rightsetp{k-1})_{\tau_{k}}
    &= \osr(\leftsetnb{k-1}: {k}, \rightsetp{k})_{\rho_{k}}
      \quad\quad (k \in \{2, \dots, \ns\})
      \label{eq:rec-petz-implies-mat-longrange-osr}
  \end{align}
  holds where $\tau_{k} = (\mcu_{k} \otimes \id_{\rightsetp{k-1}})(\rho_{k-1})$ is the same operator as in \eqref{eq:rec-mat-longrange-tau-k}.
  Choose operator bases $F^{(k)}_{i_{k}}$, $F^{(X'_{k})}_{j_{k}}$ and $F^{(Y'_{k})}_{l_{k}}$.
  There is an efficient algorithm to choose suitable maps $\mcu_{k}$ and they can be chosen such that their matrix representation (\cref{eq:rec-intro-components}) is a submatrix of a permutation matrix.
  In this case, the resulting reconstruction is efficient (in the sense of \cref{rem:rec-efficient}) if recovery is efficient and if the operator bases are chosen such that they contain only Hermitian operators. 
\end{thm}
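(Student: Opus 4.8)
The plan is to pass from the mutual-information hypotheses \eqref{eq:rec-petz-longrange-mi} to operator-Schmidt-rank equalities by means of \cref{thm:rec-comparison}, to construct the maps $\mcu_{k}$ explicitly from an MPO representation of $\rho$ by a left-to-right sweep in the spirit of tensor-train cross approximation \parencite{Oseledets2010a}, and finally to invoke \cref{thm:rec-mat-longrange}. From the premise of \cref{thm:rec-petz-longrange} I first extract two facts. By that theorem, $\rho = \cptprec_{\ns}\cptprec_{\ns-1}\dots\cptprec_{2}(\rho_{1})$ with $\issop{\cptprec_{k}}{\rightsetp{k-1}}{\hilbsysand{k}\rightsetp{k}}$; truncating this product at stage $k-1$ exhibits $\rho_{k-1}\in\qs(\leftsetnb{k-1}\sysand\rightsetp{k-1})$ as a sequential preparation, hence (\cref{lem:rec-spinchain-seq-prep}) gives an MPO representation of $\rho_{k-1}$ with $\bigo{\poly(\ns)}$ bond dimensions whenever recovery is efficient, which we may put into mixed-canonical form (\cref{rem:mpr-orthocore-unfolding-svd}) so that the bond across $\leftsetnb{k-1}\,\vert\,\rightsetp{k-1}$ has dimension $r_{k-1}:=\osr(\leftsetnb{k-1}:\rightsetp{k-1})_{\rho_{k-1}}$. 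Moreover each equality \eqref{eq:rec-petz-longrange-mi} has exactly the form of the premise of \cref{thm:rec-comparison} (with $X=X'=\leftsetnb{k-1}$, $\mcn_{X}=\id$, $Y=\hilb_{k,\rightsetp{k}}$, $Y'=\rightsetp{k-1}$, $\mcn_{Y}=\mct_{k}$), so it implies $\osr(\leftsetnb{k-1}:\rightsetp{k-1})_{\rho_{k-1}}=\osr(\leftsetnb{k-1}:{k},\rightsetp{k})_{\rho_{k}}$ for $k\in\{2,\dots,\ns\}$.

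Next I reduce the asserted equality \eqref{eq:rec-petz-implies-mat-longrange-osr} to a rank-preservation statement about $\mcu_{k}$. Since $\tau_{k}=(\mcu_{k}\otimes\id_{\rightsetp{k-1}})(\rho_{k-1})$, \cref{lem:rec-state-as-matrix-composition} gives $[\mcm_{\tau_{k}}]_{ij}=[\mcu_{k}\mcm_{\rho_{k-1}}]_{ij}$, so by \cref{eq:rec-intro-osr-rank,cor:rec-osr-data-processing} one always has $\osr(\leftsetp{k-1}:\rightsetp{k-1})_{\tau_{k}}=\rk(\mcu_{k}\mcm_{\rho_{k-1}})\le\rk(\mcm_{\rho_{k-1}})=r_{k-1}$; combined with the equality of the previous paragraph and $\leftsetp{k-1}=\rightsetp{k-1}$, it is therefore enough to produce $\mcu_{k}$ with $\rk(\mcu_{k}\mcm_{\rho_{k-1}})=r_{k-1}$, i.e.\ with $\mcu_{k}$ injective on the $r_{k-1}$-dimensional subspace $\rg(\mcm_{\rho_{k-1}})\subseteq\linop(\leftsetnb{k-1})$. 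Since $r_{k-1}\le d_{\rightsetp{k-1}}^{2}=d_{\leftsetp{k-1}}^{2}$, such a $\mcu_{k}$ exists non-constructively: in a product operator basis of $\linop(\leftsetnb{k-1})$ some coordinate projection onto $r_{k-1}$ basis directions is injective on $\rg(\mcm_{\rho_{k-1}})$, and the associated $0$--$1$ matrix, padded with zero rows, is a submatrix of a permutation matrix.

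The genuine difficulty — the step I expect to be the main obstacle — is making this choice \emph{efficient}, since $\mcm_{\rho_{k-1}}$ acts on the exponentially large space $\linop(\leftsetnb{k-1})$. I would instead build the $\mcu_{k}$ recursively by a left-to-right sweep over the mixed-canonical MPO of $\rho$ from the first step: the canonical form factors $\mcm_{\rho_{k-1}}$ through its bond of dimension $r_{k-1}$ as $\mcm_{\rho_{k-1}}=A_{k}B_{k}$ with $A_{k}$ an MPS-structured map into $\linop(\leftsetnb{k-1})$, and it suffices to select basis directions injective on $\rg(A_{k})$; given a working selection $\mcu_{k-1}$ (a submatrix of a permutation matrix in the chosen product bases) one step earlier, forming $\mcu_{k-1}\otimes\id_{\hilb_{k-1}}$ on the $\bigo{\poly(\ns)}$-dimensional space $\linop(\leftsetp{k-2})\otimes\linop(\hilb_{k-1})$ and re-selecting $d_{\leftsetp{k-1}}^{2}=d_{\rightsetp{k-1}}^{2}$ of its coordinates so that the running interface matrix stays full rank yields $\mcu_{k}$. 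Existence of such a $0$--$1$ selection together with a $\poly(\ns)$-time algorithm to find it is exactly what \cref{lem:mpr-interface-inv,lem:rec-permsub} supply (this being the formalization of the construction of \parencite{Oseledets2010a}), and a composition of submatrices of permutation matrices in product bases is again such a submatrix.

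Finally I feed the maps $\mcu_{k}$ into \cref{thm:rec-mat-longrange}: its hypothesis \eqref{eq:rec-mat-longrange-osr} is the equality \eqref{eq:rec-petz-implies-mat-longrange-osr} just established, so reconstruction succeeds with $\linrec_{k}=\matrec_{\mathcal L_{k},\mct_{k}}$, $\mathcal L_{k}=\mcm_{\sigma_{k}}^{\tra}$. It remains to check efficiency (\cref{rem:rec-efficient,rem:rec-mat-longrange-eff}): $\leftsetp{k}=\rightsetp{k}$ forces $d_{X'}\bigo{\poly(\ns)}$ and $d_{Y'}\bigo{\poly(\ns)}$ when recovery is efficient, so the induced MPO representation of $\rho$ has $\poly(\ns)$ bond dimension and each input $\sigma_{k}$ acts on a $\bigo{\poly(\ns)}$-dimensional space and is thus determined by $\poly(\ns)$ expectation values; and by \cref{rem:rec-mat-longrange-meas} the observable on $\rho$ corresponding to a tensor-product observable $F$ on $\sigma_{k}$ is $G=\sqbc{[\mcu_{k}^{\adjs}\otimes(\mct_{k+1}^{\adjs}\dots\mct_{\ns}^{\adjs})](F^{\adjm})}^{\adjm}$, which is Hermitian — because the $\mct_{j}$ are quantum operations and $\mcu_{k}$, having a $0$--$1$ matrix in an operator basis consisting of Hermitian operators, maps Hermitian operators to Hermitian operators — and reduces to a single tensor-product expectation value of the form permitted in \cref{rem:rec-efficient}. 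Hence only $\poly(\ns)$ such expectation values, and $\poly(\ns)$ elementary operations to process them, are required, which is the claimed efficiency.
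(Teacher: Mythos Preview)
Your proposal is correct and follows essentially the same approach as the paper: obtain an MPO representation of the $\rho_{k}$ from the sequential preparation via \cref{lem:rec-spinchain-seq-prep}, convert the mutual-information equalities \eqref{eq:rec-petz-longrange-mi} to operator-Schmidt-rank equalities via \cref{thm:rec-comparison}, build the $\mcu_{k}$ recursively as submatrices of permutation matrices using \cref{lem:mpr-interface-inv,lem:rec-permsub}, and then argue efficiency from the resulting tensor-product structure of the required expectation values. The only cosmetic difference is that the paper verifies the last point by writing out the matrix elements of $\sigma_{k}$ explicitly (exhibiting each as a single tensor-product expectation value in $\rho_{k}$ via the index set $f_{k}(j)$), whereas you route through \cref{rem:rec-mat-longrange-meas}; both arguments exploit the same $0$--$1$ structure of $\mcu_{k}$ in a Hermitian product operator basis.
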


\begin{proof}
  \Cref{lem:rec-spinchain-seq-prep} provides an MPO representation of the states $\rho_{k}$ from \cref{eq:rec-petz-longrange-rec}.
  It is well-known that maps $\mcu_{k}$ can be chosen recursively such that $\osr(X'_{k-1}:Y'_{k-1})_{\tau_{k}} = \osr(X_{k-1}:Y'_{k-1})_{\rho_{k-1}}$ holds if an MPO representation of $\rho_{k-1}$ is given \parencite{Schollwoeck2011,Oseledets2011}.
  As $\osr(X_{k-1}:Y'_{k-1})_{\rho_{k-1}} = \osr(X_{k-1}:k, Y'_{k})_{\rho_{k}}$ is implied by \eqref{eq:rec-petz-longrange-mi} (\cref{thm:rec-comparison}), it is clear that \cref{eq:rec-petz-implies-mat-longrange-osr} holds as well. 
  It was also recognized that the maps $\mcu_{k}$ can be chosen such that their matrix representation is a submatrix of a permutation matrix \parencite{Oseledets2010a}.
  We provide a self-contained description of the corresponding procedure in \cref{lem:mpr-interface-inv}.

  Let $U_{k}$ the matrix representation of $\mcu_{k}$ and suppose that $U_{k}$ is a submatrix of a permutation matrix.
  Denote by $f_{k}(j) = \{(i_{1}, \dots, i_{k-1}) \colon [U_{k}]_{j,(i_{1} \dots i_{k-1})} = 1\}$ the set of columns with a non-zero entry in a given row of $U_{k}$ (where $i_{j} \in \{1 \dots d_{j}^{2}\}$ and $j \in \{1 \dots d_{X'_{k-1}}^{2}\}$).
  The matrix elements of $\sigma_{k} = (\mcu_{k} \otimes \id)(\rho_{k})$ are given by (insert an identity map \eqref{eq:rec-intro-op-identity} into \cref{eq:rec-mat-longrange-sigma-k})
  \begin{align}
    \label{eq:rec-petz-implies-mat-longrange-sigma-k-comp}
    & \skpc{ F^{(\leftsetp{k-1})}_{j} \otimes F^{(k)}_{i_{k}} \otimes F^{(\rightsetp{k})}_{l}}{ \sigma_{k} }
    \\
    & = \sum_{i_{1}\dots i_{k-1}}
      [U_{k}]_{j,i_{1}\dots i_{k-1}} \skpc{F^{(1)}_{i_{1}} \otimes \dots \otimes F^{(k-1)}_{i_{k-1}} \otimes F^{(k)}_{i_{k}} \otimes F^{(\rightsetp{k})}_{l}}{\rho_{k}}
      \nonumber \\
    & =
      \begin{cases}
        0, & \text{if } f_{k}(j) = \emptyset, \\
        \skpc{F^{(1)}_{i_{1}} \otimes \dots \otimes F^{(k)}_{i_{k}} \otimes F^{(\rightsetp{k})}_{l}}{\rho_{k}}, & \text{with } \{(i_{1}, \dots, i_{k-1})\} = f_{k}(j).
      \end{cases}
             \label{eq:rec-petz-implies-mat-longrange-obs}
  \end{align}
  Here, we used that $\abs{f_{k}(j)} \le 1$ because $U_{k}$ is a submatrix of a permutation matrix. 
  The last equation shows that $\sigma_{k}$, which needs to be known for reconstruction of $\rho$, can be determined from at most $(d_{X'_{k}} d_{k} d_{Y'_{k}})^{2}$ tensor product expectation values in $\rho_{k}$.
  The structure of the given expectation values is permitted for efficient reconstruction (\cref{rem:rec-efficient})) and the number of expectation values is at most $\poly(n)$ if recovery is efficient.
  Furthermore, efficient recovery implies that the MPO representation of the $\rho_{k}$ as well as the procedures to determine $\mcu_{k}$ and $f_{k}(j)$ are efficient as well (\parencite{Oseledets2010a}; for details see \cref{lem:mpr-interface-inv,lem:rec-permsub}).
  This finishes the proof of the theorem. 
\end{proof}

\begin{rem}
  \label{rem:rec-petz-implies-mat-longrange-U-k-matrices}
  
  The singular values of $\mcm_{\sigma_{k}} \in\sop{k\sysand Y'_{k}}{X'_{k-1}}$ equal those of $\mcm_{\rho_{k}} \in\sop{k\sysand Y'_{k}}{X_{k-1}}$ if the maps $\mcu_{k}$ are suitable partial isometries on the vector space of linear operators (cf.\ \cref{rem:mpr-orthocore-unfolding-svd}).
  For reconstruction stability (\cref{thm:rec-matrixdec-stability}), this is the optimal case (if the maps $\mct_{k}$ are predefined).
  If the maps $\mcu_{k}$ are restricted to submatrices of permutation matrices, the singular values of $\mcm_{\sigma_{k}}$ are smaller than or equal to those of $\mcm_{\rho_{k}}$ (because $\mcu_{k}$ has unit operator norm).
  If the smallest non-zero singular value decreases, then stability of the reconstruction is reduced (\cref{thm:rec-matrixdec-stability}; cf.\ \parencite{Oseledets2010a,Savostyanov2014}).
  In the worst case, the smallest non-zero singular value decreases by a factor exponential in $\ns$ because of the recursive construction of the $\mcu_{k}$ \parencite{Savostyanov2014}.
  However, empirical results show that this worst-case behaviour is usually not observed in practice \parencite{Baumgratz2013,Oseledets2010a,Savostyanov2014,Savostyanov2011}.

  If the maps $\mct_{k}$ are not predefined, the singular values of $\mcm_{\sigma_{k}}$ equal those of
  $\mcm_{\rho} \in \linmap(\linop(\hilb_{k\dots\ns});$ $\linop(\leftsetnb{k-1}))$
  if the maps $\mcu_{k}$ and $\mct_{k}$ are suitable partial isometries on the vector space of linear operators (\cref{rem:mpr-orthocore-unfolding-svd}).
  In this case, \cref{thm:rec-petz-implies-mat-longrange} allows reconstruction of an arbitrary MPO (or matrix product state/tensor train) with optimal reconstruction stability.
  However, it remains an open question whether this can be fully exploited e.g.\ in the reconstruction of quantum states as the necessary measurements may not allow for an efficient implementation if the maps $\mcu_{k}$ and $\mct_{k}$ are general partial isometries on the vector space of linear operators.
  
  The situation is different if the state $\rho$ is a pure matrix product state.
  Here, partial isometries which act on the Hilbert spaces themselves can be obtained (\parencite{Cramer2010}, cf.\ \cref{rem:rec-petz-longrange-mps,rem:mpr-orthocore-unfolding-svd}).
  These partial isometries can be implemented via unitary control of the quantum system and they have the property that they preserve the singular values of $\mcm_{\rho}$.
  This also shows that the tomography scheme for pure matrix product states based on local unitary operations and proposed in \parencite{Cramer2010} provides maps $\mct_{k}$ and $\mcu_{k}$ for state recovery and reconstruction with optimal stability. 
\end{rem}

\begin{rem}[Related work]
  \label{rem:rec-petz-implies-mat-longrange-prior-work}

  Note that nowhere in the proof of \cref{thm:rec-mat-longrange,thm:rec-reconstr-bipartite} did we use the fact that $\rho$ is a linear operator on $\hilb_{1\dots \ns}$.
  The theorems apply equally well to arbitrary vectors $\ket{\psi} \in \hilb_{1\dots \ns}$ on tensor product vector spaces $\hilb_{1\dots\ns} = \hilb_{1} \otimes \dots \otimes \hilb_{\ns}$.
  The components of $\braket{i_{1}\dots i_{\ns}}{\psi}$ of $\ket\psi$ in a product basis define a tensor $t \in \C^{\ldim_{1}\times \dots \times \ldim_{\ns}}$ (i.e.~an array with $\ns$ indices).
  
  A result similar to \cref{thm:rec-mat-longrange} has been obtained before in the context of tensor train representations \parencite{Oseledets2010a,Savostyanov2011,Savostyanov2014}.
  Their result is formulated for a tensor with $\ns$ indices, i.e.~replace $\linop(\hilb_{k})$ by $\hilb_{k}$, $\linop(\rightsetp{k})$ by $\rightsetp{k}$, $\osr(\leftsetp{k-1}:\rightsetp{k-1})_{\tau_{k}}$ by $\rk(\mcm_{\tau_{k}})$, etc.
  They restrict to $\dim(\leftsetp{k-1}) = \dim(\rightsetp{k-1}) = \rk(\mcm_{\tau_{k}})$.
  In this case, the pseudoinverse in the reconstruction maps $\recmap_{k}$ (defined in \cref{thm:rec-mat-longrange}) is just the regular inverse (cf.\ \cref{rem:rec-matrixdec}).
  They also restrict $\mcu_{k}$ and $\mct_{k}$ to submatrices of permutation matrices in a fixed product basis.
  In addition, they provide an algorithm which attempts to determine suitable maps $\mcu_{k}$ and $\mct_{k}$ incrementally and efficiently.
  Similar work has been carried out for the Tucker and hierarchical Tucker tensor representations \parencite{Caiafa2010,Caiafa2015,Ballani2013} and the relation between this work and the matrix reconstruction from \cref{sec:rec-reconstr-matrices} will be explored elsewhere \parencite{HolzaepfelPhD}. 
\end{rem}

\section*{Acknowledgements}

We acknowledge discussions with Oliver Marty. 
Work in Ulm was supported by an Alexander von Humboldt Professorship, the ERC Synergy grant BioQ, the EU projects QUCHIP and EQUAM, the US Army Research Office Grant No.~W91-1NF-14-1-0133. 
Work in Hannover was supported by the DFG through SFB~1227 (DQ-mat) and the RTG~1991, the ERC grants QFTCMPS and SIQS, and the cluster of excellence EXC201 Quantum Engineering and Space-Time Research.

\appendix

\section{Appendix
  \label{sec:rec-appendix}}

\subsection{Optimality of the stability bound
  \label{sec:rec-appendix-stability-optimal}}

The following examples show that the bound from \cref{thm:rec-matrixdec-stability} is optimal up to constants and that the reconstruction error can diverge as $\epsilon$ approaches zero if small singular values in $LMR$ are not truncated.

\Cref{thm:rec-matrixdec-stability} provides an upper bound on the reconstruction error of a reconstructible matrix $S$ (the signal) which is perturbed by some error matrix $E$.
The following example shows that the
upper bound from the theorem is optimal up to constants:
\begin{align*}
    L &= 
  \begin{pmatrix}
    1 & 0 & 0 \\ 0 & 1 & 0
  \end{pmatrix},
  &
  R &=
  \begin{pmatrix}
    1 & 0 \\ 0 & 1 \\ 0 & 0
  \end{pmatrix},
  & 
  S
  &=
    \begin{pmatrix}
      0 & \Delta & 1 \\
      \Delta & 0 & 0 \\
      1 & 0 & 0
    \end{pmatrix},
  &
  E
  &=
  \eta\epsilon
  \begin{pmatrix}
    0 & 0 & 0 \\ 0 & 1 & 0 \\ 0 & 0 & 0
  \end{pmatrix}.
\end{align*}
The eigenvalues of $S$ are $\pm \sqrt{1 + \Delta^{2}}$ so $\eta = \norm{S} = \sqrt{1 + \Delta^{2}}$ approaches unity as $\Delta \to 0$.
For simplicity, we might assume that using $\eta \approx 1$ is sufficient for the discussion of this example but we keep $\eta > 1$.
We have $\norm{E} / \eta = \epsilon$. 
Suppose that we choose $\epsilon$ and $\tau$ such that $0 < \epsilon \le \tau < \frac1{3\sqrt2}$.
Set $c = \tau + 2\epsilon$, i.e.\ $0 < c < \frac1{\sqrt2}$, and set $\Delta = c / \sqrt{1-c^{2}}$.
Then $\Delta < 1$ and $\eta < \sqrt 2$.
In addition, we obtain
\begin{align*}
  (\tau + 2\epsilon)\eta = c \sqrt{1 + \Delta^{2}} = c \sqrt{1 - \frac{c^{2}}{1-c^{2}}}
  = \Delta.
\end{align*}
The eigenvalues of $LSR$ are $\pm \Delta$.
This provides us $\gamma = \Delta / \eta = \tau + 2\epsilon$.
Therefore, the condition $\epsilon \le \tau < \gamma - \epsilon$ is automatically satisfied and, as a consequence, $2\epsilon < \gamma$ holds as well.
$LER$ can change the eigenvalues of
$LSR$ at most by $\epsilon$ (cf.\ proof of \cref{lem:rec-pi-perturbation-using-wedin}), so no truncation
occurs. In this case, the reconstruction error has exactly the scaling
from the theorem:
\begin{align*}
  \norm{ MR (LMR)_{\tau}^{\pinv} LM - S}
  =
  \norma{
  \begin{pmatrix}
    0 & 0 & 0 \\ 0 & \eta\epsilon & 0 \\ 0 & 0 & - \frac{\eta\epsilon}{\Delta^{2}}
  \end{pmatrix}
  }
  &=
    \max\cura{ \eta\epsilon, \frac{\eta\epsilon}{\Delta^{2}} }
    \ge
    \frac{\eta\epsilon}{\Delta^{2}}
    \ge
    \frac{\epsilon}{9\sqrt 2(\gamma-\epsilon)^{2}}
\end{align*}
Note that the conditions from above imply $\Delta^{2}/\eta = \eta\gamma^{2} < \sqrt 2\gamma^{2}$ and that $\gamma = \tau + 2\epsilon \ge 3\epsilon$.
The latter implies $-\epsilon \ge -\gamma/3$ and $3(\gamma - \epsilon) \ge 2\gamma \ge \gamma$.
Combining the relations provides the bound $\Delta^{2}/\eta < \sqrt 2 \gamma^{2} \le 9\sqrt 2(\gamma-\epsilon)^{2}$ used above. 

One may ask whether thresholds $\tau$ outside the interval permitted by the theorem reconstruct $M$ successfully. 
In this example, a threshold which is large enough to produce a different reconstruction will replace at least one of the two singular values of the reconstruction by zero.
As the two singular values of $S$ are equal, the reconstruction error will be at least $\norm{S}$ in this case.
i.e.\ larger thresholds do not provide a successful reconstruction in the sense that the error in operator norm is significantly smaller than $\norm{S}$.
In this example, neither smaller nor larger thresholds (than the ones permitted by \cref{thm:rec-matrixdec-stability}) provide an improved reconstruction: 
Smaller thresholds do not change the reconstruction at all because thresholding does not reduce the rank of $LMR$ in this example.
However, the following example shows that thresholding is in general necessary to obtain an error which satisfies the bound from \cref{thm:rec-matrixdec-stability}. 
We keep $L$ and $R$ from above and
choose
\begin{align*}
  S
  &=
    \begin{pmatrix}
      1 & 0 & 0 \\
      0 & 0 & 0 \\
      0 & 0 & 0
    \end{pmatrix},
  \quad
  E
  =
  \epsilon
  \begin{pmatrix}
    0 & 0 & 0 \\ 0 & \epsilon^{2} & 1-\epsilon^{2} \\ 0 & 1-\epsilon^{2} & \epsilon^{2}
  \end{pmatrix}.
\end{align*}
We have $\eta = \norm{S} = 1$, $\gamma = 1$ and the eigenvalues of $E/\epsilon$ are $1$ and $-1+2\epsilon^{2}$ such that $\norm{E} = \epsilon$; we choose
$\epsilon < \frac12$ such that choosing a $\tau$ from
$\epsilon \le \tau < 1 - \epsilon$ is permitted. The eigenvalues of $LMR$ are $1$ and $\epsilon^{3}$. We obtain (using $\epsilon \le \frac12$)
\begin{align*}
  \norm{MR (LMR)^{\pinv} LM - S}
  &=
  \norma{
  \begin{pmatrix}
    0 & 0 & 0 \\ 0 & \epsilon^{3} & \epsilon(1-\epsilon^{2}) \\ 0 & \epsilon(1-\epsilon^{2}) & \frac{(1-\epsilon^{2})^{2}}{\epsilon}
  \end{pmatrix}
  }
  \ge
  \frac{(1 - \epsilon)^{2}}{\epsilon}
  \ge
  \frac{1 - 2 \epsilon^{2}}{\epsilon}
  \ge 
  \frac{1}{2\epsilon}.
\end{align*}
Without truncating small singular values, the error diverges as $\epsilon \to 0$, i.e.\ it does not satisfy the bound from \cref{thm:rec-matrixdec-stability}.
Here, the effect of $E$ is completely erased by truncation:
\begin{align*}
    (LMR)_{\tau}^{\pinv}
  &=
    \begin{pmatrix}
      1 & 0 \\ 0 & 0
    \end{pmatrix},
  &
  \norm{MR (LMR)_{\tau}^{\pinv} LM - S}
  =
    0.
\end{align*}

\subsection{The stability bound for matrices with non-unit operator norm
  \label{sec:rec-appendix-normalization}}

In this section, we provide an argument which extends the proof of \cref{thm:rec-matrixdec-stability} from matrices $S$ with unit operator norm to matrices $S$ with arbitrary operator norm. 
Suppose that the matrix $M$ is the sum of a signal $S$ and an noise contribution $E$, $M = S + E$.
The signal satisfies $\rk(S) = \rk(LSR)$, but we only know the strength $\norm{E}$ of the noise.
Suppose that for $\norm S = \norm L = \norm R = 1$, we obtain some error bound of the form
\begin{align}
  \norm{
  MR (LMR)_{\tau}^{\pinv} LM - S
  }
  \le f(\epsilon, \gamma, \tau), \quad
  \epsilon = \norm{E}, \quad
  \gamma = \sigma_\text{min}(LMR).
\end{align}
We can obtain an error bound for $M' = S' + E'$ where $S'$, $L'$, and $R'$ have arbitrary norms as follows: Set $M = M' / \norm{S'}$, $S = S / \norm{S'}$, $E = E' / \norm{E'}$ $L = L' / \norm{L'}$, $R = R' / \norm{R'}$.
With these definitions, we have
\begin{align}
  \norm{
  MR (LMR)_{\tau}^{\pinv} LM - S
  }
  =
  \frac{
  \norm{
  M'R' (L'M'R')_{\tau'}^{\pinv} L'M' - S'
  }
  }{\norm{S'}},
\end{align}
where $\tau' = \norm{L} \norm{R} \norm{S} \tau$. Therefore, the bound from the last but one equation implies
\begin{align}
  &
  \norm{
  M'R' (L'M'R')_{\tau'}^{\pinv} L'M' - S'
  }
  \le
  \norm{S'}
  f(\epsilon, \gamma, \tau),
  \\ 
  &
  \quad\quad
    \epsilon = \frac{\norm{E'}}{\norm{S'}},
  \quad
  \gamma = \frac{\sigma_\text{min}(L'M'R')}{\norm{L'}\norm{R'}\norm{S'}}, 
  \quad
  \tau = \frac{\tau'}{\norm{L'}\norm{R'}\norm{S'}}.
\end{align}

In proofs, we assume $\norm{S} = \norm{L} = \norm{R} = 1$ and we use $\epsilon = \norm{E}$.

\subsection{Alternative proof of the stability bound
  \label{sec:rec-appendix-caiafa-based-bound}}

In this section we obtain a bound similar to the one from \cref{thm:rec-matrixdec-stability} using the ansatz by \textcite{Caiafa2015}.

As above, we use $M = S + E$, $\norm{S} = \norm{L} = \norm{R} = 1$ and $\rk(S) = \rk(LSR)$.

Note that $\rk(S) = \rk(LSR)$ implies $\rk(S) = \rk(LS) = \rk(SR) = \rk(LSR)$. We use the matrix reconstruction \cref{prop:rec-matrixdec} several times, sometimes with $L$ or $R$ replaced by the identity matrix. The proposition e.g.\ provides $S = S (LS)^{\pinv} LS$. Using that idenity, we obtain the following two equalities:
\begin{align}
  S &= S (LS)^{\pinv} L (M-E) \\
  M &= S + E =
      S (LS)^{\pinv} LM + (\idm - S (LS)^{\pinv} L) E
      \label{eq:m-decomp-caiafa-left}
\end{align}
In the same way, we obtain:
\begin{align}
  S &= (M - E) R (SR)^{\pinv} S \\
  M &= S + E =
      MR (SR)^{\pinv} S + E (\idm - R (SR)^{\pinv} S)
      \label{eq:m-decomp-caiafa-right}
\end{align}
We will also use
\begin{align}
  LMR (LMR)_{\tau}^{\pinv} LMR = (LMR)_{\tau}
  = [(LMR)_{\tau} - LMR] + LSR + LER.
  \label{eq:lmr-tau-parts}
\end{align}
We decompose into three parts:
\begin{align}
  MR (LMR)_{\tau}^{\pinv} LM = \text{(A)} + \text{(B)} + \text{(C)}
\end{align}
We insert \cref{eq:m-decomp-caiafa-left}) for $M$ at the beginning of the expression and \cref{eq:m-decomp-caiafa-right}) for $M$ on the end of the expression. In the following equations, spaces separate factors which come from different equations.  In part (A) below, we insert \cref{eq:lmr-tau-parts}.
\begin{subequations}
\begin{align}
  \text{(A)}
  &=\phantom{\pinv}
    S (LS)^{\pinv} LM \;\; R (LMR)_{\tau}^{\pinv} L \;\; MR (SR)^{\pinv} S
    \nonumber
    \\
  &=\phantom{\pinv}
    S (LS)^{\pinv} \;\; [(LMR)_{\tau} - LMR] \;\; (SR)^{\pinv} S
    \nonumber
  \\
  &\phantom{=}+
    S (LS)^{\pinv} \;\; LSR \;\; (SR)^{\pinv} S
    \nonumber
  \\
  &\phantom{=}+
    S (LS)^{\pinv} \;\; LER \;\; (SR)^{\pinv} S
    \label{eq:m-decomp-caiafa-partA-S}
  \\[2mm]
  \text{(B)}
  &=\phantom{\pinv}
    S (LS)^{\pinv} LM \;\; R (LMR)_{\tau}^{\pinv} L \;\; E (\idm - R (SR)^{\pinv} S)
    \nonumber
  \\
  &\phantom{=}+
    (\idm - S (LS)^{\pinv} L) E \;\; R (LMR)_{\tau}^{\pinv} L \;\; MR (SR)^{\pinv} S
  \\[2mm]
  \text{(C)}
  &=\phantom{\pinv}
    (\idm - S (LS)^{\pinv} L) E \;\; R (LMR)_{\tau}^{\pinv} L \;\; E (\idm - R (SR)^{\pinv} S)  
\end{align}
\end{subequations}
The expression in \cref{eq:m-decomp-caiafa-partA-S} is equal to $S$. We use the relation $\norm{A A_{\tau}^{\pinv}} = \norm{A_{\tau} A_{\tau}^{\pinv}} \le 1$ and obtain the following bound:
\begin{align}
  & \norm{MR (LMR)_{\tau}^{\pinv} LM - S}
    \nonumber
  \\
  &\le\phantom{\pinv}
    \norm{(LS)^{\pinv}} \norm{(SR)^{\pinv}} 
    \parend{ \norm{(LMR)_{\tau} - LMR} + \epsilon }
    \nonumber
  \\&\phantom{\le}+
      \norm{(LS)^{\pinv}} \norm{\idm - R(SR)^{\pinv}S} \epsilon 
    \nonumber
  \\&\phantom{\le}+
      \norm{(SR)^{\pinv}} \norm{\idm - S (LS)^{\pinv} L} \epsilon
    \nonumber
  \\&\phantom{\le}+
      \norm{\idm - S (LS)^{\pinv} L}\norm{\idm - R(SR)^{\pinv}S}
      \norm{(LMR)_{\tau}^{\pinv}} \epsilon^{2}
\end{align}
This bound has been given by \textcite{Caiafa2015} for the case that $L$ and $R$ have exactly $r = \rk(S)$ rows and columns (such that the matrix $LSR$ is invertible).  They proceed by defining constants $a$, $b$ and $c$ which are independent of the threshold $\tau$ and of noise strength $\epsilon = \norm{E}$ and obtain a bound of the form $a \tau + b \epsilon + c \epsilon^{2} / \tau$.

We continue by analyzing how all terms in the last equation depend on $L$, $R$ and $S$. This will provide a bound similar to that of \cref{thm:rec-matrixdec-stability}.

Because $LSR$, $LS$ and $SR$ have all rank $r = \rk(S)$, the relation $\sigma_\text{min}(LSR) = \sigma_{r}(LSR)$ holds for these three matrices. 
We obtain
\begin{align}
  \gamma = \sigma_\text{min}(LSR) = \sigma_{r}(LSR)
  \le
  \sigma_{r}(LS) \sigma_{1}(R)
  \le
  \sigma_{r}(LS)
  =
  \sigma_\text{min}(LS),
\end{align}
where the first inequality is provided by Ref.~\parencite{Horn1991a} (Theorem~3.3.16, page~178). This provides
\begin{align}
  \norm{S (LS)^{\pinv} L} \le
  \norm{(LS)^{\pinv}} = \frac{1}{\sigma_\text{min}(LS)} \le \frac{1}{\gamma}
\end{align}
and the same bound applies to $\norm{R(SR)^{\pinv}S}$. Note that $\gamma \le \norm{S} = 1$. Further, for any square matrix $A$, we have
\begin{align}
  \norm{\idm - A} \le \max\{ \sigma_{1}(\idm), \sigma_{1}(A) - 1 \}
  = \max\{ 1, \norm{A} - 1\}.
\end{align}
Using $1 \le 1/\gamma$, we obtain
\begin{align}
  \norm{\idm - S (LS)^{\pinv} L} \le \frac1\gamma, \quad
  \norm{\idm - R (SR)^{\pinv} S} \le \frac1\gamma
\end{align}
and
\begin{align}
  \norm{MR (LMR)_{\tau}^{\pinv} LM - S} 
  \le
  \frac{
  \norm{(LMR)_{\tau} - LMR} + 3\epsilon + \norm{(LMR)_{\tau}^{\pinv}} \epsilon^{2}
  }{\gamma^{2}}.
\end{align}
The inequality holds for arbitrary values of $\gamma$, $\tau$ and $\epsilon$. 

Now, we assume $\epsilon \le \tau < \gamma - \epsilon$ and use bounds from \cref{lem:rec-pi-perturbation-using-wedin}. This provides
\begin{align}
  \norm{MR (LMR)_{\tau}^{\pinv} LM - S} 
  \le
  \frac{4\epsilon}{\gamma^{2}}
  +
  \frac{\epsilon^{2}}{\gamma^{2}(\gamma - \epsilon)}
  \le
  \frac{5\epsilon}{\gamma^{2}}.
  \label{eq:bound-from-caiafa-ansatz}
\end{align}

Without the assumption $\epsilon \le \tau < \gamma - \epsilon$, we obtain
\begin{align}
  \norm{MR (LMR)_{\tau}^{\pinv} LM - S} 
  \le
  \frac{\tau + 3\epsilon}{\gamma^{2}}
  +
  \frac{\epsilon^{2}}{\gamma^{2}\tau}
\end{align}
This is again the bound $a \tau + b \epsilon + c \epsilon^{2} / \tau$, but with $a = 1/\gamma^{2}$, $b = 3/\gamma^{2}$ and $c = 1/\gamma^{2}$.
If we select $\epsilon = \tau$ in this bound, we obtain exactly \cref{eq:bound-from-caiafa-ansatz}.
As $(LMR)_{\tau}^{\pinv}$ is the same operator for all $\tau \in [\epsilon, \gamma - \epsilon)$, the bound holds not only for $\epsilon = \tau$ but for all $\tau \in [\epsilon, \gamma - \epsilon)$: We obtain \cref{eq:bound-from-caiafa-ansatz} again.

\subsection{Known results on matrix product representations
  \label{sec:rec-appendix-tensor-unprepare}}

This section reviews known results on matrix product state/tensor train representations used in \cref{sec:rec-long-range-meas-vs}.
It also provides full formal details for the results which were used. 

Given a tensor $t \in \C^{\ldim_{1} \times \dots \times \ldim_{\ns}}$, a matrix product representation of the tensor is given by 
\begin{align}
  \label{eq:mpr-matrix-product-repr}
  [t]_{i_{1},\dots,i_{\ns}}
  &= G_{1}(i_{1}) \dots G_{m}(i_{m}) H_{m} G_{m+1}(i_{m+1}) G_{\ns}(i_{\ns}) \quad (i_{k} \in \{1, \dots, d_{k}\})
\end{align}
where $D_{0} = D_{\ns} = 1$, $H_{m} \in \C^{D_{m} \times D_{m}}$, $G_{k}(i_{k}) \in \C^{D_{k-1} \times D_{k}}$, $i_{k} \in \{1 \dots d_{k}\}$ and $m \in \{1 \dots \ns-1\}$.
For simplicity, $H_{m} = \idm_{D_{m}}$ may be used.
The $G_{k}$ are called the cores of the representation while the matrices $G_{k}(i_{k})$ give the representation its name.
The left and right unfoldings of the cores%
\footnote{
  This notation is partially inspired by \parencite{Kressner2014} but notation in the tensor train literature does not seem to be uniform. 
}
are given by
\begin{align}
  \label{eq:mpr-unf-k}
  G_{k}^{\lunf} &\in \C^{D_{k-1} d_{k} \times D_{k}}
  & G_{k}^{\runf} &\in \C^{D_{k-1} \times d_{k} D_{k}}
\end{align}
and they have the same entries as $G_{k}$, e.g.\ $[G_{k}^{\lunf}]_{b_{k-1}i_{k},b_{k}} = [G_{k}(i_{k})]_{b_{k-1},b_{k}}$.
The left and right interface matrices are given by
\begin{alignat}{2}
  \label{eq:mpr-left-interface}
  G_{\le k} &= (G_{\le k-1} \otimes \idm_{d_{k}}) G_{k}^{\lunf} \in \C^{\ldim_{1}\dots \ldim_{k} \times D_{k}}
  \quad\quad
  & (k &\in \{1 \dots m\}),
  \\
  \label{eq:mpr-right-interface}
  G_{> k} &= (\idm_{d_{k+1}} \otimes G_{>k+1}) (G_{k+1}^{\runf})^{\tra} \in \C^{\ldim_{k+1}\dots \ldim_{\ns} \times D_{k}}
  \quad\quad
  & (k &\in \{m \dots \ns-1\})
\end{alignat}
where $G_{\le 0} = 1$ and $G_{> \ns} = 1$. 
The unfolding $t_{k}$ is the $\ldim_{1} \dots \ldim_{k} \times \ldim_{k+1} \dots \ldim_{\ns}$ matrix with the same entries as $t$ and it can be written as 
\begin{align}
  \label{eq:mpr-unfolding-with-interface}
  t_{m} &= G_{\le m} H_{m} (G_{>m})^{\tra}
\end{align}

It is well-known that a singular value decomposition of the unfolding $t_{k}$ can be obtained efficiently \parencite{Schollwoeck2011,Oseledets2011}:

\begin{rem}
  \label{rem:mpr-orthocore-unfolding-svd}
  
  Fix $m \in \{1, \dots, \ns-1\}$, let $t$ have a matrix product representation as in \cref{eq:mpr-matrix-product-repr} with positive-semidefinite diagonal $H_{m}$ and assume orthogonal cores,%
  \footnote{
    See Ref.~\parencite{Oseledets2011}; this is called a mixed-canonical representation in Ref.~\parencite{Schollwoeck2011}.
  }
  i.e.
  \begin{subequations}
  \begin{align}
    \label{eq:mpr-orthocore-def}
    (G_{k}^{\lunf})^{\adjm} G_{k}^{\lunf} &= \idm_{D_{k}} \quad (k \le m)
    & (G_{k}^{\runf})^{\adjm} G_{k}^{\runf} &= \idm_{D_{k}} \quad (k > m) \\
    \label{eq:mpr-orthocore-int}
    (G_{\le k})^{\adjm} G_{\le k} &= \idm_{D_{k}} \quad (k \le m)
    & (G_{> k})^{\adjm} G_{> k} &= \idm_{D_{k}} \quad (k \ge m)
    \end{align}
  \end{subequations}
  An arbitrary matrix product representation can be efficiently converted into such an orthogonal representation \parencite{Oseledets2011,Schollwoeck2011}. 
  Then
  \begin{align}
    \label{eq:mpr-orthocore-unfolding-svd}
    t_{m} &= G_{\le m} H_{m} (G_{> m})^{\tra}
  \end{align}
  is a singular value decomposition of the unfolding matrix $t_{m}$.
  Let $U_{\le m} = (G_{\le m})^{\adjm}$ and $V_{>m} = (G_{>m})^{\adjm}$. Then
  \begin{align}
    \label{eq:mpr-orthocore-unfolding-svals}
    U_{\le m} t_{m} (V_{>m})^{\tra}  &= H_{m}
  \end{align}
  has the same singular values as $t_{m}$.
\end{rem}

The following \namecref{lem:mpr-interface-inv} provides an efficient, incremental construction of matrices $U_{\le k}$ and $V_{>k}$ such that the matrix $U_{\le m} t_{m} (V_{>m})^{\tra}$ has the same rank as $t_{m}$.
More general matrices are permitted than in \cref{eq:mpr-orthocore-unfolding-svals} and the rank is preserved (\cref{eq:mpr-unfolding-rk-result}) but the singular values of $U_{\le m} t_{m} (V_{>m})^{\tra}$ can differ from those of $t_{m}$.
The proof of \cref{lem:mpr-interface-inv,lem:rec-permsub} has been sketched in \parencite{Oseledets2010a}. 
In the premise of the following \namecref{lem:mpr-interface-inv}, it is possible to choose $U_{k}$ and $V_{k}$ as submatrices of permutation matrices (the case considered in \parencite{Oseledets2010a}), but the actual proof is independent of this choice. 

\begin{lem}
  \label{lem:mpr-interface-inv}
  
  Assume a matrix product representation of $t$ as in \cref{eq:mpr-matrix-product-repr} with $H_{m} = \idm$.
  In the following, $m \in \{1 \dots n-1\}$ is fixed, $j \in \{1 \dots m\}$ and $k \in \{m \dots n-1\}$.
  Choose matrices $U_{j} \in \C^{D_{j} \times D_{j-1}\ldim_{j}}$ and $V_{k+1} \in \C^{D_{k} \times \ldim_{k+1}D_{k+1}}$.
  Set $U_{\le 0} = 1$, $V_{> n} = 1$ and
  \begin{subequations}
  \begin{alignat}{3}
    \label{eq:mpr-left-int-inv}
    U_{\le j}
    \;=\;& U_{j}(U_{\le j-1} \otimes \idm_{\ldim_{j}}) &\;\in\;& \C^{D_{j} \times \ldim_{1}\dots\ldim_{j}}
    , \\
    \label{eq:mpr-right-int-inv}
    V_{>k} \;=\;& V_{k+1}(\idm_{\ldim_{k+1}} \otimes V_{>k+1}) &\;\in\;& \C^{D_{k} \times \ldim_{k+1}\dots\ldim_{\ns}}
    .
    \end{alignat}
  \end{subequations}
  In addition, set
  \begin{subequations}
  \begin{alignat}{3}
    \label{eq:mpr-left-int-inv-parts}
    \tilde U_{j}
    \;=\;& [(U_{\le j-1} G_{\le j-1}) \otimes \idm_{\ldim_{j}}] G_{k}^{\lunf} &\;\in\;& \C^{D_{j-1}d_{j} \times D_{j}}
    , \\
    \label{eq:mpr-right-int-inv-parts}
    \tilde V_{k} \;=\;& [\idm_{\ldim_{k+1}} \otimes (V_{>k+1} G_{>k+1})] (G_{k+1}^{\runf})^{\tra} &\;\in\;& \C^{d_{k+1}D_{k+1} \times D_{k}}
    .
    \end{alignat}%
    \label{eq:mpr-int-inv-parts}%
  \end{subequations}
  If the rank equalities
  \begin{subequations}
  \begin{align}
    \label{eq:mpr-left-int-inv-rk-cond} 
    \rk(U_{j}\tilde U_{j}) &= \rk(\tilde U_{j})
    \\
    \label{eq:mpr-right-int-inv-rk-cond} 
    \rk(V_{k} \tilde V_{k}) &= \rk(\tilde V_{k})
    \end{align}
  \end{subequations}
  hold, then the following rank equalities hold as well:
  \begin{subequations}
  \begin{align}
    \rk(U_{\le j} G_{\le j}) &= \rk(G_{\le j})
    \label{eq:mpr-left-int-inv-rk-result}
    \\
    \rk(V_{>k} G_{>k}) &= \rk(G_{>k})
    \label{eq:mpr-right-int-inv-rk-result}
    \\ \rk(U_{\le m} t_{m} (V_{>m})^{\tra}) &= \rk(t_{m})
    \label{eq:mpr-unfolding-rk-result}
    \end{align}
  \end{subequations}
\end{lem}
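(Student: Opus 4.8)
The plan is to first reduce all three rank equalities \eqref{eq:mpr-left-int-inv-rk-result}--\eqref{eq:mpr-unfolding-rk-result} to two elementary observations about matrix products: (i) for matrices $P,Q$ one has $\rk(PQ)=\rk(Q)$ exactly when $P$ is injective on the column space $\rg(Q)$; and (ii) if $P$ is injective on a subspace $W$, then $P\otimes\idm_{d}$ is injective on $W\otimes\C^{d}$ (a basis of $W$ maps to a linearly independent set, hence so does the corresponding product basis of $W\otimes\C^{d}$). Substituting the recursions \eqref{eq:mpr-left-interface} and \eqref{eq:mpr-left-int-inv} into the definition \eqref{eq:mpr-left-int-inv-parts} of $\tilde U_{j}$ and using $(AB)\otimes\idm=(A\otimes\idm)(B\otimes\idm)$, one obtains the identities
\[
  U_{\le j}\,G_{\le j} = U_{j}\,\tilde U_{j},\qquad
  \tilde U_{j} = (U_{\le j-1}\otimes\idm_{d_{j}})\,G_{\le j},
\]
and, symmetrically from \eqref{eq:mpr-right-interface}, \eqref{eq:mpr-right-int-inv} and \eqref{eq:mpr-right-int-inv-parts}, the identities $V_{>k}\,G_{>k}=V_{k+1}\,\tilde V_{k}$ and $\tilde V_{k}=(\idm_{d_{k+1}}\otimes V_{>k+1})\,G_{>k}$. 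These identities are the crux of the argument.

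Next I would prove \eqref{eq:mpr-left-int-inv-rk-result} by induction on $j$, with trivial base $\rk(U_{\le 0}G_{\le 0})=1=\rk(G_{\le 0})$ since $U_{\le 0}=G_{\le 0}=1$. For the inductive step, the hypothesis $\rk(U_{\le j-1}G_{\le j-1})=\rk(G_{\le j-1})$ says, by (i), that $U_{\le j-1}$ is injective on $\rg(G_{\le j-1})$; hence by (ii) the map $U_{\le j-1}\otimes\idm_{d_{j}}$ is injective on $\rg(G_{\le j-1})\otimes\C^{d_{j}}$, which contains $\rg(G_{\le j})$ by \eqref{eq:mpr-left-interface}. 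Therefore $\rk(\tilde U_{j})=\rk\bigl((U_{\le j-1}\otimes\idm_{d_{j}})G_{\le j}\bigr)=\rk(G_{\le j})$, and combining with $U_{\le j}G_{\le j}=U_{j}\tilde U_{j}$ and hypothesis \eqref{eq:mpr-left-int-inv-rk-cond} gives $\rk(U_{\le j}G_{\le j})=\rk(U_{j}\tilde U_{j})=\rk(\tilde U_{j})=\rk(G_{\le j})$. Equality \eqref{eq:mpr-right-int-inv-rk-result} is obtained in the same way by downward induction on $k$ from $k=n$ (where $V_{>n}=G_{>n}=1$), using \eqref{eq:mpr-right-interface} and the hypothesis \eqref{eq:mpr-right-int-inv-rk-cond}.

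Finally I would deduce \eqref{eq:mpr-unfolding-rk-result}. Since $H_{m}=\idm$ in the present premise, \eqref{eq:mpr-unfolding-with-interface} gives $t_{m}=G_{\le m}(G_{>m})^{\tra}$, hence $t_{m}^{\tra}=G_{>m}(G_{\le m})^{\tra}$, so that $\rg(t_{m}^{\tra})\subseteq\rg(G_{>m})$ and $\rg(t_{m})\subseteq\rg(G_{\le m})$. Put $r=\rk(t_{m})$. By \eqref{eq:mpr-right-int-inv-rk-result} and (i), $V_{>m}$ is injective on $\rg(G_{>m})$, hence on $\rg(t_{m}^{\tra})$, so $\rk\bigl(t_{m}(V_{>m})^{\tra}\bigr)=\rk\bigl(V_{>m}t_{m}^{\tra}\bigr)=r$. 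Since $\rg\bigl(t_{m}(V_{>m})^{\tra}\bigr)\subseteq\rg(t_{m})\subseteq\rg(G_{\le m})$, on which $U_{\le m}$ is injective by \eqref{eq:mpr-left-int-inv-rk-result} and (i), applying injectivity once more yields $\rk\bigl(U_{\le m}t_{m}(V_{>m})^{\tra}\bigr)=\rk\bigl(t_{m}(V_{>m})^{\tra}\bigr)=r=\rk(t_{m})$, as required.

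The only genuine difficulty is bookkeeping. One must unwind the nested interface recursions \eqref{eq:mpr-left-int-inv}, \eqref{eq:mpr-left-interface} (and their right counterparts) carefully enough to recognize the two identities $U_{\le j}G_{\le j}=U_{j}\tilde U_{j}$ and $\tilde U_{j}=(U_{\le j-1}\otimes\idm_{d_{j}})G_{\le j}$, after which the induction is essentially forced; and in the final step one must route the two injectivity facts through the correct column spaces so that the a priori inequalities $\rk(\cdot)\le r$ collapse to equalities. Everything else reduces to the elementary facts (i) and (ii), so no deep input is needed beyond the structure of the matrix product representation.
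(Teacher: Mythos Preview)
Your proof is correct and follows essentially the same route as the paper: both establish the key identity $U_{\le j}G_{\le j}=U_{j}\tilde U_{j}$ and then run an induction on $j$ (resp.\ downward on $k$) to propagate the rank equalities, finishing with the factorization $t_{m}=G_{\le m}(G_{>m})^{\tra}$. The only cosmetic difference is that you phrase the core step as ``injectivity on column spaces'' (your observations (i) and (ii)), whereas the paper invokes the equivalent fact from \cref{prop:rec-matrixdec} that $\rk(LM)=\rk(M)$ implies $\rk(LMR)=\rk(MR)$; your extra identity $\tilde U_{j}=(U_{\le j-1}\otimes\idm_{d_{j}})G_{\le j}$ is a helpful repackaging of the same computation.
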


\begin{proof}
  Note that
  \begin{align}
    \label{eq:mpr-int-inv-part-composition}
    U_{\le j} G_{\le j} &= U_{j} \tilde U_{j},
    & V_{>k} G_{>k} &= V_{k} \tilde V_{k}.
  \end{align}
  The matrices $\tilde U_{k}$ and $\tilde V_{k}$ can be computed efficiently by using \cref{eq:mpr-int-inv-part-composition} to compute $U_{\le j} G_{\le j}$ and $V_{>k} G_{>k}$.  
  If we refer to \cref{prop:rec-matrixdec} in the remainder of the proof, we use the fact that $\rk(LM) = \rk(M)$ implies $\rk(LMR) = \rk(MR)$ for three matrices $L$, $M$ and $R$.
  
  For $j = 1$, we have $U_{\le j} = U_{j}$ and $\tilde U_{j} = G_{j}^{\lunf} = G_{\le j}$, i.e. \eqref{eq:mpr-left-int-inv-rk-cond} implies \eqref{eq:mpr-left-int-inv-rk-result} for $j = 1$.
  Suppose that \cref{eq:mpr-left-int-inv-rk-result} holds for some $j-1 \in \{1 \dots m-1\}$, i.e.
  \begin{align}
    \rk(U_{\le j-1} G_{\le j-1}) = \rk(G_{\le j-1})
  \end{align}
  which implies
  \begin{align}
    \rk((U_{\le j-1} G_{\le j-1}) \otimes \idm_{\ldim_{j}}) = \rk(G_{\le j-1} \otimes \idm_{\ldim_{j}}).
  \end{align}
  This in turn implies (\cref{prop:rec-matrixdec})
  \begin{align}
    \rk(\tilde U_{j}) = \rk([(U_{\le j-1} G_{\le j-1}) \otimes \idm_{\ldim_{j}}] G_{j}^{\lunf}) = \rk([G_{\le j-1} \otimes \idm_{\ldim_{j}}] G_{j}^{\lunf}) = \rk(G_{\le j}).
  \end{align}
  Then
  \begin{align}
    \rk(U_{\le j} G_{\le j}) = \rk(U_{j}\tilde U_{j}) = \rk(\tilde U_{j}) = \rk(G_{\le j})
  \end{align}
  where we used in turn \cref{eq:mpr-int-inv-part-composition}, \cref{eq:mpr-left-int-inv-rk-cond} and the last but one equation. This shows that \cref{eq:mpr-left-int-inv-rk-result} holds for $j \in \{1 \dots m\}$.
  
  The proof of \cref{eq:mpr-right-int-inv-rk-result} proceeds in the same way:
  For $k = n-1$, we have $V_{>k} = V_{k+1}$ and $\tilde V_{k} = (G_{k+1}^{\runf})^{\tra} = G_{>k}$, i.e. \eqref{eq:mpr-right-int-inv-rk-cond} implies \eqref{eq:mpr-right-int-inv-rk-result} for $k = n-1$.
  Suppose that \eqref{eq:mpr-right-int-inv-rk-result} holds for some $k+1 \in \{m+1\dots n\}$, i.e.
  \begin{align*}
    \rk(\tilde V_{k})
    &=
    \rk([\idm_{\ldim_{k+1}} \otimes (V_{>k+1} G_{>k+1})] (G_{k+1}^{\runf})^{\tra})
    =
    \rk([\idm_{\ldim_{k+1}} \otimes (G_{>k+1})] (G_{k+1}^{\runf})^{\tra})
    \\
    &=
    \rk(G_{>k}).
  \end{align*}
  This implies
  \begin{align}
    \rk(V_{>k}G_{>k}) 
    &=
      \rk(V_{k} \tilde V_{k})
      =
      \rk(\tilde V_{k})
      =
      \rk(G_{>k}).
  \end{align}
  We have used, in turn, \eqref{eq:mpr-int-inv-part-composition}, \eqref{eq:mpr-right-int-inv-rk-cond} and the last but one equation.
  This implies that \eqref{eq:mpr-right-int-inv-rk-cond} holds for $k \in \{m \dots n-1\}$. 

  The unfolding can be written as $t_{m} = G_{\le m} (G_{>m})^{\tra}$ (\cref{eq:mpr-unfolding-with-interface}).
  Applying \cref{eq:mpr-left-int-inv-rk-result,eq:mpr-right-int-inv-rk-result} and \cref{prop:rec-matrixdec} provides
  \begin{align}
    \rk(t_{m}) &= \rk(G_{\le m} (G_{>m})^{\tra}) = \rk(U_{\le m}G_{\le m} (G_{>m})^{\tra}) = \rk(U_{\le m}G_{\le m} (G_{>m})^{\tra} (V_{>m})^{\tra}) \nonumber \\
    &= \rk(U_{\le m} t_{m} (V_{>m})^{\tra}). 
  \end{align}
  This shows that \cref{eq:mpr-unfolding-rk-result} holds and finishes the proof. 
\end{proof}

In the last Lemma, it was possible to choose $U_{j}$ as submatrices of permutation matrices.
The following Lemma shows that this implies that the $U_{\le j}$ are submatrices of permutation matrices as well and that the position of the non-zero entries of $U_{\le j}$ can be computed efficiently. 

\begin{lem}
  \label{lem:rec-permsub}
  
  In the following, let $j \in \{1 \dots m\}$.
  Choose matrices $U_{j} \in \C^{D_{j} \times D_{j-1} d_{j}}$ which are submatrices of permutation matrices. 
  Set $U_{\le 0} = 1$ and set
  \begin{align}
    U_{\le j} = U_{j} (U_{\le j-1} \otimes \idm_{d_{j}}) \quad \in \C^{D_{j} \times d_{1} \dots d_{j}}.
    \label{eq:rec-permsub-def}
  \end{align}
  In the following, we also assume $i_{j} \in \{1 \dots d_{j} \}$ and $b_{j} \in \{1 \dots D_{j}\}$. 
  Denote the set of unit entries in a given row of $U_{j}$ and $U_{\le j}$ by
  \begin{subequations}
    \begin{alignat}{3}
      \label{eq:rec-permsub-rowset-j}
      f_{j}(b_{j})
      &= \bigl\{ (b_{j-1},i_{j}) \colon&& [U_{j}]_{b_{j},(b_{j-1},i_{j})} &&= 1 \bigr\},
      \\
      \label{eq:rec-permsub-rowset-le-j}
      f_{\le j}(b_{j})
      &= \bigl\{ (i_{1},\dots, i_{j}) \colon&& [U_{\le j}]_{b_{j},(i_{1} \dots i_{j})} &&= 1 \bigr\}.
    \end{alignat}
  \end{subequations}
  The latter set is given by
  \begin{align}
    \label{eq:rec-permsub-rowset-result}
    f_{\le j}(b_{j})
    &=
      \cura{
      (i_{1},\dots,i_{j}) \colon
      \;
      (b_{j-1}, i_{j}) \in f_{j}(b_{j})
      \;
      \text{ and }
      \;
      (i_{1},\dots,i_{j-1}) \in f_{\le j-1}(b_{j-1})
      \;
      },
  \end{align}
  and $U_{\le j}$ is a submatrix of a permutation matrix, i.e.\ we have $\abs{f_{\le j}(b_{j})} \le 1$ and the entries of $U_{\le j}$ are given by
  \begin{align}
    \label{eq:rec-permsub-entry-result}
    [U_{\le j}]_{b_{j},(i_{1} \dots i_{j})} =
    \begin{cases}
      1, & \text{if } (i_{1}, \dots, i_{j}) \in f_{\le j}(b_{j}), \\
      0, & \text{otherwise.}
    \end{cases}
  \end{align}
\end{lem}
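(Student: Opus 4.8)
The plan is to prove both assertions at once—the recursive description \eqref{eq:rec-permsub-rowset-result} of the row support and the claim that $U_{\le j}$ is a submatrix of a permutation matrix (hence $\abs{f_{\le j}(b_j)} \le 1$ and \eqref{eq:rec-permsub-entry-result})—by induction on $j$, since these two facts feed into each other at the next recursion step.

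For the base case $j = 1$, I would use \eqref{eq:rec-permsub-def} together with $U_{\le 0} = 1$ to get $U_{\le 1} = U_1(U_{\le 0} \otimes \idm_{d_1}) = U_1$. Since the bond index $b_0$ takes only one value (and with the convention that the empty tuple lies in $f_{\le 0}(b_0)$), one reads off $f_{\le 1}(b_1) = \{(i_1) : [U_1]_{b_1,i_1} = 1\} = f_1(b_1)$, which is \eqref{eq:rec-permsub-rowset-result} for $j = 1$; and $U_1$ is a submatrix of a permutation matrix by hypothesis, so $\abs{f_1(b_1)} \le 1$ and \eqref{eq:rec-permsub-entry-result} holds.

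For the inductive step, the key computation is to expand the product \eqref{eq:rec-permsub-def} entrywise. Attaching the identity block $\idm_{d_j}$ to the last physical leg pins the index $i_j$ on both factors to agree, while the bond index $b_{j-1}$ gets summed:
\begin{align}
  [U_{\le j}]_{b_j,(i_1 \dots i_j)}
  &= \sum_{b_{j-1}} [U_j]_{b_j,(b_{j-1},i_j)} \, [U_{\le j-1}]_{b_{j-1},(i_1 \dots i_{j-1})}.
\end{align}
By the inductive hypothesis each factor $[U_{\le j-1}]_{b_{j-1},(i_1 \dots i_{j-1})}$ is $0$ or $1$, by hypothesis each $[U_j]_{b_j,(b_{j-1},i_j)}$ is $0$ or $1$, and since $\abs{f_j(b_j)} \le 1$ there is, for the fixed value of $i_j$, at most one $b_{j-1}$ with $[U_j]_{b_j,(b_{j-1},i_j)} = 1$. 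Hence the sum has at most one nonzero summand, so $[U_{\le j}]_{b_j,(i_1 \dots i_j)} \in \{0,1\}$, and it equals $1$ exactly when some $b_{j-1}$ satisfies $(b_{j-1},i_j) \in f_j(b_j)$ and $(i_1,\dots,i_{j-1}) \in f_{\le j-1}(b_{j-1})$. Reading this off gives \eqref{eq:rec-permsub-rowset-result} and \eqref{eq:rec-permsub-entry-result} for the index $j$. Finally $\abs{f_{\le j}(b_j)} \le 1$ follows from \eqref{eq:rec-permsub-rowset-result}: if $f_j(b_j) = \emptyset$ the set is empty; otherwise its unique element $(b_{j-1}^{\star},i_j^{\star})$ forces $i_j = i_j^{\star}$ and $b_{j-1} = b_{j-1}^{\star}$, after which $(i_1,\dots,i_{j-1})$ ranges over $f_{\le j-1}(b_{j-1}^{\star})$, a set of size at most one by the inductive hypothesis. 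This closes the induction, and efficiency of computing $f_{\le j}$ is immediate from the recursion \eqref{eq:rec-permsub-rowset-result}.

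I do not expect a genuine obstacle here; the only thing requiring care is the multi-index bookkeeping in the Kronecker product—checking that $\idm_{d_j}$ is attached to the correct (last) leg so that $i_j$ is the pinned index while $b_{j-1}$ is the summed one—together with consistent use of the conventions $U_{\le 0} = 1$, membership of the empty tuple, and the index ranges $i_j \in \{1 \dots d_j\}$, $b_j \in \{1 \dots D_j\}$ throughout.
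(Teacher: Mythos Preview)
Your proposal is correct and follows essentially the same route as the paper: induction on $j$, expanding \eqref{eq:rec-permsub-def} entrywise into the sum $\sum_{b_{j-1}} [U_j]_{b_j,(b_{j-1},i_j)}[U_{\le j-1}]_{b_{j-1},(i_1\dots i_{j-1})}$ and using $\abs{f_j(b_j)}\le 1$ to collapse it to a single $0/1$ term. The only cosmetic difference is that the paper anchors the induction at $j=0$ (with $f_{\le 0}(1)=\{1\}$) rather than at $j=1$.
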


\begin{proof}
  As $U_{j}$ is a submatrix of a permutation matrix, we have $\abs{f_{j}(b_{j})} \le 1$ and the entries of $U_{j}$ are given by
  \begin{align}
    \label{eq:rec-permsub-j-entries}
    [U_{j}]_{b_{j},(b_{j-1},i_{j})} =
    \begin{cases}
      1, & \text{if } (b_{j-1},i_{j}) \in f_{j}(b_{j}), \\
      0, & \text{otherwise.}
    \end{cases}
  \end{align}
  The entries of $U_{\le j}$ are given by (\cref{eq:rec-permsub-def,eq:rec-permsub-j-entries})
  \begin{subequations}
    \begin{align*}
      [U_{\le j}]_{b_{j},(i_{1}\dots i_{j})}
      &=
        \sum_{b_{j-1}} [U_{j}]_{b_{j},(b_{j-1},i_{j})} [U_{\le j-1}]_{b_{j-1},(i_{1} \dots i_{j-1})}
      =
        \sum_{\substack{b_{j-1}\colon\\(b_{j-1}, i_{j}) \in f_{j}(b_{j})}} [U_{\le j-1}]_{b_{j-1},(i_{1} \dots i_{j-1})}
        \nonumber
      \\
      &=
        \begin{cases}
          [U_{\le j-1}]_{b_{j-1},(i_{1} \dots i_{j-1})} \text{ where } (b_{j-1}, i_{j}) \in f_{j}(b_{j}) &
          \text{if } \abs{f_{j}(b_{j})} = 1, \\
          0 & \text{otherwise.}
        \end{cases}
    \end{align*}
  \end{subequations}
  Here, we used that there is at most one element $(b_{j-1}, i_{j}) \in f_{j}(b_{j})$.
  Note that $\abs{f_{\le j}(b_{j})} \le 1$ and \cref{eq:rec-permsub-entry-result} hold for $j = 0$ (as $f_{\le 0}(1) = \{ 1 \}$); assume that the two conditions hold for some $j-1 \in \{0 \dots m-1\}$.
  In the following, we show that they also hold for $j$.
  If we apply the assumption to the last equation, we obtain
  \begin{align}
    [U_{\le j}]_{b_{j},(i_{1}\dots i_{j})}
    &=
      \begin{cases}
        1, & \text{if } (b_{j-1}, i_{j}) \in f_{j}(b_{j})
        \;\text{ and }\;
        (i_{1}, \dots, i_{j-1}) \in f_{\le j-1}(b_{j-1}), \\
        0, & \text{otherwise.}
      \end{cases}
  \end{align}
  The set $f_{\le j}(b_{j})$ has at most one element because $f_{j}(b_{j})$ has at most one element and because we assumed that $f_{\le j - 1}(b_{j-1})$ (for the single possible value of $b_{j-1}$) has at most one element as well. 
  This finishes the proof. 
\end{proof}

\subsection{Sequence of local quantum operations as PMPS representation
  \label{sec:rec-appendix-pmps-and-seq-prep}}

This section introduces the locally purified matrix product state (PMPS) representation and discusses the known fact that a sequentially prepared mixed quantum state can be represented as a PMPS or as an MPO (\cref{lem:rec-spinchain-seq-prep}). 
The PMPS representation \parencite{Verstraete2004,DelasCuevas2013} provides an alternative to the MPO representation for positive semidefinite operators such as mixed quantum states. 
The purification is given in terms of $\ns$ ancilla systems of dimensions $\ldim'_{k}$ with bases $\{\ket{\varphi^{(k)}_{i'_{k}}}\}_{k=1}^{\ldim'_{k}}$. 
A PMPS representation of $\rho$ is given by
\begin{align}
  \nonumber
  \rho
  &= \tr_{1'\dots\ns'}(\ketbra\Psi\Psi)
  &
    D_{0}
  &= D_{\ns} = 1
  \\
  \label{eq:rec-intro-pmps-repr}
  \braketc{
  \phi^{(1)}_{i_{1}}\varphi^{(1)}_{i'_{1}} \dots \phi^{(\ns)}_{i_{\ns}}\varphi^{(\ns)}_{i'_{\ns}}
  }{\Psi}
  &=
    G_{1}(i_{1},i'_{1}) G_{2}(i_{2},i'_{2}) \dots G_{\ns}(i_{\ns},i'_{\ns})
  &
    i_{k}
  &\in \{1, \dots, \ldim_{k}\}
  \\ \nonumber
  G_{k}(i_{k},i'_{k})
  &\in \C^{D_{k-1} \times D_{k}}
  &
    i'_{k}
  &\in \{1, \dots, \ldim'_{k}\}
\end{align}
Given the tensors $G_{k}$ of a PMPS representation, the tensors $\tilde G_{k}$ of an MPO representation are given by
\begin{align}
  \label{eq:rec-intro-pmps-as-mpo}
  \tilde G_{k}(i_{k}, j_{k})
  &=
    \sum_{i'_{k}=1}^{\ldim'_{k}} G_{k}(i_{k}, i'_{k}) \otimes \overline{G_{k}(i'_{k}, j_{k})}
\end{align}
where the overline denotes the complex conjugate.
\Cref{eq:rec-intro-pmps-as-mpo} shows that given a PMPS representation of bond dimension $D$, we can directly construct an MPO representation with bond dimension $D^{2}$.
However, an MPO representation with bond dimensions smaller than $D^{2}$ can exist.
It has been shown that there is a family of quantum states on $\ns$ systems which can be represented as an MPO with bond dimension independent of $\ns$ but the bond dimension of any PMPS representation of those states increases with $\ns$ \parencite{DelasCuevas2013}.
This is an advantage of the MPO representation, but on the other hand, deciding whether a given MPO representation represents a positive semidefinite operator is an NP-hard problem, i.e. a solution in polynomial (in $n$) time is unlikely \parencite{Kliesch2014a}.
The PMPS representation has the advantage that it always represents a positive semidefinite operator by definition. 
The relative merits of the MPO and PMPS representations of a mixed quantum state depend on the application.

Suppose that a quantum state $\rho \in \qs(\hilb_{1\dots \ns})$ was prepared via quantum operations
\\
$\issop{\preparemap_{k}}{\hilb_{k-1}}{\hilb_{k-1,k}}$,
i.e.
\begin{align}
  \rho = \preparemap_{\ns} \preparemap_{\ns-1} \dots \preparemap_{3} \preparemap_{2}(\sigma),
  \quad\quad
  \sigma \in \qs(\hilb_{1}).
\end{align}
Clearly, this is an efficient representation of the quantum state $\rho$ as it is described by at most $\ns \ldim^{6}$ parameters.
It is known that such a representation can be efficiently -- i.e.~with at most $\poly(\ns)$ computational time -- converted into an MPO representation or a PMPS representation \parencite{Poulin2011,Kliesch2014a}.
The following \namecref{lem:rec-spinchain-seq-prep} provides the technical details of the conversion:
\begin{lem}
  \label{lem:rec-spinchain-seq-prep}
  
  Let $\rightsetp{k}$ ($k \in \{1, \dots, \ns\}$) be systems with $\dim(\rightsetp{\ns}) = 1$.
  Let $\rho_{1} \in \linop(\hilbsysand{1}\rightsetp{1})$ and $\preparemap_{k} \in \sop{\rightsetp{k-1}}{\hilbsysand{k} \rightsetp{k}}$.
  A linear operator $\rho \in \linop(\hilb_{1\dots \ns})$ is given by
  \begin{align}
    \label{eq:rec-spinchain-seq-prep}
    \rho = \rho_{\ns}, \quad\quad\quad\quad
    \rho_{k} &= (\id \otimes \preparemap_{k})(\rho_{k-1}) \quad\quad (k \in \{2, \dots, \ns\}).
  \end{align}
  Let $F^{(k)}_{i_{k}}$ and $F^{(\rightsetp{k})}_{b_{k}}$ be orthonormal operator bases of systems $k$ and $\rightsetp{k}$.
  An MPO representation of $\rho$ is given by
  \begin{subequations}
    \label{eq:rec-spinchain-seq-prep-mpo}
    \begin{align}
      [G_{1}(i_{1})]_{b_{0},b_{1}}
      &= \skp{F^{(1)}_{i_{1}} \otimes F^{(\rightsetp{1})}_{b_{1}}}{\rho_{1}},
      \\
      [G_{k}(i_{k})]_{b_{k-1},k_{k}}
      &= \skp{F^{(k)}_{i_{k}} \otimes F^{(\rightsetp{k})}_{b_{k}}}{\preparemap_{k}(F^{(\rightsetp{k-1})}_{b_{k-1}})},
      \quad\quad (k \in \{2, \dots, \ns\}).
    \end{align}
  \end{subequations}
  The bond dimensions of the representation are given by $D_{k} = (\ldim_{\rightsetp{k}})^{2}$ ($k \in \{1, \dots, \ns - 1\}$, $D_{0} = D_{\ns} = 1$).
  Now, let $\rho_{1}$ be a quantum state and $\preparemap_{k}$ be CPTP linear maps. In addition, let
  \begin{align}
    \label{eq:rec-spinchain-seq-prep-decomp-for-pmps}
    \rho_{1}
    &= \sum_{i=1}^{r_{1}} \ketbra{\psi_{i}}{\psi_{i}},
    &
      \preparemap_{k}(\cdot)
    &=
      \sum_{i=1}^{r_{k}}
      E^{(k)}_{i}
      \;\cdot\;
      (E^{(k)}_{i})^{\adjm}
      \quad\quad (k \in \{2, \dots, \ns\})
  \end{align}
  be a decomposition of $\rho_{1}$ into $r_{1} = \rk(\rho_{1})$ orthogonal vectors and a Kraus decomposition of $\preparemap_{k}$ where $r_{k}$ equals the Kraus rank of $\preparemap_{k}$ (i.e.\ $r_{1} \le \ldim_{1} \ldim_{\rightsetp{1}}$ and $r_{k} \le \ldim_{k} \ldim_{\rightsetp{k}} \ldim_{\rightsetp{k-1}}$).
  Let $\{\ket{i_{k}}_{k}\}_{i_{k}}$ and $\{\ket{b_{k}}_{\rightsetp{k}}\}_{b_{k}}$ be orthonormal bases of $\hilb_{k}$ and $\rightsetp{k}$, respectively. 
  A PMPS representation of $\rho$ is given by
  \begin{subequations}
    \label{eq:rec-spinchain-seq-prep-pmps}
    \begin{align}
      [G_{1}(i_{1},i'_{1})]_{b_{0},b_{1}}
      &\,=\; {}_{1}\bra{i_{1}} \, {}_{\rightsetp{1}}\!\braket{b_{1}}{\psi_{i'_{1}}}, \\
      [G_{k}(i_{k},i'_{k})]_{b_{k-1},b_{k}}
      &\,=\; {}_{k}\bra{i_{k}} \, {}_{\rightsetp{k}}\!\bra{b_{k}} E^{(k)}_{i'_{k}} \ket{b_{k-1}}_{\rightsetp{k-1}}
      \quad\quad (k \in \{2, \dots, \ns\}).
    \end{align}
    The bond dimensions of the representation are given by $D_{k} = \ldim_{\rightsetp{k}}$ ($k \in \{1, \dots, \ns - 1\}$, $D_{0} = D_{\ns} = 1$) and the ancilla dimensions are given by $r_{k}$. 
  \end{subequations}  
\end{lem}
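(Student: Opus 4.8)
The plan is to prove both representations by induction along the chain, using only the recursion $\rho_k = (\id\otimes\preparemap_k)(\rho_{k-1})$ from \eqref{eq:rec-spinchain-seq-prep}, the component expansion \eqref{eq:rec-intro-components}, the resolution of the identity \eqref{eq:rec-intro-op-identity}, and the composition identity of \cref{lem:rec-state-as-matrix-composition}. For the MPO claim \eqref{eq:rec-spinchain-seq-prep-mpo} I would first establish, for each $k\in\{1,\dots,\ns\}$ and each choice of indices, the stronger statement
\[
  \skp{F^{(1)}_{i_1}\otimes\dots\otimes F^{(k)}_{i_k}\otimes F^{(\rightsetp{k})}_{b_k}}{\rho_k}
  = \bigl[\,G_1(i_1)\,G_2(i_2)\cdots G_k(i_k)\,\bigr]_{b_k},
\]
with the cores $G_j$ as in \eqref{eq:rec-spinchain-seq-prep-mpo} (the product is a $1\times D_k$ row vector, indexed by $b_k$); the base case $k=1$ is the definition of $G_1$. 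For the inductive step I would expand $\rho_{k-1}$ in the product operator basis, apply $\id\otimes\preparemap_k$ (which acts as $\preparemap_k$ only on the $\rightsetp{k-1}$ factor), take the Hilbert--Schmidt inner product with the target basis element, and use orthonormality of the $F^{(m)}_{i_m}$ on $\hilb_m$ for $m<k$ to collapse the sum over the first $k-1$ indices. What survives is $\sum_c\skp{F^{(1)}_{i_1}\otimes\dots\otimes F^{(k-1)}_{i_{k-1}}\otimes F^{(\rightsetp{k-1})}_c}{\rho_{k-1}}\,\skp{F^{(k)}_{i_k}\otimes F^{(\rightsetp{k})}_{b_k}}{\preparemap_k(F^{(\rightsetp{k-1})}_c)}$; the first factor is the induction hypothesis $[G_1(i_1)\cdots G_{k-1}(i_{k-1})]_c$ and the second is $[G_k(i_k)]_{c,b_k}$ by definition \eqref{eq:rec-spinchain-seq-prep-mpo}, so the sum over $c$ is exactly the matrix product. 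Taking $k=\ns$ and using $\dim(\rightsetp{\ns})=1$ (so $b_{\ns}$ runs over a single value and $F^{(\rightsetp{\ns})}_{b_{\ns}}$ is the scalar $1$) gives \eqref{eq:rec-spinchain-seq-prep-mpo}; since $b_k$ ranges over an orthonormal operator basis of $\linop(\rightsetp{k})$, the bond dimensions are $D_k=(\ldim_{\rightsetp{k}})^{2}$.

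For the PMPS claim \eqref{eq:rec-spinchain-seq-prep-pmps} I would not purify the MPO just obtained, since \eqref{eq:rec-intro-pmps-as-mpo} would then only yield bond dimension $(\ldim_{\rightsetp{k}})^{2}$ rather than $\ldim_{\rightsetp{k}}$; instead I would dilate each map directly. Write $\rho_1 = \tr_{1'}\ketbra{\Psi_1}{\Psi_1}$ with $\ket{\Psi_1}=\sum_{i'_1}\ket{\psi_{i'_1}}\otimes\ket{\varphi^{(1)}_{i'_1}}$, taking the ancilla basis $\ket{\varphi^{(k)}_{i'_k}}$ to be the Kraus-index basis of $\hilb_{k'}$, and for $k\ge2$ set $W_k=\sum_{i'_k}E^{(k)}_{i'_k}\otimes\ket{\varphi^{(k)}_{i'_k}}\colon\rightsetp{k-1}\to\hilb_k\otimes\rightsetp{k}\otimes\hilb_{k'}$, so that $\preparemap_k(\omega)=\tr_{k'}(W_k\,\omega\,W_k^{\adjm})$ holds by the Kraus decomposition \eqref{eq:rec-spinchain-seq-prep-decomp-for-pmps}. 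Define $\ket{\Psi_k}=(\id\otimes W_k)\ket{\Psi_{k-1}}$ with $W_k$ acting on the $\rightsetp{k-1}$ factor. I would then show by induction that (i) $\tr_{1'\dots k'}\ketbra{\Psi_k}{\Psi_k}=\rho_k$: trace out $1'\dots(k-1)'$ first, which $W_k$ does not touch, to obtain $(\id\otimes W_k)\rho_{k-1}(\id\otimes W_k^{\adjm})$, then trace out $k'$ to recover $(\id\otimes\preparemap_k)(\rho_{k-1})=\rho_k$; and (ii) the components of $\ket{\Psi_k}$ in the product basis equal $[G_1(i_1,i'_1)\cdots G_k(i_k,i'_k)]_{b_k}$, obtained by projecting the $\hilb_{k'}$-leg $\ket{\varphi^{(k)}_{i'_k}}$ onto the Kraus index $i'_k$ and reading off $[G_k(i_k,i'_k)]_{b_{k-1},b_k}={}_k\bra{i_k}\,{}_{\rightsetp{k}}\!\bra{b_k}\,E^{(k)}_{i'_k}\,\ket{b_{k-1}}_{\rightsetp{k-1}}$. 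With $k=\ns$ and $\dim(\rightsetp{\ns})=1$ this is \eqref{eq:rec-spinchain-seq-prep-pmps}, with bond dimensions $D_k=\ldim_{\rightsetp{k}}$ (an orthonormal Hilbert-space basis of $\rightsetp{k}$) and ancilla dimensions $r_k$.

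I do not expect a genuine obstacle here; the work is bookkeeping. The main care needed is in tracking tensor-factor orderings --- keeping the output $\hilb_k$ leg of $\preparemap_k$ (respectively of $W_k$) in its own slot while only the $\rightsetp{k}$ leg is passed to the next map --- together with the two boundary conventions $D_0=1$ (because $\rho_1$ carries no incoming bond, so $G_1$ is a row vector) and $\dim(\rightsetp{\ns})=1$ (so $G_{\ns}$ is a column vector). Finally I would remark that the decompositions in \eqref{eq:rec-spinchain-seq-prep-decomp-for-pmps} with the stated ranks exist by the spectral decomposition of $\rho_1$ and by the standard fact that a CPTP map has a Kraus decomposition with Kraus rank equal to the rank of its Choi operator, which also gives the bounds $r_1\le\ldim_1\ldim_{\rightsetp{1}}$ and $r_k\le\ldim_k\ldim_{\rightsetp{k}}\ldim_{\rightsetp{k-1}}$.
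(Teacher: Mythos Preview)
Your proposal is correct and follows essentially the same approach as the paper: the paper's proof is a two-sentence sketch instructing the reader to evaluate the operator-basis components (for the MPO) and the matrix entries via the Kraus decomposition (for the PMPS) and compare with the definitions of the representations, which is exactly the inductive bookkeeping you have spelled out. Your construction of the explicit purification $\ket{\Psi_k}$ via the Stinespring isometries $W_k$ is the natural way to make the paper's ``evaluate the matrix entries'' precise, and the induction on $k$ that you outline is the straightforward way to organise the comparison.
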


\begin{proof}
  For the MPO representation, evaluate the operator basis elements of $\rho$ from \cref{eq:rec-spinchain-seq-prep} and compare with the operator basis elements of the representation (\cref{eq:rec-intro-mpo-repr-opbasis}). For the PMPS representation, evaluate the matrix entries of $\rho$ from \cref{eq:rec-spinchain-seq-prep} inserting \cref{eq:rec-spinchain-seq-prep-decomp-for-pmps} and compare with the matrix entries of the representation (\cref{eq:rec-intro-pmps-repr}).
\end{proof}

\subsection{General linear maps as measurements
  \label{sec:rec-appendix-intro-linmap-meas}}

Consider a quantum state $\rho \in \qs(Y)$ and an arbitrary linear map $\mcn \in \sop{Y}{X}$.
Below, we work with linear maps $\mcn$ which are not necessarily CPTP and therefore do not represent a physical operation on the quantum state.
Such a map $\mcn$ is of relevance only if $\mcn(\rho)$ can be obtained from the outcomes of physical measurements on $\rho$.
In this section, we show how this can be achieved, allowing the reconstruction scheme from \cref{sec:rec-reconstr-bipartite-states-subsec} to be used for quantum state tomography. 

Firstly, we construct a set of observables $G_{i} \in \linop(Y)$ ($i \in \{1, \dots, 2d_{X}^{2}\}$) whose expectation values $\tr(G_{i} \rho)$ in the state $\rho$ can be used to compute $\mcn(\rho)$.
Secondly, we construct a POVM with $2d_{x}^{2} + 1$ elements $E_{i} \in \linop(Y)$ such that the outcome probabilities $\tr(E_{i} \rho)$ in the state $\rho$ can also be used to compute $\mcn(\rho)$.

We denote the components of $\mcn(\rho)$ in an operator basis $F^{(X)}_{i}$ of $X$ by $s_{i}$:
\begin{align}
  s_{i} &= \skp{F^{(X)}_{i}}{\mcn(\rho)}
  & i &\in \{1, \dots, \ldim_{X}^{2}\}
  & \ldim_{X} &= \dim(X)
\end{align}
The key tool is the following property of the map $\mcn^{\adjs}$:
\begin{align}
  s_{i} = \skp{\mcn^{\adjs}(F^{(X)}_{i})}{\rho} = \tr(H_{i}\rho), \quad H_{i} = \parena{\mcn^{\adjs}(F^{(X)}_{i})}^{\adjm}
\end{align}
Since $H_{i}$ may not be Hermitian, we use its Hermitian and skew-Hermitian components:
\begin{align}
  G_{i} &= \frac12 \parena{ H_{i} + H_{i}^{\adjm} }
  &
    G_{i + \ldim_{X}^{2}} &= \frac1{2\ii} \parena{ H_{i} - H_{i}^{\adjm} }
  &
    H_{i} &= G_{i} + \ii G_{i + \ldim_{X}^{2}}
\end{align}
Using the observables $G_{i}$, the components $s_{i}$ can be expressed as follows:
\begin{align}
  s_{i} &= \skp{\mcn^{\adjs}(F^{(X)}_{i})}{\rho} = \tr(G_{i}\rho) + \ii \tr(G_{i+\ldim_{X}^{2}} \rho)
\end{align}
In other words, the expectation values of the $G_{i}$ provide the real and imaginary parts of $s_{i}$:
\begin{align}
  \label{eq:rec-intro-mcn-rho-coeff-imag-real}
  \Re(s_{i}) &= \tr(G_{i} \rho) & \Im(s_{i}) &= \tr(G_{i+\ldim_{X}^{2}} \rho)
\end{align}
If these expectation values can be measured, we already obtain a way to obtain $\mcn(\rho)$ from physical measurements on $\rho$ even if $\mcn$ is not CPTP\@.
Furthermore, we construct a POVM whose measurement on $\rho$ also allows to determine $\mcn(\rho)$.
We choose coefficients $c_{i} \in \R$ and $c > 0$ such that the following operators become positive semidefinite:
\begin{align}
  G_{i} + c_{i} \idm &\ge 0
  &
    \idm - c \sum_{i=1}^{2d_{X}^{2}} (G_{i} + c_{i} \idm) &\ge 0
\end{align}
We define $2d_{X}^{2} + 1$ POVM elements by
\begin{align}
  E_{0} &= \idm - \sum_{i=1}^{2d_{X}^{2}} E_{i}
  &
    E_{i} &= c (G_{i} + c_{i} \idm)
\end{align}
Clearly, the expectation values of the $G_{i}$ are related to the POVM probabilities by
\begin{align}
  \tr(G_{i} \rho) &= \frac1c \parena{ \tr(E_{i} \rho) - c_{i} }.
\end{align}
The coefficients $s_{i}$ of $\mcn(\rho)$ can be obtained from these expectation values using \cref{eq:rec-intro-mcn-rho-coeff-imag-real}.
As a consequence, the POVM probabilities of the given POVM allow us to determine $\mcn(\rho)$ even if $\mcn$ is not CPTP\@.

\printbibliography

\end{document}
